\title{Large deviations for a mean field model of systemic risk}
\author{
Josselin Garnier\thanks{Laboratoire de Probabilit\'{e}s et Mod\`{e}les 
Al\'{e}atoires \& Laboratoire Jacques-Louis Lions, Universit\'{e} Paris VII 
(\tt{garnier@math.jussieu.fr})} 
\and George Papanicolaou\thanks{Mathematics Department, Stanford University 
(\tt{papanicolaou@stanford.edu})}
\and Tzu-Wei Yang\thanks{Institute for Computational and Mathematical Engineering 
(ICME), Stanford University (\tt{twyang@stanford.edu})}}
\begin{document}

\maketitle

\begin{abstract}
We consider a system of diffusion processes that interact through their empirical mean and have a 
stabilizing force acting on each of them, corresponding to a bistable potential. There are three parameters 
that characterize the system: the strength of the intrinsic stabilization, the strength of the external 
random perturbations, and the degree of cooperation or interaction between them. The latter is the rate of 
mean reversion of each component to the empirical mean of the system. We interpret this model in the context 
of systemic risk and analyze in detail the effect of cooperation between the components, that is, the rate 
of mean reversion. We show that in a certain regime of parameters increasing cooperation tends to increase 
the stability of the individual agents but it also increases the overall or systemic risk. We use the theory 
of large deviations of diffusions interacting through their mean field.
\end{abstract}

\begin{keywords}
mean field, large deviations, systemic risk, dynamic phase transitions.
\end{keywords}

\begin{AMS}
60F10, 60K35, 91B30, 82C26
\end{AMS}

\pagestyle{myheadings}

\thispagestyle{plain}

\section{Introduction}

Systemic risk is the risk that in an interconnected system of agents that can fail individually, a large 
number of them fails simultaneously or nearly so, leading to the overall failure of the system. It is a 
property of the interconnected system as a whole, and not only of the individual components, in the sense 
that assessment of the risk of individual failure alone cannot provide an assessment of the systemic risk. 
The interconnectivity of the agents, its form and evolution, play an essential role in systemic risk 
assessment \cite{Bisias2012}.

In this paper we consider a simple model of interacting agents for which systemic risk can be assessed 
analytically in some interesting cases. Each agent can be in one of two states, a normal and a failed one, 
and it can undergo transitions between them. We assume that the dynamic evolution of each agent has the 
following features. First, there is an intrinsic stabilization mechanism that tends to keep the agents near 
the normal state. Second, there are external destabilizing forces that tend to push away from the normal 
state and are modeled by stochastic processes. Third, there is cooperation among the agents that acts as 
individual stabilizer by diversification. This means that in such a system we expect that there is a 
decrease in the risk of destabilization or "failure" for each agent because of the cooperation or 
diversification. What is less obvious is the effect of cooperation on the overall or system's risk, which 
can be defined in a precise way for the model considered here. We show in this paper that for the models 
under consideration and in a certain regime of parameters, the systemic risk increases with increasing 
cooperation. The aim of this paper is to analyze this tradeoff between individual risk and systemic risk for 
a class of interacting systems subject to failure.

Perhaps a simple mathematical model of interacting agents having the features we want is a system of 
stochastic differential equations with mean-field interaction. Let $x_j(t)$ be the state of risk of agent or 
component $j$, taking real values. For $j=1,\ldots,N$, the $x_j(t)$'s are modeled as continuous-time 
stochastic processes satisfying the system of It\^{o} stochastic differential equations:
\begin{equation}
	\label{eq:SDE of single component}
	dx_j(t) = -hU(x_j(t))dt + \theta(\bar{x}(t)-x_j(t))dt + \sigma dw_j(t),
\end{equation}
with given initial conditions. Here $-h U(y)=-hV'(y)$ is the restoring force, $V$ is a potential  which we 
assume has two stable states, and $\{w_j(t)\}_{j=1}^N$ are independent, standard Brownian motions. The 
parameter $h$ controls the level of intrinsic stabilization and $\sigma$ is the strength of the destabilizing 
random forces. The interaction or cooperation is the mean reversion term with rate of mean reversion  
$\theta$ and with $\bar{x}(t):=\frac{1}{N}\sum_{i=1}^N x_i(t)$ denoting the empirical mean of the processes, 
that is, the empirical mean of the individual risks. For $\theta>0$ the individual risk processes tend to 
mean-revert to their empirical mean, which is a simple but non-trivial form of cooperation. We take the 
empirical mean $\bar{x}(t)$ to be a measure of the systemic risk. The bi-stable-state structure of $V(y)$ 
determines the normal and failed states of the agents. We will assume in this paper that  $U(y)=y^3-y$, so that 
$V(y)= \frac{1}{4} y^4 - \frac{1}{2}y^2 +c$ and we take $c=0$ since it is inessential. 
The two stable states are then $\pm 1$ and we let $-1$ be the normal state and 
$+1$ to be the failed state. The potential $V(y)$ ensures that each risk variable $x_j(t)$ stays around $-1$ 
(normal) or $+1$ (failed). The evolution of the system is characterized by the initial conditions, the three 
parameters ($h$, $\theta$, $\sigma$) and by the system size $N$.



We have chosen a mean-field interaction because it is a simple form of cooperative behavior. More elaborate 
models are considered in Section \ref{sec:diversity}, where some heterogeneity is introduced between the 
components of the system. For mean-field models a natural measure of systemic risk is the transition 
probability of the empirical mean $\bar{x}(t)$ from the normal state to the failed state. More precisely, 
the mathematical problem we address here is this: For $N$ large we calculate approximately such transition 
probabilities and analyze how they depend on $h,\sigma$ and $\theta$, the three parameters of the system. We 
are interested in a regime of these parameters for which there are two collective, that is, large $N$, 
equilibria centered around the normal and failed states. These two equilibria can be identified through the 
mean-field limit of the system, that is, the weak limit in probability of the empirical density of the 
agents risk $x_j$. Mean field models with multiple stable points, not only bistable ones, could be 
considered but their analysis is more involved while the main result about systemic risk, and dependence on 
the parameters  ($h$, $\theta$, $\sigma$) and by the system size $N$, is clearly seen in the bistable model
that we consider here.

The mathematical analysis of bistable mean field models like (\ref{eq:SDE of single component}) was 
initiated by Dawson \cite{Dawson1983, Gartner1988}, including the mean field limit, the existence of multiple equilibria, 
and a fluctuation theory. Non-equilibrium statistical mechanics and phase transitions have been studied 
extensively in the sciences \cite{Haken1983}. The large deviation theory that we use here was developed by 
Dawson and G\"artner \cite{Dawson1987,Dawson1989}. In particular, they introduced and analyzed the rate function for 
large deviations associated with (\ref{eq:SDE of single component}) when $N$ is large and with
more general potentials \cite{Dawson1989}. Their theory may be 
considered as an infinite dimensional extension of the Freidlin-Wentzell theory of large deviations for 
stochastic differential equations with small noise \cite{Freidlin1998, Dembo2010}. The main result in this 
paper is the analysis of this rate function for small $h$. That is, for a shallow two-well potential, where 
transitions from one well (quasi-equilibrium) to the other are exponentially small in $N$, the "constant" in 
the exponent is small when $h$ is small. Other mean field models have been studied in  \cite{Tanaka1984, 
Gartner1988, Meleard1996, BenArous1999, DelMoral2005, DelMoral2011, Fouque2012}, and large deviations 
results for various models can be found in \cite{Dawson1994, Arous1995, DelMoral1998, Dawson2005, 
Herrmann2008, Budhiraja2008, Budhiraja2012}. In \cite{Budhiraja2012} a general large deviations theory is 
developed for a model with both drift and volatility interactions, as well as with degenerate noise, using 
weak convergence and optimal control methods.

The main contribution of the paper as far as systemic risk theory is concerned is the demonstration that, 
within the range of the bistable mean field model (\ref{eq:SDE of single component}), while cooperation 
between agents decreases the individual risk of each agent, the systemic or overall risk is increased. This 
is discussed in detail in Section \ref{sec:comparison}, in terms of the three parameters 
$(h,\theta,\sigma)$, with $h$ small. The fact that reducing individual risk by cooperation or 
diversification can lead to increased systemic risk has been anticipated in macroeconomics and elsewhere and 
it has been extensively discussed, modeled, and analyzed in \cite{Nier2007, Battiston2009, Haldane2009, 
Gai2010, May2010, Stiglitz2010, Beale2011, Battiston2011, Haldane2011, Ibragimov2011}. However, the dynamic 
phase transitions formulation and the large deviations theory exploited in this paper have not been used in 
the economics literature, to our knowledge. The use of coupled stochastic equations for modeling evolution 
of individual risk and the effects of interactions among agents is also considered in \cite{Battiston2009, 
Hull2001} where there is some discussion regarding the economic interpretation of the variables 
$\{x_j(t)\}$. They could, for example, represent some form of equity ratio in a very simple model in 
insurance or banking. 

The paper is organized as follows. In Section \ref{sec:mean field}, we briefly review the classical 
mean-field limit in \cite{Dawson1983}, and we discuss the intrinsic stability of equilibria 
\cite{Dawson1983} when $h$ is small. Section \ref{sec:diversity} generalizes 
(\ref{eq:SDE of single component}) by replacing the rate of mean reversion $\theta$ by an agent-dependent 
$\theta_j$. The mean-field limit and the explicit conditions are also studied. In Section 
\ref{sec:simulation}, we carry out numerical simulations of both the homogeneous and the heterogeneous model 
in various parameter ranges. 
Section \ref{sec:large deviations} uses the large deviation principle in \cite{Dawson1987} to formulate the 
dynamic phase transition of interest here, that is, the system transition from the normal state to the 
failed state. In Section \ref{sec:small h}, we specialize the large deviations theory when $h$ is small so  
as to obtain a result from which the systemic risk as a function the basic parameters $(h,\theta,\sigma)$ 
can be assessed and interpreted. In Section \ref{sec:reduced LDP} we introduce a formal expansion of the 
rate function for small $h$ and obtain a reduced variational principle for the systemic risk that appears to 
come from a large deviations principle for an one-dimensional dynamical system. It gives, of course, the 
same results about systemic risk as described in Section \ref{sec:small h}. In Section 
\ref{sec:diversity2} we discuss the case where there is diversity in mean reversion and it 
is shown that under some natural conditions the heterogeneous model is systemically more unstable than the  
homogeneous one. The technical details of the proofs are in the appendices.
\section{The Mean-Field Limit}
\label{sec:mean field}

We briefly review the mean field limit in \cite{Dawson1983,Gartner1988} and carry out a small $h$ analysis 
of results since they will be used in calculating large deviation probabilities. We want to analyze the 
systemic behavior  of the interacting diffusion processes (\ref{eq:SDE of single component}), through their 
empirical mean $\bar{x}(t)$, but this is not possible in a direct way since 
(\ref{eq:SDE of single component}) is nonlinear. We consider instead the empirical density of $x_j(t)$, 
which is a measure valued process that has a limit as $N\rightarrow\infty$. Let $M_1(\mathbb{R})$ be the 
space of probability measures endowed with the weak (Prohorov) topology and let $C([0,T],M_1(\mathbb{R}))$ 
be the space of continuous $M_1(\mathbb{R})$-valued processes on $[0,T]$ endowed with the corresponding weak 
topology. Define the empirical probability measure process 
$X_N(t,dy):=\frac{1}{N}\sum_{j=1}^N\delta_{x_j(t)}(dy)$ and note that $X_N\in C([0,T],M_1(\mathbb{R}))$. The 
mean field limit theorem for $X_N$, proved in \cite{Dawson1983,Gartner1988}, is as follows:
\begin{theorem}
	\label{thm: Dawson, mean-field limit}
	(Dawson, 1983) Assume that the force is $U(y)=y^3-y$ and that $X_N(0)$ converges weakly to a probability
	measure $\nu_0$. Then the measure valued process $X_N$ converges weakly in law as $N\rightarrow\infty$ to a 
	deterministic process with density $u(t,y)dy \in C([0,T],M_1(\mathbb{R}))$, which is the unique weak solution of the Fokker-Planck 
	equation:
	\begin{equation}
		\label{eq: full Fokker-Planck eqn}
		\frac{\partial}{\partial t}u
		= h\frac{\partial}{\partial y}[U(y)u]
		- \theta\frac{\partial}{\partial y}\left\{\left[\int yu(t,y)dy-y\right]u\right\} 
		+ \frac{1}{2}\sigma^2\frac{\partial^2}{\partial y^2}u,
	\end{equation}
	with initial condition $\nu_0$.
\end{theorem}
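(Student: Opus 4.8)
The plan is to establish propagation of chaos for the interacting system \eqref{eq:SDE of single component}. The standard route has three parts: first, identify the candidate limiting object as the solution of the nonlinear McKean--Vlasov SDE whose one-dimensional time marginals have densities solving the Fokker--Planck equation \eqref{eq: full Fokker-Planck eqn}; second, prove existence and uniqueness for this nonlinear equation; third, show the empirical measure $X_N$ converges to it.

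\textbf{Step 1: The nonlinear martingale problem.} I would introduce the McKean--Vlasov equation
\begin{equation}
	dX(t) = -hU(X(t))\,dt + \theta\bigl(m(t)-X(t)\bigr)\,dt + \sigma\,dw(t), \qquad m(t) = \mathbb{E}[X(t)],
\end{equation}
with $\mathcal{L}(X(0))=\nu_0$, and note that if a solution exists then the flow of marginals $u(t,\cdot)$ solves \eqref{eq: full Fokker-Planck eqn} in the weak sense by It\^o's formula applied to test functions $\varphi\in C_c^\infty(\mathbb{R})$. Equivalently, one phrases this as a nonlinear martingale problem on $C([0,T],M_1(\mathbb{R}))$: a law $P$ is a solution if under $P$ the canonical process has marginal flow $(\mu_t)$ and $\varphi(X_t) - \varphi(X_0) - \int_0^t \mathcal{L}_{\mu_s}\varphi(X_s)\,ds$ is a martingale, where $\mathcal{L}_\mu\varphi(y) = \tfrac12\sigma^2\varphi''(y) + [-hU(y) + \theta(\langle \mathrm{id},\mu\rangle - y)]\varphi'(y)$.

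\textbf{Step 2: Existence and uniqueness for the nonlinear equation.} The drift is locally Lipschitz but only one-sided Lipschitz globally because of the $-hy^3$ term; however $-hU(y)y = -h y^4 + h y^2 \le C$, so the cubic term is dissipative and provides a priori moment bounds (e.g. $\sup_{t\le T}\mathbb{E}[X(t)^2]<\infty$, and higher moments similarly). For uniqueness I would use a fixed-point argument: given a continuous candidate mean function $m(\cdot)$, the SDE becomes a genuine (non-McKean) SDE with a one-sided Lipschitz, locally Lipschitz drift, which has a unique strong solution; then the map $m(\cdot)\mapsto t\mapsto\mathbb{E}[X^{m}(t)]$ is shown to be a contraction on $C([0,T_0])$ for small $T_0$ using Gr\"onwall together with the linearity of the coupling in $m$ (the term $\theta m(t)$ appears additively), and one iterates over $[0,T]$. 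This pins down the unique weak solution $u$ of \eqref{eq: full Fokker-Planck eqn}.

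\textbf{Step 3: Convergence of $X_N$.} I would couple the particle system to $N$ i.i.d.\ copies $(Y_j)$ of the McKean--Vlasov solution driven by the same Brownian motions $w_j$ and same initial data (after first reducing, by weak convergence of $X_N(0)$ to $\nu_0$ and a standard approximation, to the i.i.d.\ initial case). Writing $\Delta_j(t) = x_j(t)-Y_j(t)$, the difference equation is $d\Delta_j = [-h(U(x_j)-U(Y_j)) + \theta(\bar x - \bar Y) - \theta\Delta_j + \theta(\bar Y - m)]\,dt$; the monotonicity $(U(a)-U(b))(a-b)\ge -(a-b)^2$ controls the cubic term, $\mathbb{E}[(\bar Y(t)-m(t))^2]=O(1/N)$ by the law of large numbers applied to the i.i.d.\ $Y_j$, and a Gr\"onwall estimate on $\tfrac1N\sum_j\mathbb{E}[\sup_{s\le t}\Delta_j(s)^2]$ yields an $O(1/N)$ bound. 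From $\sup_j\mathbb{E}[\sup_{t\le T}|x_j(t)-Y_j(t)|^2]\to 0$ it follows that $X_N$ and the empirical measure of the $Y_j$ have the same weak limit; the latter converges in law (in fact in probability) to the constant $\mathcal{L}(Y(\cdot))$ marginal flow, i.e.\ to $u(t,y)\,dy$, by the ordinary law of large numbers for i.i.d.\ path-space random variables (Glivenko--Cantelli / Varadarajan), giving weak convergence in $C([0,T],M_1(\mathbb{R}))$.

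The main obstacle is the superlinear (cubic) drift, which prevents a direct application of the classical Lipschitz propagation-of-chaos theorems of Sznitman or McKean. The remedy throughout is that $U(y)=y^3-y$ is increasing outside a bounded set, so the bad term is dissipative: it enters all the Gr\"onwall inequalities with a favorable sign and also furnishes the uniform-in-$N$ moment bounds needed to make the weak-convergence arguments (tightness via Aldous' criterion, if one prefers the martingale-problem route instead of the coupling route) go through. This is exactly the structure exploited in \cite{Dawson1983, Gartner1988}, so I would cite those for the technical details and present the coupling estimate as the conceptual core.
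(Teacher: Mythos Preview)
The paper does not actually prove this theorem; it is stated as a known result attributed to \cite{Dawson1983,Gartner1988}. The closest thing to a proof in the paper is Appendix~B.1, which proves the heterogeneous generalization (Theorem~\ref{thm:mean-field limit, diversity case}). That proof takes a different route from yours: it proceeds by (i) tightness of $\{X_N\}$ via a uniform moment bound $\sup_N\sup_t\mathbf{E}\langle X_N(t),|y|\rangle<\infty$, (ii) identification of any subsequential limit as a solution of the nonlinear martingale problem by expanding the generator of the product empirical measure and showing that the second-order cross terms, which carry the $1/N$ factor, vanish in the limit, and (iii) uniqueness of the martingale problem, cited from \cite{Gartner1988}.

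Your coupling/propagation-of-chaos argument is correct in outline and is the other standard approach. It buys you a quantitative rate ($O(1/N)$ in mean-square, hence in Wasserstein distance), which the compactness-plus-martingale-problem route does not provide. On the other hand, the paper's route is softer: it does not require pathwise coupling or i.i.d.\ initial data, and handles the hypothesis ``$X_N(0)\Rightarrow\nu_0$'' directly, whereas in your Step~3 the reduction to i.i.d.\ initial conditions that you defer to ``a standard approximation'' is the one place where some genuine work remains. Both approaches rely on the same structural fact you identify, namely the one-sided Lipschitz/dissipativity of $-hU$, to control moments and Gr\"onwall constants.
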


By Theorem \ref{thm: Dawson, mean-field limit}, we can analyze $u$ and view $X_N$ as a perturbation of $u$ 
for $N$ large. We may consider $\bar{x}(t)$ in the same way because $\bar{x}(t)=\int y X_N(t,dy)$. However, 
the limit problem is infinitely dimensional, as is expected.

Explicit solutions of (\ref{eq: full Fokker-Planck eqn}) are not available in general, but we can find 
equilibrium solutions. Assuming that $\xi = \lim_{t\to\infty}\int yu(t,y)dy$, then an equilibrium solution 
$u_\xi^e$ satisfies
\[
	h\frac{d}{dy}[(y^3-y)u_\xi^e] - \theta\frac{d}{dy}[(\xi-y)u_\xi^e] 
	+ \frac{1}{2}\sigma^2\frac{d^2}{dy^2}u_\xi^e = 0,
\]
and has the form
\begin{equation}
	\label{eq:equilibrium of Fokker-Planck eq}
	u^e_\xi(y) = \frac{1}{Z_\xi \sqrt{2\pi \frac{\sigma^2}{2\theta}}}
	\exp \left\{ -\frac{(y-\xi)^2}{2\frac{\sigma^2}{2\theta}} 
	- h\frac{2}{\sigma^2}V(y) \right\},
\end{equation}
with $Z_\xi$ the normalization constant:
\[
	Z_\xi = \int \frac{1}{\sqrt{2\pi \frac{\sigma^2}{2\theta}}}
	\exp \left\{ -\frac{(y-\xi)^2}{2\frac{\sigma^2}{2\theta}} 
	- h\frac{2}{\sigma^2}V(y) \right\} dy.
\]
Now $\xi$ must satisfy the compatibility or consistency condition:
\begin{equation}
	\label{eq:xi equals m(xi)}
	\xi = m(\xi):= \int y u^e_\xi(y)dy.
\end{equation}
Finding equilibrium solutions has thus been reduced to finding solutions of this equation.

For $U(y)=y^3-y$, $\xi=0$ is a solution for (\ref{eq:xi equals m(xi)}). With the same 
$U(y)$, it can be shown (see also \cite[Theorem 3.3.1 and 3.3.2]{Dawson1983}) that there are two additional 
non-zero solutions $\pm\xi_b$ if and only if $\frac{d}{d\xi}m(0)>1$, and for given $h$ 
and $\theta$, there exists a critical $\sigma_c(h,\theta) >0$ such that $\frac{d}{d\xi}m(0)>1$ if 
and only if $\sigma<\sigma_c(h,\theta)$.

An explanation for this bifurcation at equilibrium is that when $\sigma\geq\sigma_c$, randomness dominates the 
interaction among the components, i.e., $\theta(\bar{x}(t)-x_j(t))dt$ is negligible. In 
this case, the system behaves like $N$ independent diffusions and hence, by the
symmetry of $V(y)$, at any given time roughly half of them stay around $-1$ and half around $+1$ so the 
average is $0$. When, however, $\sigma<\sigma_c$, then the interactive force is 
significantly larger (now $\sigma dw_j(t)$ is less important). Therefore all agents stay 
around the same place (either $-\xi_b$ or $+\xi_b$) and the zero average equilibrium is unstable. 
Since we want to model systemic risk phenomena, 
we assume that 
$\sigma<\sigma_c$ throughout this paper, and we regard $-\xi_{b}$ as the normal state 
of the system and $+\xi_b$ as the failed state. The calculation of
transitions probabilities  between these two states is our objective.

For small $h$ we can approximate the solution 
of (\ref{eq:xi equals m(xi)}) to order $O(h)$ as follows.
\begin{proposition}
	\label{prop:explicit condition and solution for xi equal to m(xi) for small h}
	For small $h$, the critical value $\sigma_c$ can be expanded as 
	\begin{equation}
		\label{eq:sigma_c of the homogeneous case}
		\sigma_c = \sqrt{\frac{2 \theta}{3}} + O(h).
	\end{equation}
	In addition, the non-zero solutions $\pm \xi_{b}$ are 
	\begin{equation}
	\label{eq:explicit solution for xi equal to m(xi)}
	\pm \xi_b = \pm \sqrt{1 - 3\frac{\sigma^2}{2\theta}} 
	\left( 1 + h\frac{6}{\sigma^2}\left(\frac{\sigma^2}{2\theta}\right)^2
	\frac{1 - 2(\sigma^2/2\theta)}{1 - 3(\sigma^2/2\theta)}\right) + O(h^2).
	\end{equation}
\end{proposition}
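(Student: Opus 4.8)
The plan is to work directly with the self-consistency map $m(\xi)=\int y\,u^e_\xi(y)\,dy$ and to expand it in powers of $h$ by a Laplace/Gaussian expansion around the Gaussian factor in $u^e_\xi$. Set $\beta:=\sigma^2/(2\theta)$ (the variance of that Gaussian factor) and $\lambda:=2h/\sigma^2$ (the small parameter), and let $Z$ denote a standard normal variable. The substitution $y=\xi+\sqrt{\beta}\,z$ gives
\[
	m(\xi)-\xi \;=\; \sqrt{\beta}\,\frac{\mathbb{E}\!\left[Z\,e^{-\lambda V(\xi+\sqrt{\beta}Z)}\right]}{\mathbb{E}\!\left[e^{-\lambda V(\xi+\sqrt{\beta}Z)}\right]}.
\]
(Equivalently, a direct differentiation gives $m(\xi)-\xi=\beta\,\frac{d}{d\xi}\log Z_\xi$, so that $\xi=m(\xi)$ says exactly that $\xi$ is a critical point of the normalization $Z_\xi$, a viewpoint that makes the statement about $\sigma_c$ transparent.) I would then write $e^{-\lambda V}=1-\lambda V+\tfrac{\lambda^2}{2}V^2+R$ with Lagrange remainder $|R|\le\tfrac{\lambda^3}{6}|V|^3e^{\lambda/4}$ (using $V\ge-\tfrac14$), which is $O(\lambda^3)$ uniformly for $\xi$ in compact sets because the Gaussian weight makes $\mathbb{E}\bigl[|V(\xi+\sqrt\beta Z)|^3\bigr]$ locally bounded, and apply Stein's identity $\mathbb{E}[Zf(Z)]=\mathbb{E}[f'(Z)]$ to the numerator. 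This produces
\[
	m(\xi)-\xi \;=\; -\lambda\beta\,\mathbb{E}\!\left[V'(\xi+\sqrt{\beta}Z)\right]\;+\;\lambda^2\beta\,\mathrm{Cov}\!\left(V(\xi+\sqrt{\beta}Z),\,V'(\xi+\sqrt{\beta}Z)\right)\;+\;O(\lambda^3).
\]

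The leading term is elementary: with $V'(y)=y^3-y$ and Gaussian moments one gets $P(\xi):=\mathbb{E}[V'(\xi+\sqrt{\beta}Z)]=\xi^3+(3\beta-1)\xi=\xi\bigl(\xi^2-(1-3\beta)\bigr)$, whose nonzero roots $\xi_0=\pm\sqrt{1-3\beta}$ exist exactly when $\beta<1/3$, with $P'(\xi_0)=2(1-3\beta)$. Differentiating the expansion at $\xi=0$ gives $\frac{d}{d\xi}m(0)=1-\lambda\beta(3\beta-1)+O(\lambda^2)$, so $\frac{d}{d\xi}m(0)>1$ is, up to $O(\lambda)$, the condition $3\beta<1$; solving $\frac{d}{d\xi}m(0)=1$ for $\sigma$ by the implicit function theorem yields $\beta_c=\tfrac13+O(h)$, i.e. $\sigma_c^2=\tfrac{2\theta}{3}+O(h)$ and hence (\ref{eq:sigma_c of the homogeneous case}). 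For the bifurcated equilibria I would set $H(\xi,\lambda):=P(\xi)-\lambda\,\mathrm{Cov}\bigl(V(\xi+\sqrt\beta Z),V'(\xi+\sqrt\beta Z)\bigr)+O(\lambda^2)$, so that $m(\xi)-\xi=-\lambda\beta\,H(\xi,\lambda)$, and apply the implicit function theorem near $(\xi_0,0)$, where $\partial_\xi H(\xi_0,0)=P'(\xi_0)=2(1-3\beta)\neq0$ (since $\sigma<\sigma_c$ keeps $1-3\beta$ bounded away from $0$). This gives $\xi_b=\xi_0+\lambda\tilde\xi_1+O(\lambda^2)$ with $\tilde\xi_1=\mathrm{Cov}\bigl(V(\xi_0+\sqrt\beta Z),V'(\xi_0+\sqrt\beta Z)\bigr)\big/\bigl(2(1-3\beta)\bigr)$.

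The one place where genuine bookkeeping is unavoidable, and the step I expect to be the main obstacle, is evaluating that covariance at $Y:=\xi_0+\sqrt{\beta}Z$. The key simplification is that $\mathbb{E}[V'(Y)]=P(\xi_0)=0$ by the very definition of $\xi_0$, so the covariance collapses to $\mathbb{E}[V(Y)V'(Y)]=\tfrac14\mathbb{E}[Y^7]-\tfrac34\mathbb{E}[Y^5]+\tfrac12\mathbb{E}[Y^3]$. Expanding each Gaussian moment of $Y$, factoring out one power of $\xi_0$, and then imposing the constraint $\xi_0^2=1-3\beta$ (so $\xi_0^4=(1-3\beta)^2$, $\xi_0^6=(1-3\beta)^3$) makes the $\beta$-independent and the $O(\beta)$ contributions cancel, leaving $\mathbb{E}[V(Y)V'(Y)]=6\,\xi_0\,\beta^2(1-2\beta)$. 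Therefore $\tilde\xi_1=\dfrac{3\xi_0\beta^2(1-2\beta)}{1-3\beta}$, and restoring $\lambda=2h/\sigma^2$ and $\beta=\sigma^2/(2\theta)$ gives $\xi_b=\xi_0\Bigl(1+\tfrac{6h}{\sigma^2}\,\beta^2\tfrac{1-2\beta}{1-3\beta}\Bigr)+O(h^2)$, which is (\ref{eq:explicit solution for xi equal to m(xi)}). What remains is only to confirm that the $O(\lambda^3)$ remainder survives as $O(h^2)$ after division by $-\lambda\beta$, that the two branches produced by the implicit function theorem are the two nonzero solutions guaranteed by the cited result of Dawson, and that all the expansions are uniform as $h\to0$ with $(\theta,\sigma)$ fixed and $\sigma<\sigma_c$.
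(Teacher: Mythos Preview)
Your proposal is correct and follows essentially the same route as the paper's own proof: expand the ratio defining $m(\xi)$ in powers of $h$ around the Gaussian factor, use the Gaussian integration-by-parts/Stein identity to reduce $m(\xi)-\xi$ to $-\lambda\beta\,\mathbb{E}V'(Y)+\lambda^2\beta\,\mathrm{Cov}(V(Y),V'(Y))+O(\lambda^3)$, read off $\xi_0=\pm\sqrt{1-3\beta}$ from the leading term, and then compute $\mathbb{E}[V(Y)V'(Y)]=\tfrac14\mathbb{E}Y^7-\tfrac34\mathbb{E}Y^5+\tfrac12\mathbb{E}Y^3=6\xi_0\beta^2(1-2\beta)$ using $\xi_0^2=1-3\beta$. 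The only differences are cosmetic: you work with a standard normal $Z$ and parameters $(\beta,\lambda)$, name Stein's identity and the implicit function theorem explicitly, and are more careful about the Lagrange remainder, whereas the paper writes $Y\sim\mathcal N(\xi,\sigma^2/2\theta)$ directly and simply posits $\xi_b=\xi_0+h\xi_1+O(h^2)$ before matching orders.
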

\begin{proof}
	See Appendix 
	\ref{pf:explicit condition and solution for xi equal to m(xi) for small h}.
\end{proof}

From Proposition 
\ref{prop:explicit condition and solution for xi equal to m(xi) for small h}, we see the 
relation between the existence of the bi-stable states and the ratio 
$\sigma^2/2\theta$: For a given $\theta$, and for small $h$, 
(\ref{eq:xi equals m(xi)}) has non-zero solutions if and only if 
$ 3\sigma^2/2\theta < 1$. Moreover, these non-zero solutions $\pm \xi_b$ are 
generally not $\pm 1$ since the magnitude $|\xi_b|$ is less than $1$. 
Note that the coefficient of order $h$ in the expansion 
(\ref{eq:explicit solution for xi equal to m(xi)}) depends significantly on $\theta$ and 
$\sigma$. Thus, when $3\sigma^2/2\theta$ tends to $1$, $\xi_b$ in 
(\ref{eq:explicit solution for xi equal to m(xi)}) will not go to $+\infty$ while, in fact,
$\xi_b$ goes to $0$. From the $O(1)$ term in 
(\ref{eq:explicit solution for xi equal to m(xi)}), we also see that $\xi_b$ is roughly 
decreasing as $\sigma^2/2\theta$ is increasing.
\section{Diversity of Sensitivities}
\label{sec:diversity}

We can generalize (\ref{eq:SDE of single component}) by allowing for agent
dependent coefficients. We consider a particular case in which 
each agent can have a different rate of mean reversion to the empirical mean, that is, for 
$j=1,\ldots,N$, 
\begin{equation}
	\label{eq:SDE of single component, heterogeneous sensitivity}
	dx_j = -h\frac{\partial}{\partial x_j}V(x_j)dt 
	+ \sigma dw_j + \theta_j(\bar{x}-x_j)dt,
\end{equation}
and as before $V(y)=\frac{1}{4}y^4 - \frac{1}{2} y^2$.
We consider the case where $\theta_1,\ldots,\theta_N$ take $K$ distinct positive numbers, 
$\Theta_1,\ldots,\Theta_K$. We define $\mathcal{I}_l=\{j:\theta_j=\Theta_l\}$, 
$\rho_l=|\mathcal{I}_l|/N$ and 
$X_N^l=\frac{1}{\rho_l N}\sum_{j\in\mathcal{I}_l}\delta_{x_j}$. Assuming that 
$\lim_{N\rightarrow\infty}\rho_l$ exists and is positive for all $l$, the limit of 
$(X_N^1,\ldots,X_N^K)$ as $N\rightarrow\infty$ are the weak solutions $(u_1,\ldots,u_K)$ of 
the set of $K$ coupled Fokker-Planck equations.
\begin{theorem}
	\label{thm:mean-field limit, diversity case}
	Assume that $U(y)=y^3-y$ and that $(X_N^1(0), \ldots, X_N^K(0))$ converge weakly in probability 
	to the probability measures $(\nu^1,\ldots, \nu^K)$. Then the measure valued vector process $(X_N^1, \ldots, X_N^K)$ converges weakly as 
	$N\rightarrow\infty$ to the weak solution $(u_1, \ldots, u_K)$ of the system of the Fokker-Planck equations:
	\begin{align}
		\label{eq:system of Fokker-Planck equations}
		\frac{\partial}{\partial t}u_1 
		&= \frac{1}{2}\sigma^2\frac{\partial^2}{\partial y^2}u_1 
		- \Theta_1\frac{\partial}{\partial y}\left\{\left[
		\int y\sum_{l=1}^K\rho_lu_l(t,y)dy-y\right]u_1\right\} 
		+ h\frac{\partial}{\partial y}[U(y)u_1]\\
		&\vdots \notag\\
		\frac{\partial}{\partial t}u_{K} 
		&= \frac{1}{2}\sigma^2\frac{\partial^2}{\partial y^2}u_K 
		- \Theta_K\frac{\partial}{\partial y}\left\{\left[
		\int y\sum_{l=1}^K\rho_lu_l(t,y)dy-y\right]u_K\right\} 
		+ h\frac{\partial}{\partial y}[U(y)u_K], \notag
	\end{align}
	with initial condition $(\nu^1,\ldots, \nu^K)$.
\end{theorem}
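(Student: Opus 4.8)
The plan is to follow the scheme used by Dawson \cite{Dawson1983} and G\"artner \cite{Gartner1988} for the homogeneous model (\ref{eq:SDE of single component}); the only new feature is bookkeeping for the $K$ types, which are coupled solely through the scalar quantity $\sum_{l}\rho_l\int y\,u_l(t,y)\,dy$. The natural candidate for the limit is the system of $K$ nonlinear (McKean--Vlasov) diffusions
\[
  d\bar x^l = \Bigl(-hU(\bar x^l) + \Theta_l\bigl(\bar\mu(t) - \bar x^l\bigr)\Bigr)dt + \sigma\,d\bar w^l,\qquad \bar\mu(t):=\sum_{m=1}^K \rho_m\,\mathbb{E}\bigl[\bar x^m(t)\bigr],
\]
for $l=1,\dots,K$, with $\bar x^l(0)\sim\nu^l$ and independent standard Brownian motions $\bar w^l$; the intended conclusion is that $u_l(t,\cdot):=\mathrm{Law}(\bar x^l(t))$ is the weak solution of (\ref{eq:system of Fokker-Planck equations}), noting that $\int y\sum_l\rho_l u_l(t,y)\,dy=\bar\mu(t)$. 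First I would prove this nonlinear system is well posed: since the interaction enters only through the scalar function $\bar\mu(\cdot)$, one can set up a contraction for the map $\bar\mu(\cdot)\mapsto\bar\mu(\cdot)$ on $C([0,T])$ and close the loop with uniform moment bounds. The cubic drift $-hU$ is not globally Lipschitz, so in place of a Lipschitz estimate I would use the monotonicity inequality $-\bigl(U(y)-U(y')\bigr)(y-y')\le(y-y')^2$ together with the confinement bound $-yU(y)=-(y^4-y^2)\le\tfrac14$; the latter, via It\^o's formula applied to $|\bar x^l|^{2p}$, yields bounds on $\sup_{t\le T}\mathbb{E}|\bar x^l(t)|^{2p}$ uniform in the coupling, hence global existence and uniqueness.

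For the convergence of $(X_N^1,\dots,X_N^K)$ I would argue by tightness and identification of limit points, as in \cite{Dawson1983,Gartner1988}. The same confinement estimate applied to the finite particle system (\ref{eq:SDE of single component, heterogeneous sensitivity}) and averaged over agents gives moment bounds $\sup_N\tfrac1N\sum_{j=1}^N\mathbb{E}[\sup_{t\le T}|x_j(t)|^{2p}]<\infty$ uniform in $N$; together with an Aldous-type estimate for the increments of $\langle\phi,X_N^l(\cdot)\rangle$, $\phi\in C_b^2(\mathbb R)$, these yield tightness of the laws of $(X_N^1,\dots,X_N^K)$ in $C([0,T],M_1(\mathbb R))^K$. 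For any limit point $(u_1,\dots,u_K)$, applying It\^o's formula to $\langle\phi,X_N^l(t)\rangle$ and expressing the mean-field term through $\bar x_N(s):=\int y\sum_m\rho_m X_N^m(s,dy)$, one has
\[
  \langle\phi,X_N^l(t)\rangle = \langle\phi,X_N^l(0)\rangle + \int_0^t\Bigl\langle \tfrac12\sigma^2\phi''+\bigl(-hU+\Theta_l(\bar x_N-y)\bigr)\phi',\,X_N^l\Bigr\rangle\,ds + M_N^l(t),
\]
with $\mathbb{E}[M_N^l(t)^2]=O(1/(\rho_l N))$; passing to the limit along a convergent subsequence --- the uniform moments supplying the uniform integrability needed to handle the unbounded functions $y$ and $U(y)$ and to pass $\bar x_N\to\int y\sum_m\rho_m u_m\,dy$ --- shows that $(u_1,\dots,u_K)$ solves the weak form of (\ref{eq:system of Fokker-Planck equations}). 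Since that weak form is equivalent, through its McKean--Vlasov representation, to the nonlinear system of the previous paragraph, its solution is unique, so all limit points coincide and the whole sequence converges. Alternatively, when the initial states are i.i.d.\ within each class, a synchronous coupling of $x_j$ with an i.i.d.\ copy $\bar x_j$ of $\bar x^l$ followed by a Gr\"onwall estimate on $\tfrac1N\sum_{j}\mathbb{E}|x_j-\bar x_j|^2$, again using the monotonicity inequality, gives propagation of chaos directly with an $O(1/N)$ rate.

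The one genuine difficulty, exactly as in the homogeneous case, is the super-linear drift $U(y)=y^3-y$: both the well-posedness of the nonlinear system and the tightness/identification step rely on uniform control of high moments of the $x_j$ and of $\bar x^l$, which must be extracted from the confining structure of $V$ rather than from any global Lipschitz bound, and the unbounded functions $y$ and $U(y)$ must be accommodated when passing to the limit in the weak formulation. Beyond this, the proof is a routine extension of the $K=1$ arguments of \cite{Dawson1983,Gartner1988}, since the $K$ Fokker--Planck equations interact only through the single scalar mean $\int y\sum_l\rho_l u_l\,dy$.
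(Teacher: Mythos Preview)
Your proposal is correct and follows the same three-step scheme as the paper's proof in Appendix~\ref{pf:mean-field limit, diversity case}: well-posedness of the limiting McKean--Vlasov system (the paper invokes \cite[Theorem~2.11]{Gartner1988}), tightness via uniform moment bounds, and identification of the limit. The only noteworthy difference is in the identification step: the paper works with the martingale problem for the product measures $X_N^{1,\times n}\times X_N^{2,\times n}$ on $\mathbb{R}^{2n}$, computing the full generator and showing the second-order (cross-variation) terms are $O(1/N)$, which simultaneously forces the limit to be deterministic and to satisfy the PDE; you instead pass to the limit directly in the weak formulation $\langle\phi,X_N^l(t)\rangle$ for a single test function, with the vanishing of the martingale part $M_N^l$ playing the same role. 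Your route is a little more elementary and is the one more commonly written out nowadays; the paper's product-measure computation is Dawson's original device and makes the determinism of the limit more explicit. Your alternative coupling/propagation-of-chaos argument is not used in the paper but is also standard and would give a quantitative rate under i.i.d.\ initial data.
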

\begin{proof}
	See Appendix \ref{pf:mean-field limit, diversity case} for the outline of the proof
	following \cite{Gartner1988}.
\end{proof}

The equilibrium solutions $\{u_{l,\xi}^e\}_{l=1}^K$ have the form
\begin{align}
	\label{eq:equilibrium of Fokker-Planck eqn, heterogeneous case}
	u_{l,\xi}^e(y) &= \frac{1}{Z_{l,\xi}\sqrt{2\pi\frac{\sigma^2}{2\Theta_l}}}
	\exp\left\{ -\frac{(y-\xi)^2}{2\frac{\sigma^2}{2\Theta_l}} 
	- h\frac{2}{\sigma^2} V(y) \right\}\\
	Z_{l,\xi} &= \int \frac{1}{\sqrt{2\pi\frac{\sigma^2}{2\Theta_l}}}
	\exp\left\{ -\frac{(y-\xi)^2}{2\frac{\sigma^2}{2\Theta_l}} 
	- h\frac{2}{\sigma^2} V(y) \right\} dy, \notag
\end{align}
and $\xi$ must satisfy the compatibility condition
\begin{equation}
	\label{eq:xi equal to m(xi), diversity case}
	\xi = m(\xi) := \sum_{l=1}^K \rho_l \int yu_{l,\xi}^e (y)dy.
\end{equation}
For $U(y)=y^3-y$, $\xi=0$ is the trivial solution of 
(\ref{eq:xi equal to m(xi), diversity case}), and a simple extension of Theorem 
3.3.1 in \cite{Dawson1983}, shows that there are two sets of non-trivial solutions 
$\{u_{l,\xi_b}^e\}_{l=1}^K$ and $\{u_{l,-\xi_b}^e\}_{l=1}^K$ if and only if 
$\frac{d}{d\xi}m(0)>1$. The numerical simulations presented in the next section show 
that diversity in the rate of mean reversion can have significant impact on the stability of the 
mean-field model.

As in the homogeneous case, we can get an approximate condition for 
equilibrium bifurcation for small $h$.
\begin{proposition}
	\label{prop:explicit condition for xi equal to m(xi) for small h, diversity case}
	The compatibility condition (\ref{eq:xi equal to m(xi), diversity case}) has 
	non-zero solutions if and only if $\sigma < \sigma_c^{\text{div}}$.
	For small $h$,  $\sigma_c^{\text{div}}$ has the expansion 
	\[ 
		\sigma_c^{\text{div}} = \sqrt{ \sum_{l=1}^K \frac{\rho_l}{\Theta_l}
		/ \sum_{l=1}^K\frac{3\rho_l}{2\Theta_l^2} } + O(h).
	\]
\end{proposition}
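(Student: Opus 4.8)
The plan is to reduce the statement, exactly as in the homogeneous case treated in Proposition~\ref{prop:explicit condition and solution for xi equal to m(xi) for small h}, to an $O(h)$ expansion of $m'(0)$ viewed as a function of $\sigma$ and $h$. As recorded after \eqref{eq:xi equal to m(xi), diversity case} — the extension of Theorem~3.3.1 of \cite{Dawson1983} — one has $\xi=0$ always a solution and a pair of nonzero solutions exactly when $\frac{d}{d\xi}m(0)>1$; so it suffices to expand $m'(0)$ in $h$ and show that the set $\{\sigma:m'(0)>1\}$ is, for $h$ small, an interval $(0,\sigma_c^{\mathrm{div}})$ with $\sigma_c^{\mathrm{div}}$ as claimed. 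Write $m(\xi)=\sum_{l=1}^K\rho_l m_l(\xi)$ with $m_l(\xi)=\int y\,u^e_{l,\xi}(y)\,dy$ and set $s_l^2:=\sigma^2/(2\Theta_l)$.

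First I would differentiate $m_l$ under the integral sign. Since $u^e_{l,\xi}$ from \eqref{eq:equilibrium of Fokker-Planck eqn, heterogeneous case} is an exponential family in the location parameter $\xi$ — a Gaussian $\mathcal N(\xi,s_l^2)$ reweighted by the $\xi$-independent factor $e^{-\frac{2h}{\sigma^2}V(y)}$ — the standard covariance identity gives $m_l'(\xi)=\frac{1}{s_l^2}\,\mathrm{Var}_{u^e_{l,\xi}}(y)$, the differentiation being legitimate because the quartic growth of $V$ makes all the weights integrable and provides a dominating function uniformly for $\sigma$ near the value of interest and $h$ small. Evaluating at $\xi=0$, where $u^e_{l,0}$ is even so that the mean vanishes, reduces the problem to expanding $\mathrm{Var}_{u^e_{l,0}}(y)=\mathbb E_{u^e_{l,0}}[y^2]$ to first order in $h$. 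Writing $e^{-\frac{2h}{\sigma^2}V(y)}=1-\frac{2h}{\sigma^2}V(y)+O(h^2)$ and letting $\langle\cdot\rangle$ denote the $\mathcal N(0,s_l^2)$ average, this gives
\[
  \mathbb E_{u^e_{l,0}}[y^2]=\langle y^2\rangle-\frac{2h}{\sigma^2}\bigl(\langle y^2 V\rangle-\langle y^2\rangle\langle V\rangle\bigr)+O(h^2).
\]
With $V(y)=\tfrac14 y^4-\tfrac12 y^2$ and the Gaussian moments $\langle y^2\rangle=s_l^2$, $\langle y^4\rangle=3s_l^4$, $\langle y^6\rangle=15 s_l^6$, the bracket equals $3s_l^6-s_l^4$, so $m_l'(0)=1-\frac{2h}{\sigma^2}\bigl(3s_l^4-s_l^2\bigr)+O(h^2)$. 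Summing over $l$ and substituting $s_l^2=\sigma^2/(2\Theta_l)$,
\[
  m'(0)-1 = h\Bigl(\sum_{l=1}^K\frac{\rho_l}{\Theta_l}-\sigma^2\sum_{l=1}^K\frac{3\rho_l}{2\Theta_l^2}\Bigr)+O(h^2).
\]

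Finally I would set $\sigma_0^2:=\bigl(\sum_l\rho_l/\Theta_l\bigr)\big/\bigl(\sum_l 3\rho_l/(2\Theta_l^2)\bigr)$ and observe that the coefficient of $h$ above is affine and strictly decreasing in $\sigma^2$, with a simple (transversal) zero at $\sigma_0$. Hence for $h>0$ small the sign of $m'(0)-1$ near $\sigma_0$ is that of $\sigma_0^2-\sigma^2$, which already yields the interval structure $\{m'(0)>1\}=(0,\sigma_c^{\mathrm{div}})$ and identifies $\sigma_c^{\mathrm{div}}$ as the unique zero near $\sigma_0$ of $(m'(0)-1)/h$; the implicit function theorem, applicable precisely because of the transversality, then gives $\sigma_c^{\mathrm{div}}=\sigma_0+O(h)$, which is the asserted expansion (and collapses to $\sqrt{2\theta/3}+O(h)$ when $K=1$, consistent with \eqref{eq:sigma_c of the homogeneous case}). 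The one point needing care is exactly this degeneracy at $h=0$, where $m'(0)\equiv1$ for every $\sigma$: the threshold is invisible at leading order and must be extracted from the first-order term, so the expansion has to be pushed to $O(h)$ and the nondegeneracy of its zero verified; everything else is the routine Gaussian-moment bookkeeping and the dominated-convergence justification of differentiating and expanding under the integral, which parallel the homogeneous computation.
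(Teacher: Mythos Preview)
Your proposal is correct and follows essentially the same route as the paper's proof: both expand $m'(0)$ to first order in $h$ via the identity $m_l'(\xi)=\tfrac{1}{s_l^2}\mathrm{Var}_{u^e_{l,\xi}}(y)$ (which the paper derives by comparing two expressions for $\tfrac{d^2}{d\xi^2}Z_{l,\xi}$, whereas you invoke it directly as an exponential-family fact), then compute the Gaussian moments to obtain $m'(0)-1=h\bigl(\sum_l\rho_l/\Theta_l-\sigma^2\sum_l 3\rho_l/(2\Theta_l^2)\bigr)+O(h^2)$ and read off the threshold. Your final paragraph, making explicit the transversality and the implicit-function-theorem step needed because $m'(0)\equiv1$ at $h=0$, is in fact a bit more careful than the paper's presentation.
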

\begin{proof}
	See Appendix 
	\ref{pf:explicit condition for xi equal to m(xi) for small h, diversity case}.
\end{proof}

We note that diversity does affect the threshold condition and makes the analysis 
more difficult. The non-zero solutions $\pm \xi_b$ can be computed approximately when $h$ is small:
\begin{equation} 
	\label{eq:explicit solution for xi equal to m(xi), diversity case}
	\pm \xi_b = \pm \sqrt{ \sum_{l=1}^K \frac{\rho_l}{\Theta_l} 
	\left( 1 - 3\frac{\sigma^2}{2\Theta_l} \right) 
	/ \sum_{l=1}^K \frac{\rho_l}{\Theta_l} } + O(h).
\end{equation}
Higher order terms in the expansion of 
(\ref{eq:explicit solution for xi equal to m(xi), diversity case}) can also be obtained 
but we will omit them in this paper.
In the following Proposition we show that $\sigma_c^{\text{div}} \leq \sigma_c^{\text{homo}}$, 
where $\sigma_c^{\text{homo}}=\sigma_c$, the critical value (\ref{eq:sigma_c of the homogeneous case}) of 
the homogeneous case.
\begin{proposition}
	\label{prop:sigma_c comparison}
	With $\theta = \sum_{l=1}^K \rho_l \Theta_l$, we have 
	$\sigma_c^{\text{homo}} \geq \sigma_c^{\text{div}}$ for small $h$.
\end{proposition}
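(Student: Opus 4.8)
The plan is to reduce the comparison to an elementary algebraic inequality between the leading-order (in $h$) expressions for the two critical values, and then to settle that inequality with a standard correlation argument. By Proposition~\ref{prop:explicit condition and solution for xi equal to m(xi) for small h}, with $\theta=\sum_{l=1}^K\rho_l\Theta_l$,
\[
  \sigma_c^{\text{homo}}=\sqrt{\tfrac{2\theta}{3}}+O(h)
  =\sqrt{\tfrac{2}{3}\sum_{l=1}^K\rho_l\Theta_l}+O(h),
\]
while Proposition~\ref{prop:explicit condition for xi equal to m(xi) for small h, diversity case} gives
\[
  \sigma_c^{\text{div}}
  =\sqrt{\frac{\sum_{l=1}^K\rho_l/\Theta_l}{\tfrac{3}{2}\sum_{l=1}^K\rho_l/\Theta_l^2}}+O(h)
  =\sqrt{\tfrac{2}{3}\cdot\frac{\sum_{l=1}^K\rho_l/\Theta_l}{\sum_{l=1}^K\rho_l/\Theta_l^2}}+O(h).
\]
Since $x\mapsto\sqrt{x}$ is increasing, $\Theta_l>0$, and $\sum_l\rho_l=1$, it suffices to establish the leading-order inequality
\[
  \left(\sum_{l=1}^K\rho_l\Theta_l\right)\left(\sum_{l=1}^K\frac{\rho_l}{\Theta_l^2}\right)\;\ge\;\sum_{l=1}^K\frac{\rho_l}{\Theta_l}.
\]

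The main step is this inequality. I would introduce the random variable $X$ equal to $\Theta_l$ with probability $\rho_l$. The maps $x\mapsto x$ and $x\mapsto x^{-2}$ are respectively nondecreasing and nonincreasing on $(0,\infty)$, so Chebyshev's sum inequality (the correlation inequality for oppositely ordered monotone functions) yields $\mathbb{E}[X\cdot X^{-2}]\le\mathbb{E}[X]\,\mathbb{E}[X^{-2}]$, i.e. $\mathbb{E}[X^{-1}]\le\mathbb{E}[X]\,\mathbb{E}[X^{-2}]$, which is exactly the displayed inequality. Equivalently, one may chain $\mathbb{E}[X]\,\mathbb{E}[X^{-2}]\ge\mathbb{E}[X^{-2}]/\mathbb{E}[X^{-1}]\ge(\mathbb{E}[X^{-1}])^2/\mathbb{E}[X^{-1}]=\mathbb{E}[X^{-1}]$, using the two instances of Jensen's inequality $\mathbb{E}[X]\,\mathbb{E}[X^{-1}]\ge1$ and $\mathbb{E}[X^{-2}]\ge(\mathbb{E}[X^{-1}])^2$. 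Taking square roots of the leading terms then gives $\sigma_c^{\text{homo}}\ge\sigma_c^{\text{div}}$ at leading order.

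Finally I would dispose of the $O(h)$ remainders. If all the $\Theta_l$ coincide, then the heterogeneous model~(\ref{eq:SDE of single component, heterogeneous sensitivity}) is literally the homogeneous model~(\ref{eq:SDE of single component}) with mean-reversion rate $\theta=\sum_l\rho_l\Theta_l$, so $\sigma_c^{\text{homo}}=\sigma_c^{\text{div}}$ for every $h$. Otherwise the displayed inequality is strict (Chebyshev's sum inequality is strict unless $X$ is degenerate), hence the $O(1)$ terms of $\sigma_c^{\text{homo}}$ and $\sigma_c^{\text{div}}$ are strictly ordered and the $O(h)$ corrections cannot reverse the ordering once $h$ is small enough, which is the claim. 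I do not anticipate a serious obstacle: the only point requiring mild care is checking that the remainder estimates in Propositions~\ref{prop:explicit condition and solution for xi equal to m(xi) for small h} and~\ref{prop:explicit condition for xi equal to m(xi) for small h, diversity case} are uniform in the parameters being compared, and the real content is the one-line correlation inequality, whose "difficulty" is merely in recognizing the right two-variable comparison.
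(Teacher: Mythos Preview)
Your proposal is correct. The reduction to the inequality
\[
\Big(\sum_{l}\rho_l\Theta_l\Big)\Big(\sum_{l}\rho_l/\Theta_l^2\Big)\ \ge\ \sum_{l}\rho_l/\Theta_l
\]
is exactly what the paper does. Your second, ``chained'' argument via $\mathbb{E}[X]\,\mathbb{E}[X^{-1}]\ge 1$ and $\mathbb{E}[X^{-2}]\ge(\mathbb{E}[X^{-1}])^2$ is precisely the paper's proof in probabilistic notation: the paper applies Cauchy--Schwarz twice, first to get $(\sum_l\rho_l/\Theta_l)^2\le\sum_l\rho_l/\Theta_l^2$ and then to get $(\sum_l\rho_l\Theta_l)(\sum_l\rho_l/\Theta_l)\ge 1$, and chains them. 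Your primary route through Chebyshev's correlation inequality is a genuine, if modest, simplification: it settles the displayed inequality in one line rather than two, at the cost of invoking a slightly less ubiquitous tool. Your explicit handling of the $O(h)$ remainders (splitting into the degenerate case where all $\Theta_l$ coincide and the nondegenerate case where the leading-order inequality is strict) is more careful than the paper, which simply works at leading order without comment; this is a small improvement in rigor.
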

\begin{proof}
	See Appendix 
	\ref{pf:sigma_c comparison}.
\end{proof}

This result shows that when there is diversity the parameter region of existence of equilibria $\pm \xi_b$ 
is smaller than in the homogeneous case . From this observation we can anticipate that these equilibria are 
less stable in the presence of diversity, and this is confirmed next by numerical simulations and 
analytically.

By noting that $\xi_b^\text{homo}=\sqrt{1-(\sigma^2/\sigma_c^\text{homo})^2}+O(h)$ and 
$\xi_b^\text{div}=\sqrt{1-(\sigma^2/\sigma_c^\text{div})^2}+O(h)$, we have the following corollary:
\begin{corollary}
	With $\theta = \sum_{l=1}^K \rho_l \Theta_l$, we have $1>\xi_b^\text{homo} \geq \xi_b^\text{div}$ for 
	small $h$.
\end{corollary}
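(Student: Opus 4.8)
The plan is to write both $\xi_b^{\text{homo}}$ and $\xi_b^{\text{div}}$, to leading order in $h$, as a common increasing function of the corresponding critical volatility, and then apply Proposition~\ref{prop:sigma_c comparison}. First I would verify the two identities stated just above the corollary. From $\sigma_c^{\text{homo}}=\sqrt{2\theta/3}+O(h)$ (Proposition~\ref{prop:explicit condition and solution for xi equal to m(xi) for small h}) one gets $(\sigma/\sigma_c^{\text{homo}})^2=3\sigma^2/2\theta+O(h)$, so the leading term of (\ref{eq:explicit solution for xi equal to m(xi)}) is $\sqrt{1-(\sigma/\sigma_c^{\text{homo}})^2}$; likewise, from the expansion of $\sigma_c^{\text{div}}$ in Proposition~\ref{prop:explicit condition for xi equal to m(xi) for small h, diversity case} one gets $(\sigma/\sigma_c^{\text{div}})^2=\bigl(\sum_l 3\rho_l\sigma^2/2\Theta_l^2\bigr)/\bigl(\sum_l\rho_l/\Theta_l\bigr)+O(h)$, and substituting into (\ref{eq:explicit solution for xi equal to m(xi), diversity case}) makes its leading term equal to $\sqrt{1-(\sigma/\sigma_c^{\text{div}})^2}$. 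This step is pure algebra.

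Next, set $g(t):=\sqrt{1-\sigma^2/t^2}$ for $t>\sigma$; then $g'(t)=\sigma^2/\bigl(t^3\sqrt{1-\sigma^2/t^2}\bigr)>0$, so $g$ is strictly increasing, and $g(t)<1$ for all such $t$ because $\sigma>0$. The standing hypothesis gives $\sigma<\sigma_c^{\text{homo}}$, while $\sigma<\sigma_c^{\text{div}}$ is precisely the condition for $\pm\xi_b^{\text{div}}$ to exist; with Proposition~\ref{prop:sigma_c comparison} this yields $\sigma<\sigma_c^{\text{div}}\le\sigma_c^{\text{homo}}$, i.e.\ both critical values lie in the interval where $g$ increases. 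Applying $g$ and using the first step, $\xi_b^{\text{div}}=g(\sigma_c^{\text{div}})+O(h)\le g(\sigma_c^{\text{homo}})+O(h)=\xi_b^{\text{homo}}$ and $\xi_b^{\text{homo}}=g(\sigma_c^{\text{homo}})+O(h)<1$, which is the corollary up to the $O(h)$ terms.

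The point that needs care is turning these leading-order inequalities into honest statements for $h$ small. The square-root expansions are valid because we stay in the regime $\sigma<\sigma_c$, bounded away from the branch point, so the remainders are uniform; it then suffices that the leading terms be ordered with a gap that dominates the errors. Since the $\Theta_l$ are distinct and the $\rho_l$ positive, Chebyshev's sum inequality for the oppositely-ordered sequences $(\Theta_l)$ and $(1/\Theta_l^2)$ is \emph{strict} when $K\ge2$, which (using $\theta=\sum_l\rho_l\Theta_l$) gives $\sigma_c^{\text{div}}<\sigma_c^{\text{homo}}$ at leading order by a fixed positive amount, so $\xi_b^{\text{homo}}>\xi_b^{\text{div}}$ for all small $h$; for $K=1$ the heterogeneous model is the homogeneous one and the claim holds with equality. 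Hence $1>\xi_b^{\text{homo}}\ge\xi_b^{\text{div}}$ for small $h$. I expect this remainder bookkeeping, together with the strict-versus-degenerate split, to be the only slightly delicate part; the rest is monotonicity of $g$ combined with Proposition~\ref{prop:sigma_c comparison}.
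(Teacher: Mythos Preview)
Your proof is correct and follows exactly the approach the paper indicates in the sentence preceding the corollary: rewrite both equilibria as $\sqrt{1-(\sigma/\sigma_c)^2}+O(h)$ and invoke Proposition~\ref{prop:sigma_c comparison} together with the monotonicity of $t\mapsto\sqrt{1-\sigma^2/t^2}$. Your additional care with the $O(h)$ remainders and the strict-versus-degenerate case distinction (via Chebyshev's sum inequality) goes beyond what the paper spells out, but the underlying argument is the same.
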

\section{Numerical Simulations}
\label{sec:simulation}

Before going into a detailed analysis of the models, we carry out
numerical simulations of 
(\ref{eq:SDE of single component}) and 
(\ref{eq:SDE of single component, heterogeneous sensitivity}) so as to get 
a quick impression of their behavior. We discretize with 
a uniform time grid, and let $X_j^n$ denote the simulated $X_j$ at time $n \Delta t$.

\subsection{Homogeneous Model}
\label{subsec:numhomo}

We simulate (\ref{eq:SDE of single component}) using the Euler scheme
\begin{equation}
	\label{eq:Euler method, homogeneous case}
	X_j^{n+1} = X_j^n - h U(X_j^n)\Delta t + \sigma\Delta W_j^{n+1} 
	+ \theta(\frac{1}{N}\sum_{k=1}^N X_k^n -X_j^n)\Delta t.
\end{equation}
We take $U(y)=y^3-y$, $=1$, $X_j^0=-1$, $\Delta t=0.02$, and let
$\{\Delta W_j^n\}_{j,n}$ be independent Gaussian random variables with mean zero and 
variance $\Delta t$. In the figures presented, the dashed lines show the 
numerical solutions of the compatibility equation (\ref{eq:xi equals m(xi)}),
$\xi=m(\xi)$. 
As noted earlier, if $\frac{d}{d\xi}m(0)\leq 1$, then
$0=m(0)$ is the unique solution and $0$ is a stable state. Therefore we should observe 
that the systemic risk fluctuates around $0$. If $\frac{d}{d\xi}m(0)>1$, there are two 
additional non-zero solutions $\pm\xi_b=m(\pm\xi_b)$ and $\pm\xi_b$ are 
stable while $0$ is unstable. We also know that when $h$ is small, the condition 
$\frac{d}{d\xi}m(0)>1$ can be simplified to be $3\sigma^2/2\theta<1$.

\begin{figure}
	\includegraphics[width=0.32\textwidth]{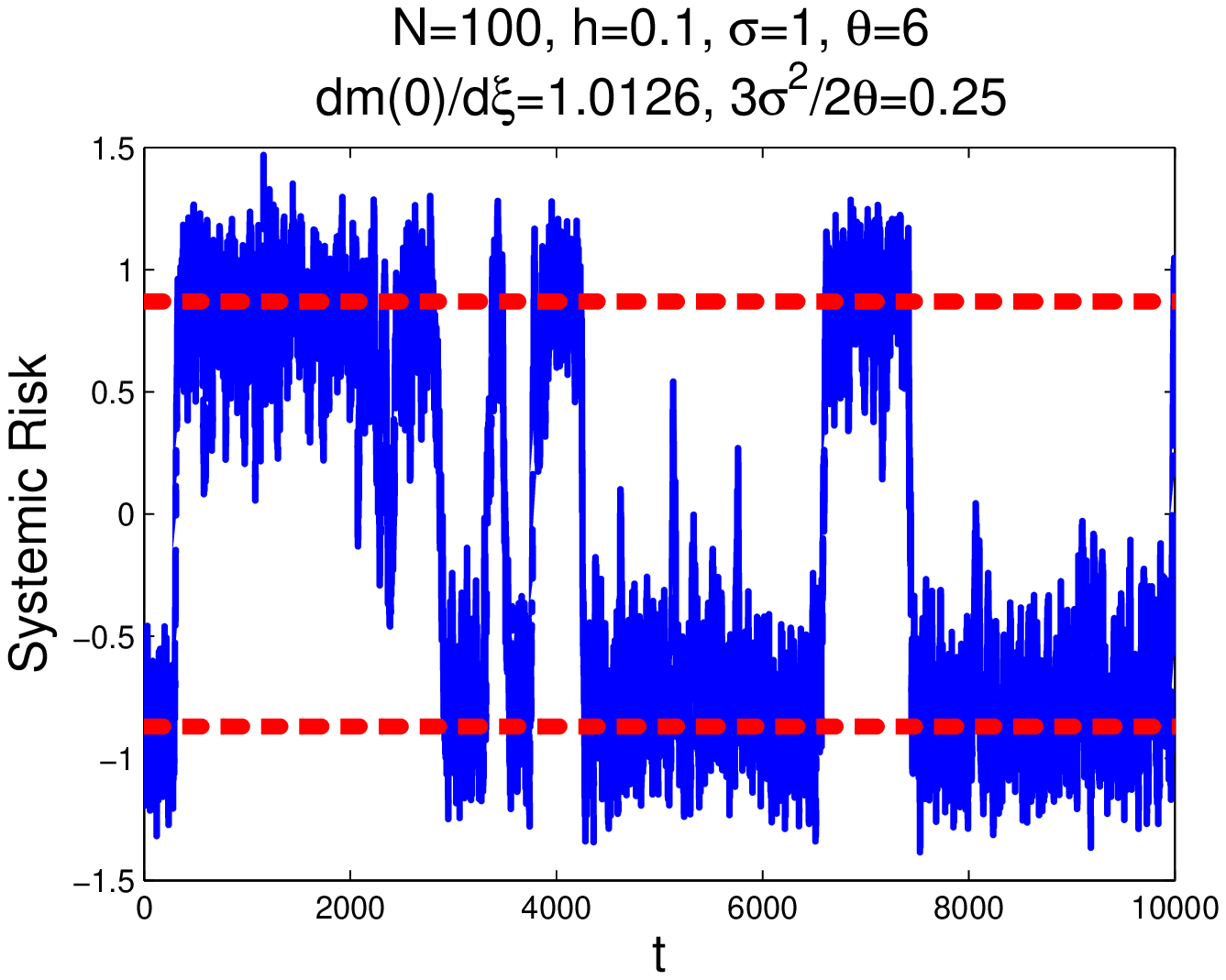}
	\includegraphics[width=0.32\textwidth]{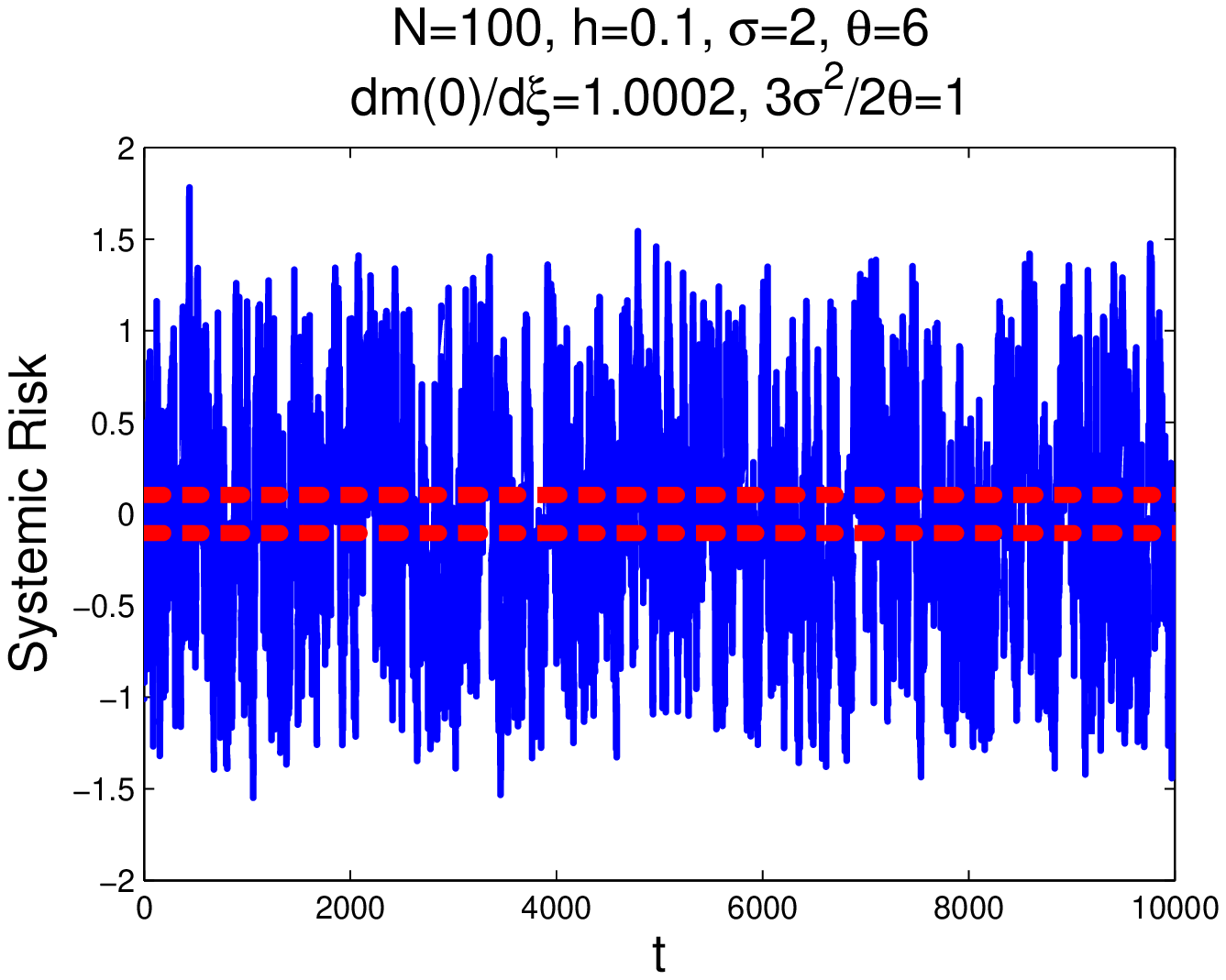}
	\includegraphics[width=0.32\textwidth]{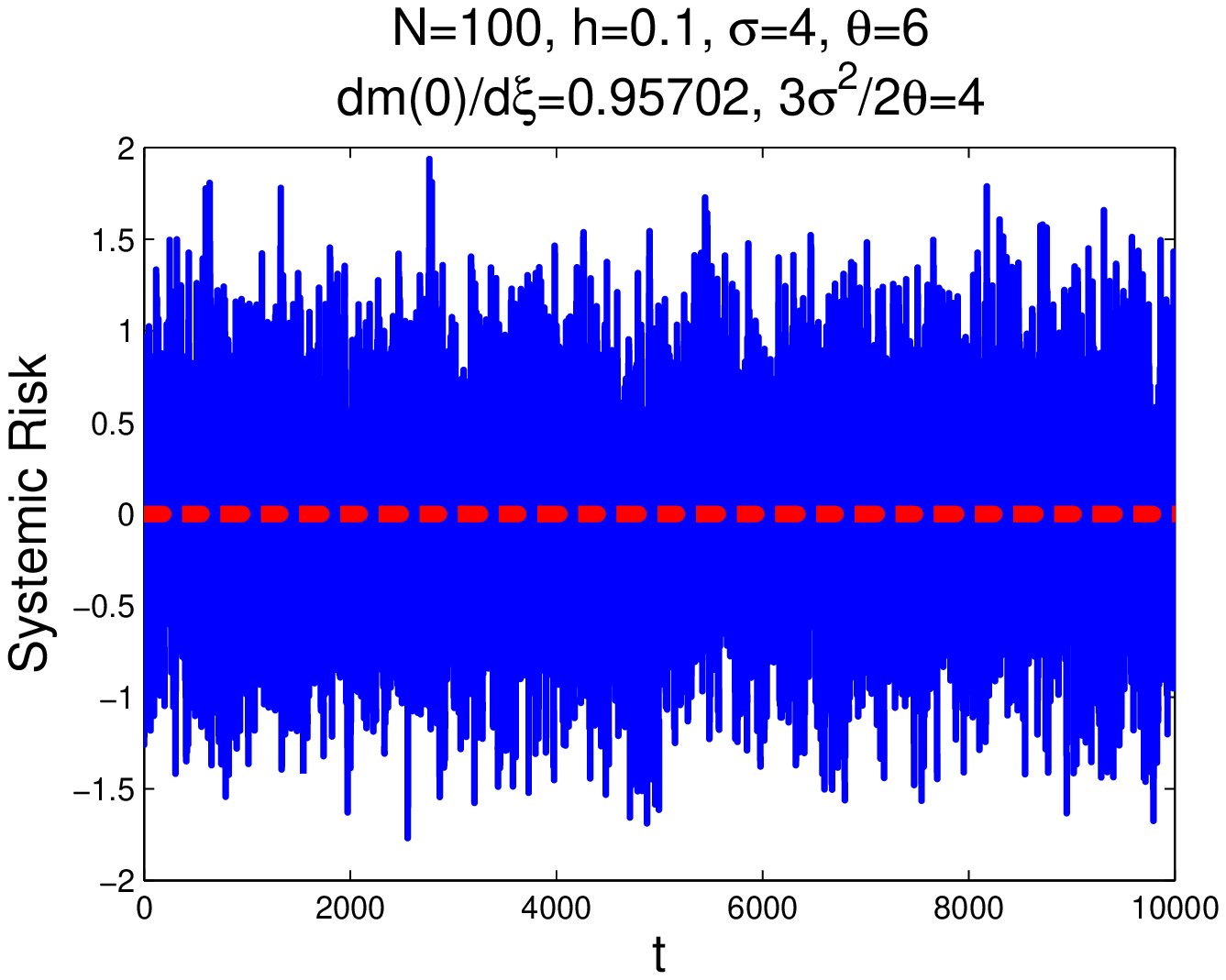}
	\caption{
		\label{fig:The change of sigma}
		Simulations for different $\sigma$. The system has two (statistically) stable equilibria when $\sigma$ is below the 
		critical value or otherwise has single stable state $0$. For small $h$, 
		$3\sigma^2/2\theta<1$ is the approximate criterion.}
\end{figure}

\begin{figure}
	\includegraphics[width=0.32\textwidth]{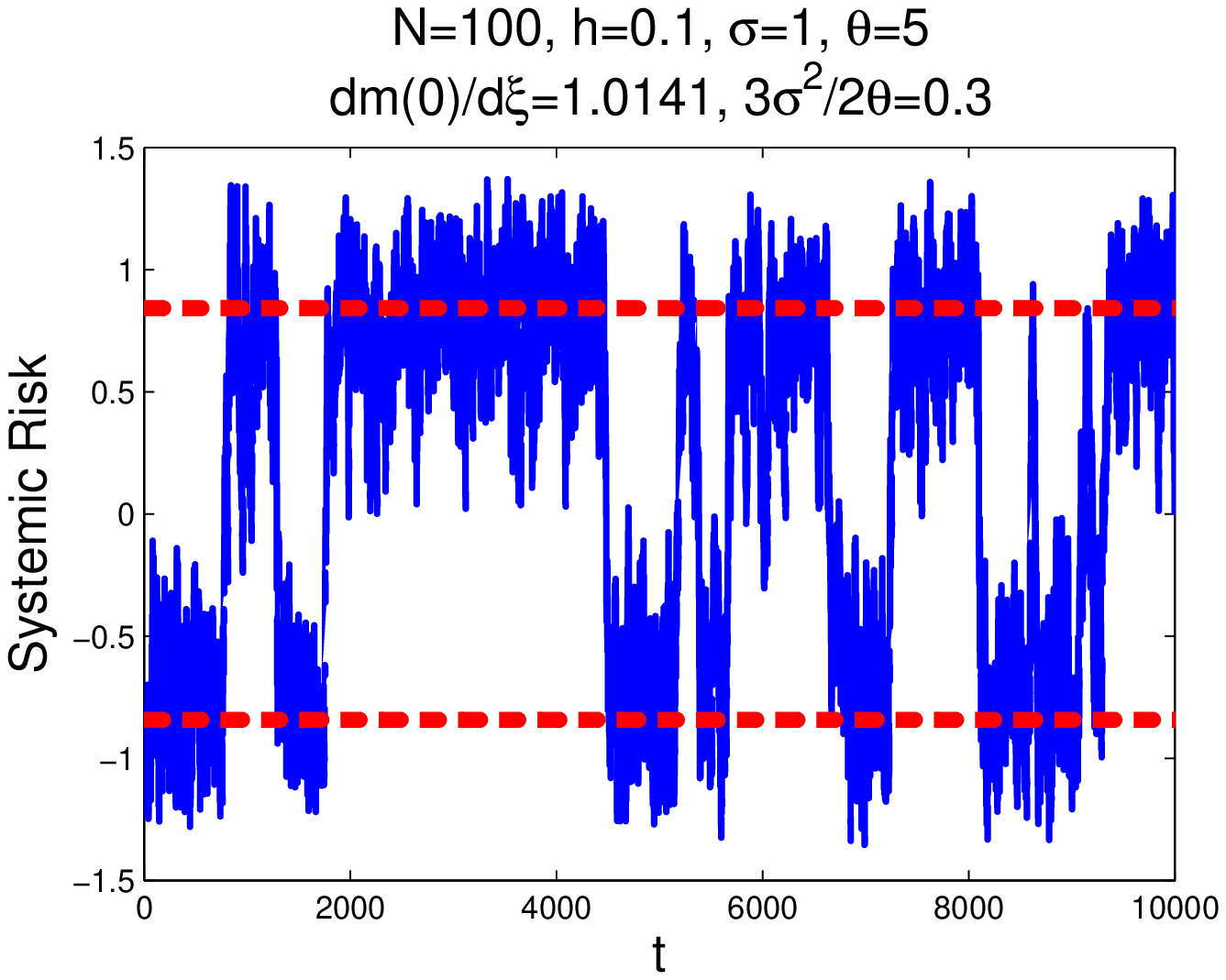}
	\includegraphics[width=0.32\textwidth]{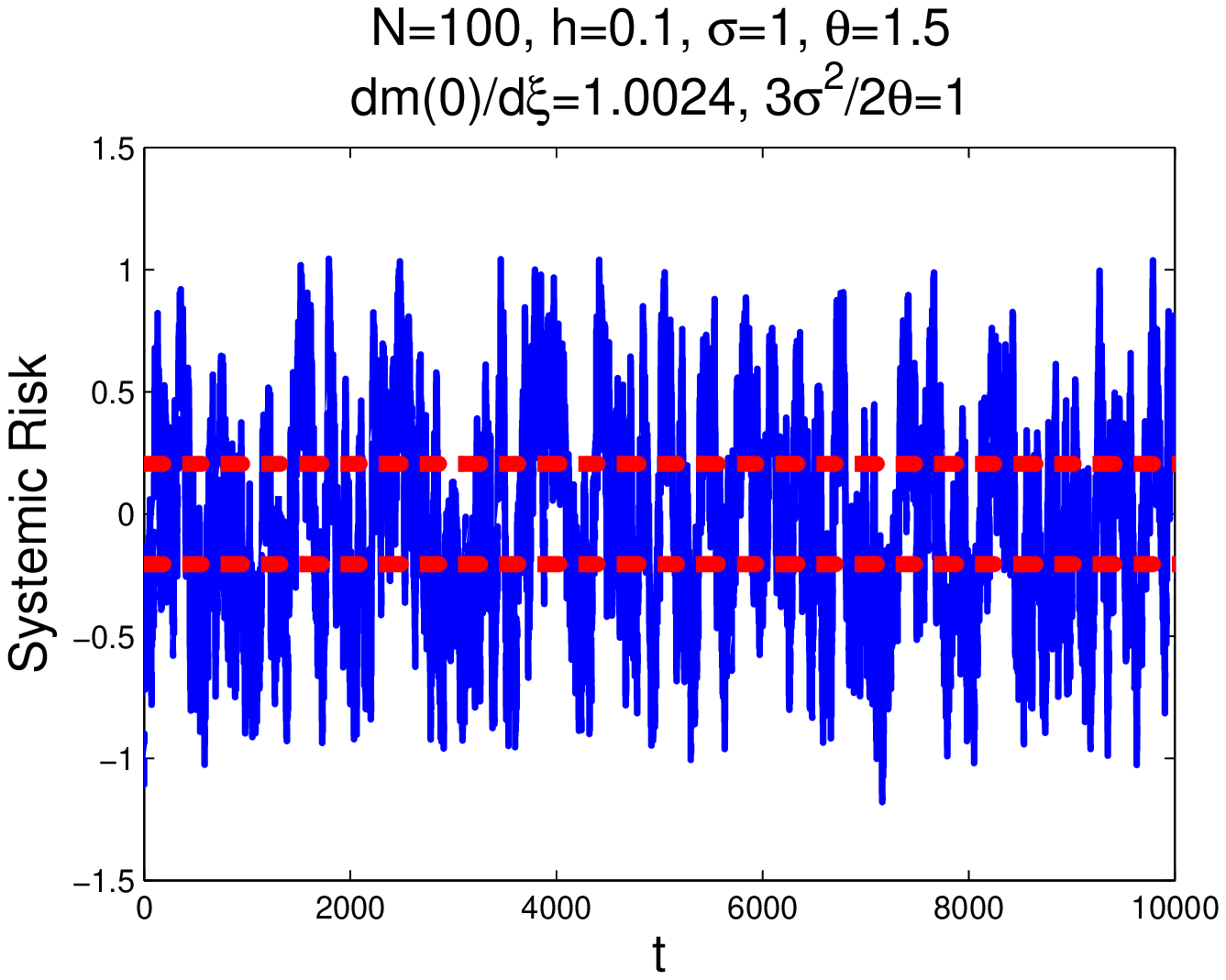}
	\includegraphics[width=0.32\textwidth]{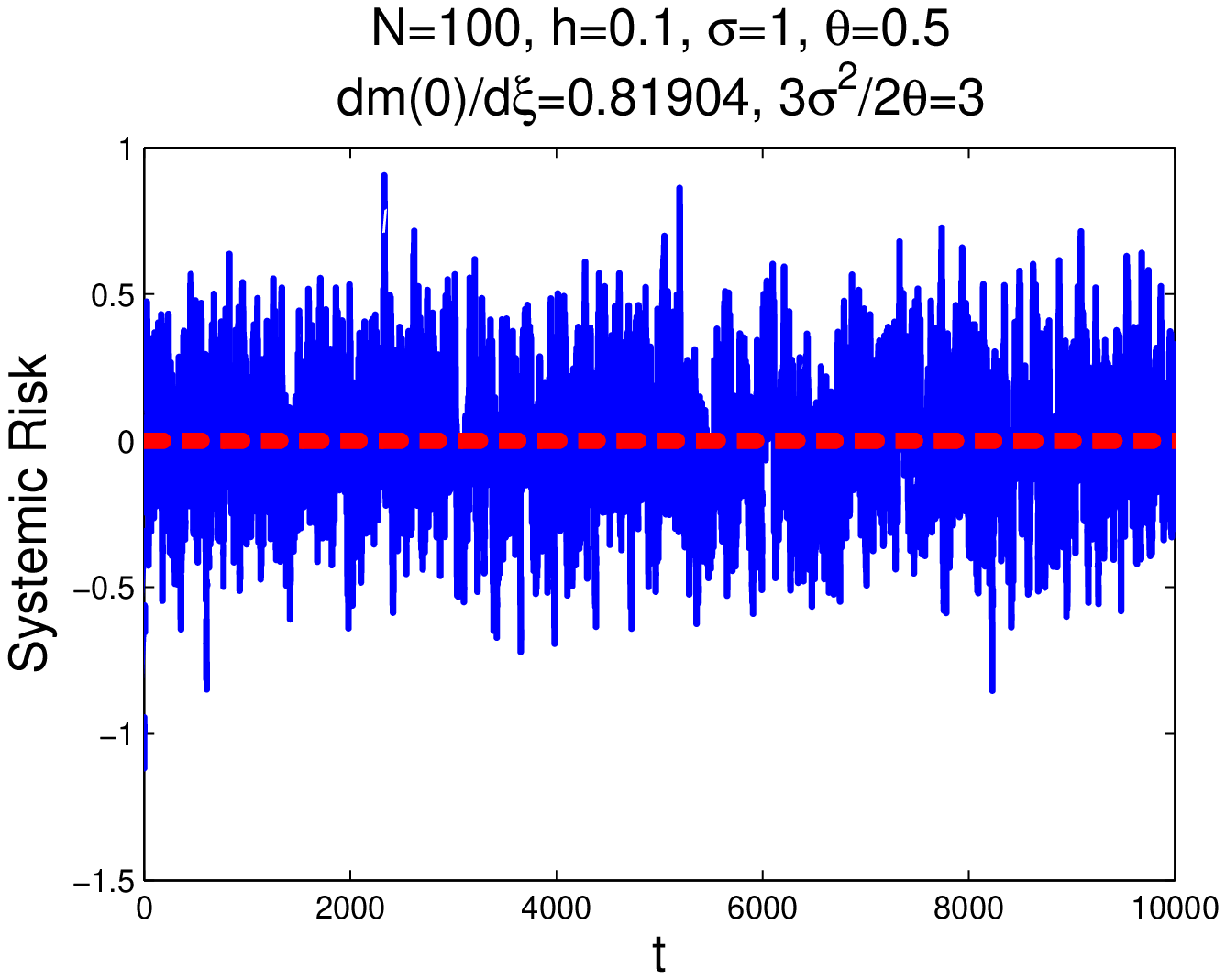}
	\caption{
		\label{fig:The change of theta}
		Simulations for different  $\theta$. The system has two stable equilibria if $\theta$ is above the 
		critical value or otherwise has single stable state $0$. For small $h$, 
		$3\sigma^2/2\theta<1$ is the approximate criterion.}
\end{figure}


Figure \ref{fig:The change of sigma} and Figure \ref{fig:The change of theta} illustrate the behavior of the 
empirical mean as the system transitions from having two equilibria to having a single one, which is controlled 
by the value of $\frac{d}{d\xi}m(0)$. This is an instance of a bifurcation of equilibria. From Proposition 
\ref{prop:explicit condition and solution for xi equal to m(xi) for small h}, we know that when $h$ is 
small, the existence condition of two equilibria, $\frac{d}{d\xi}m(0)>1$, can be approximated by the 
condition $3\sigma^2/2\theta<1$. In the simulations, we let $h=0.1$ so the approximate condition 
$3\sigma^2/2\theta<1$ can be applied. In Figure \ref{fig:The change of sigma} we change $\sigma$ but fix the 
other parameters, and consider the three cases $\frac{d}{d\xi}m(0)<1$ ($3\sigma^2/2\theta>1$), 
$\frac{d}{d\xi}m(0)\approx 1$ ($3\sigma^2/2\theta=1$) and $\frac{d}{d\xi}m(0)>1$ ($3\sigma^2/2\theta<1$). In 
Figure \ref{fig:The change of theta} we change $\theta$. We can see that even though the  
parameters varied in the numerical simulations are not the same, the bifurcation behavior is similar. 

\begin{figure}
	\includegraphics[width=0.32\textwidth]{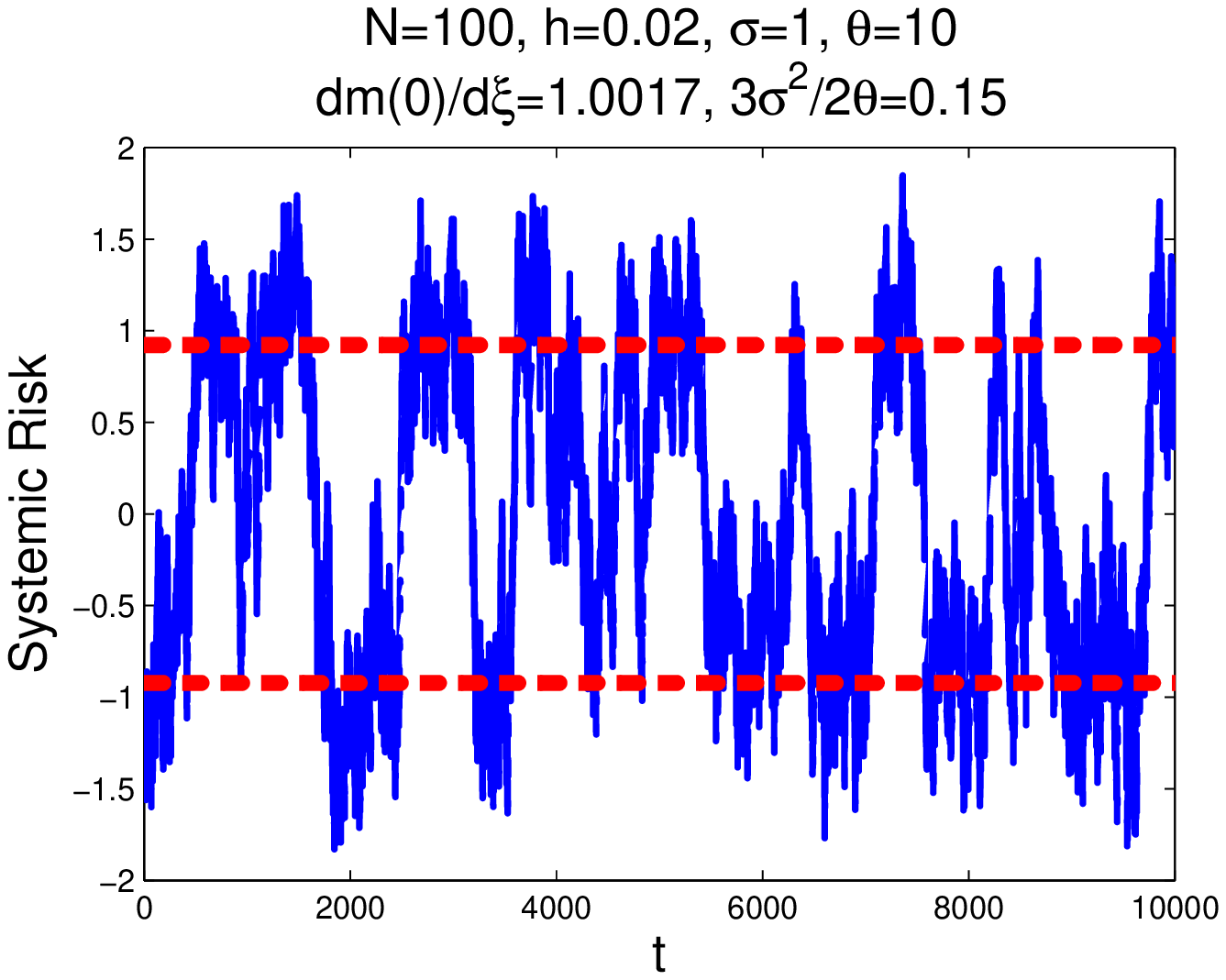}
	\includegraphics[width=0.32\textwidth]{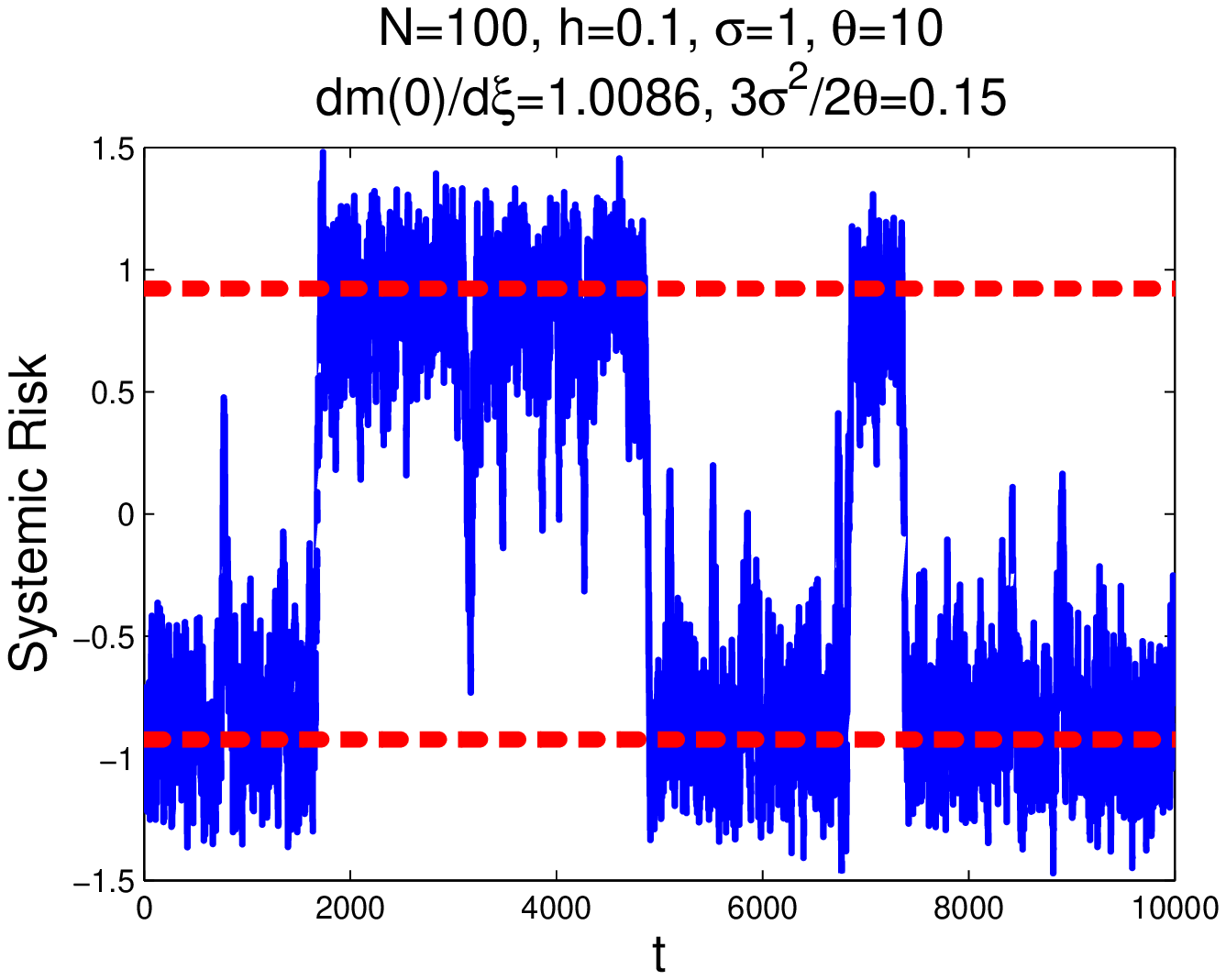}
	\includegraphics[width=0.32\textwidth]{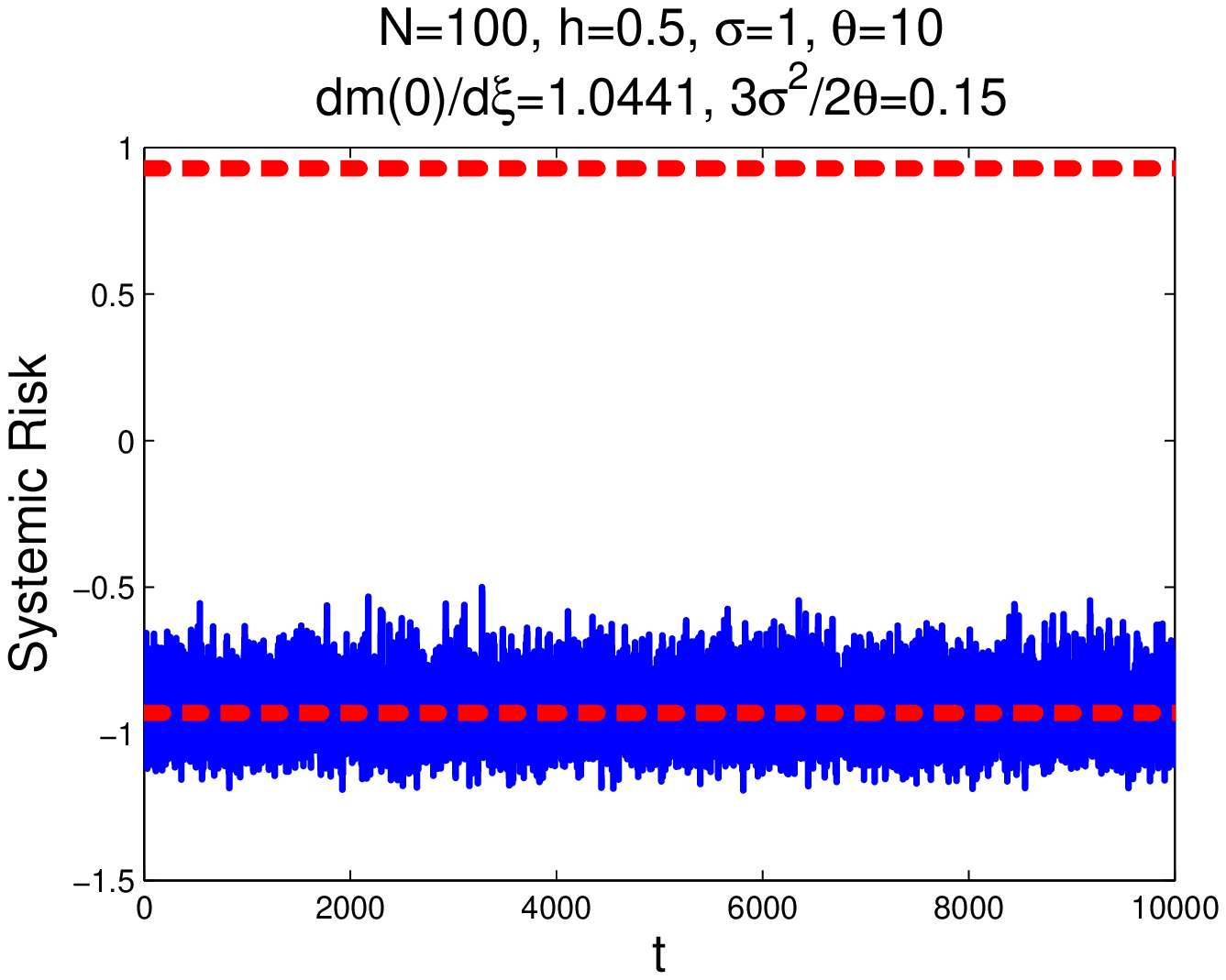}
	\caption{
		\label{fig:The change of h}
		The effect of changing $h$. Increasing it stabilizes the system.}
\end{figure}

Figure \ref{fig:The change of h} shows the effect of increasing $h$ on the system stability.  
By stability we mean resistance to the transition of the empirical mean of the system 
from one state to the other (because the model is symmetric). The parameter $h$  
is proportional to the height of the potential barrier of each agent. Thus we increase
the overall system stability if we increase the component's stability. This observation 
is analogous to comments in \cite{Nier2007, Lorenz2009, May2010}. It is clear that 
$h$ influences system stability substantially. 

\begin{figure}
	\includegraphics[width=0.32\textwidth]{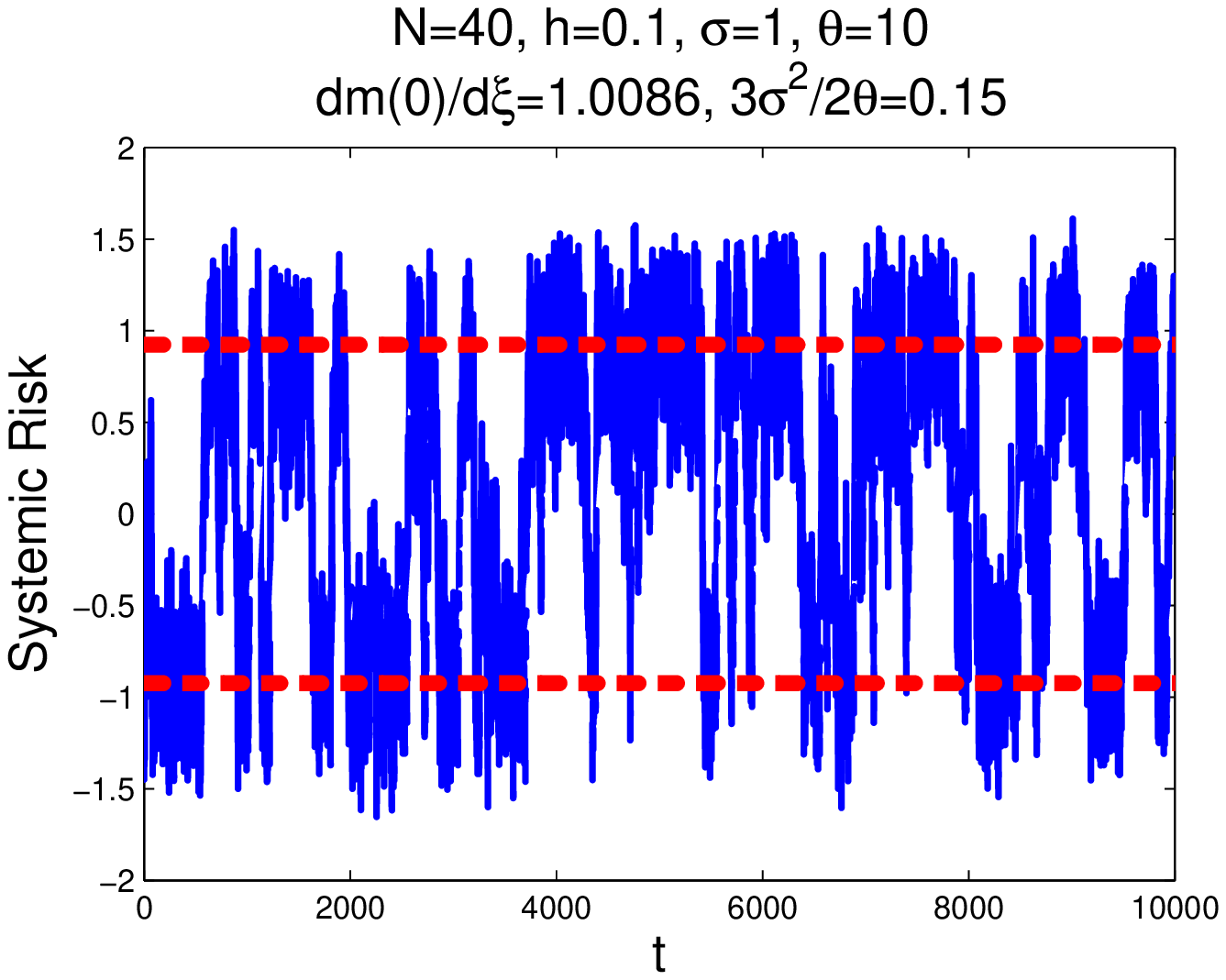}
	\includegraphics[width=0.32\textwidth]{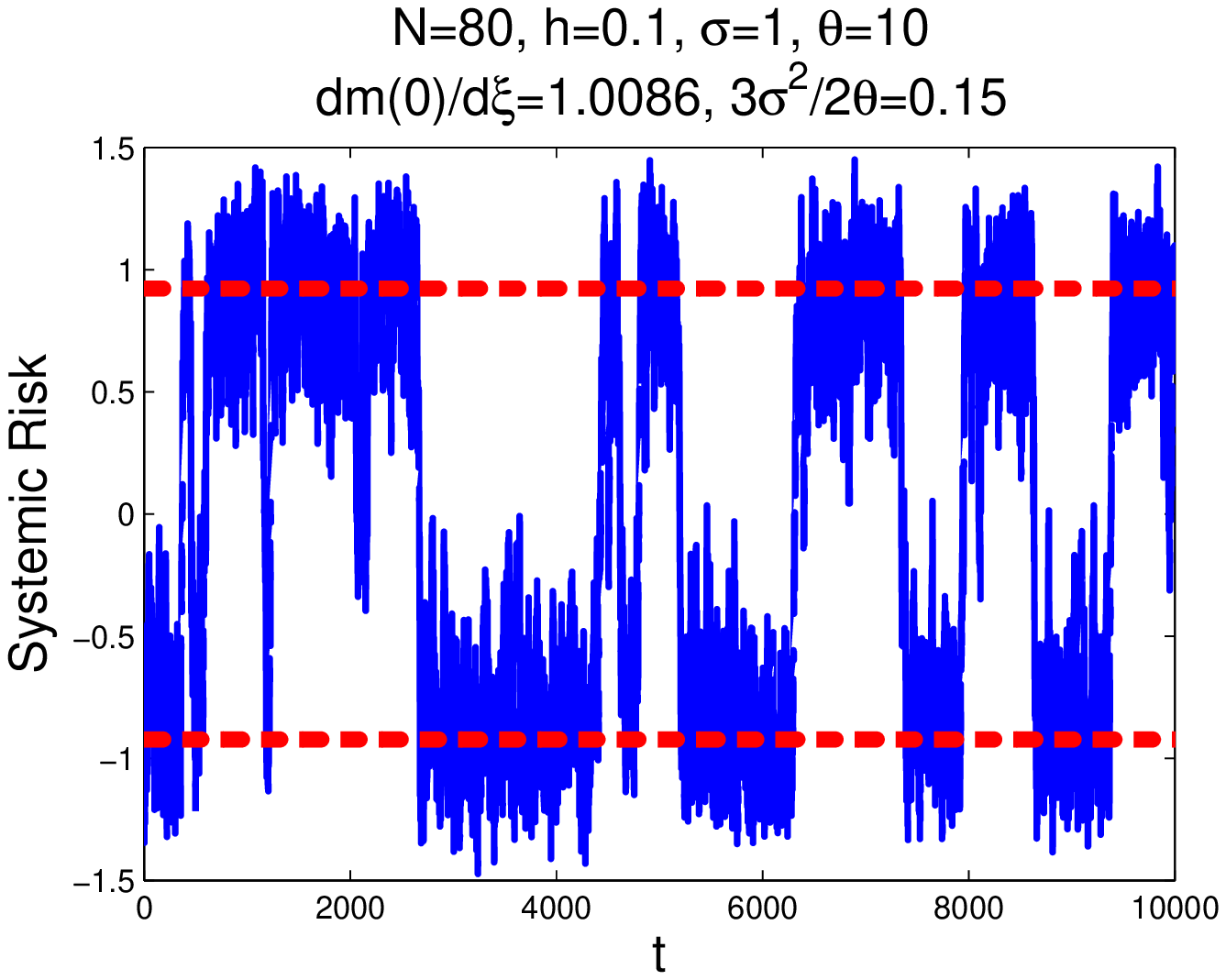}
	\includegraphics[width=0.32\textwidth]{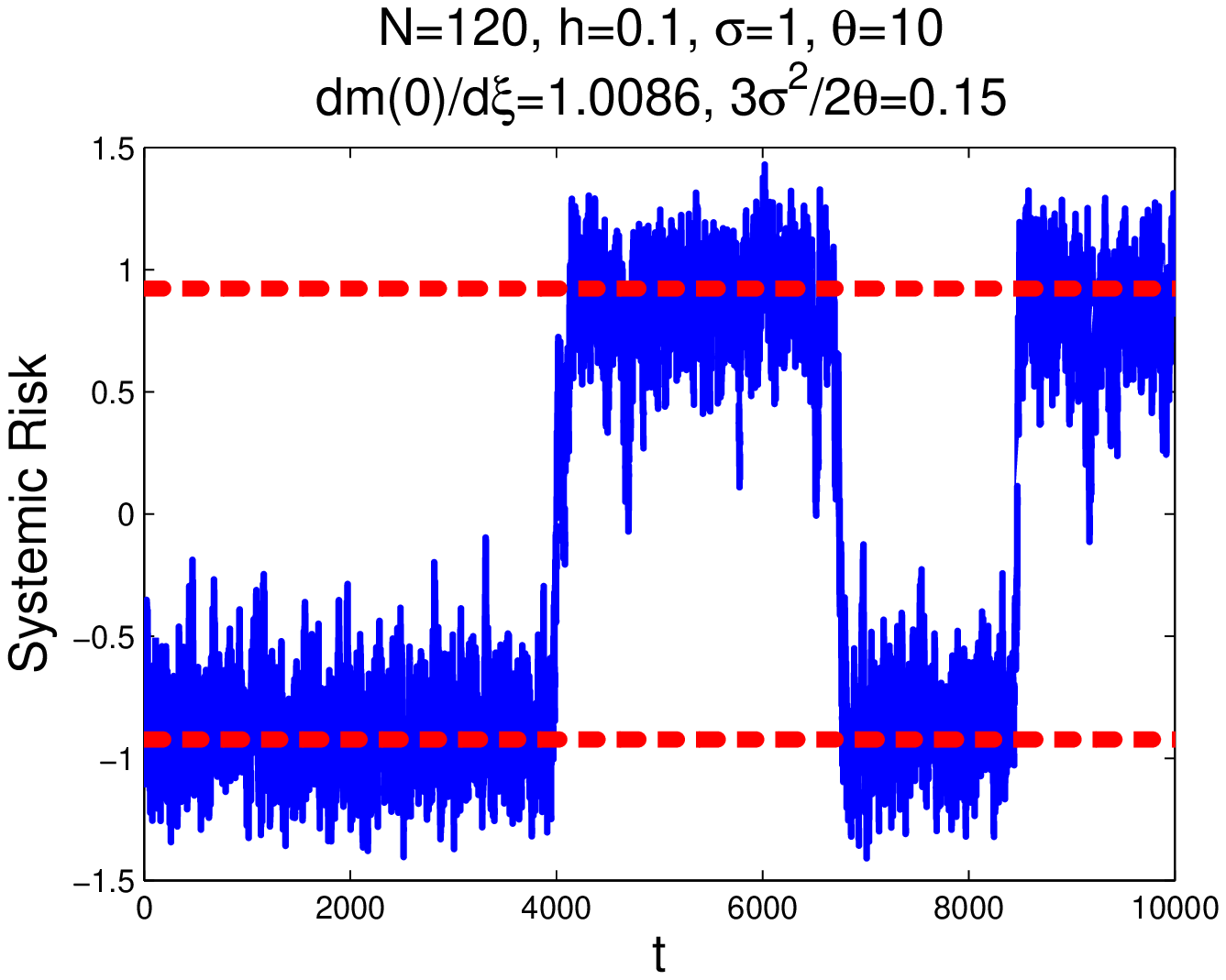}
	\caption{
		\label{fig:The change of N}
		Influence of the system size $N$. A larger system tends to have a more stable behavior.}
\end{figure}

Figure \ref{fig:The change of N} illustrates the effect of system size on its  
stability. Clearly a larger system is more stable.
These stability phenomena will be quantified with the large deviations analysis  of Section 
\ref{sec:large deviations}.

\subsection{Heterogeneous Model}

For the heterogeneous model, $\theta$ is replaced by $\theta_j$, 
and the discretization is 
\begin{equation}
	\label{eq:Euler method, heterogeneous case}
	X_j^{n+1} = X_j^n - h U(X_j^n)\Delta t + \sigma\Delta W_j^{n+1} 
	+ \theta_j(\frac{1}{N}\sum_{k=1}^N X_k^n - X_j^n)\Delta t,
\end{equation}
with the same parameter settings. The different values of $\theta_j$ are controlled
by the parameters  $\Theta_l$ and 
$\rho_l$. In the simulation, we take $K=3$ and
$\{\Theta_l\}_{l=1}^K=\{ \Theta_L,\Theta_M,\Theta_H\}$  for a system
a low, medium and high rates of mean reversion to the empirical mean, that is, the systemic risk. 
We also take $\{\rho_l\}_{l=1}^K=\{\rho_L,\rho_M,\rho_H\}$ for the corresponding fractions. We 
use the normalized standard deviation of the distribution of $\theta_j$ values in order to quantify  
diversity. We find that the heterogeneous model behaves like the homogeneous one when
$h$, $\sigma$ and $N$ change.  But, diversity on the rates of mean reversion 
has significant impact on system stability.

\begin{figure}
	\includegraphics[width=0.32\textwidth]{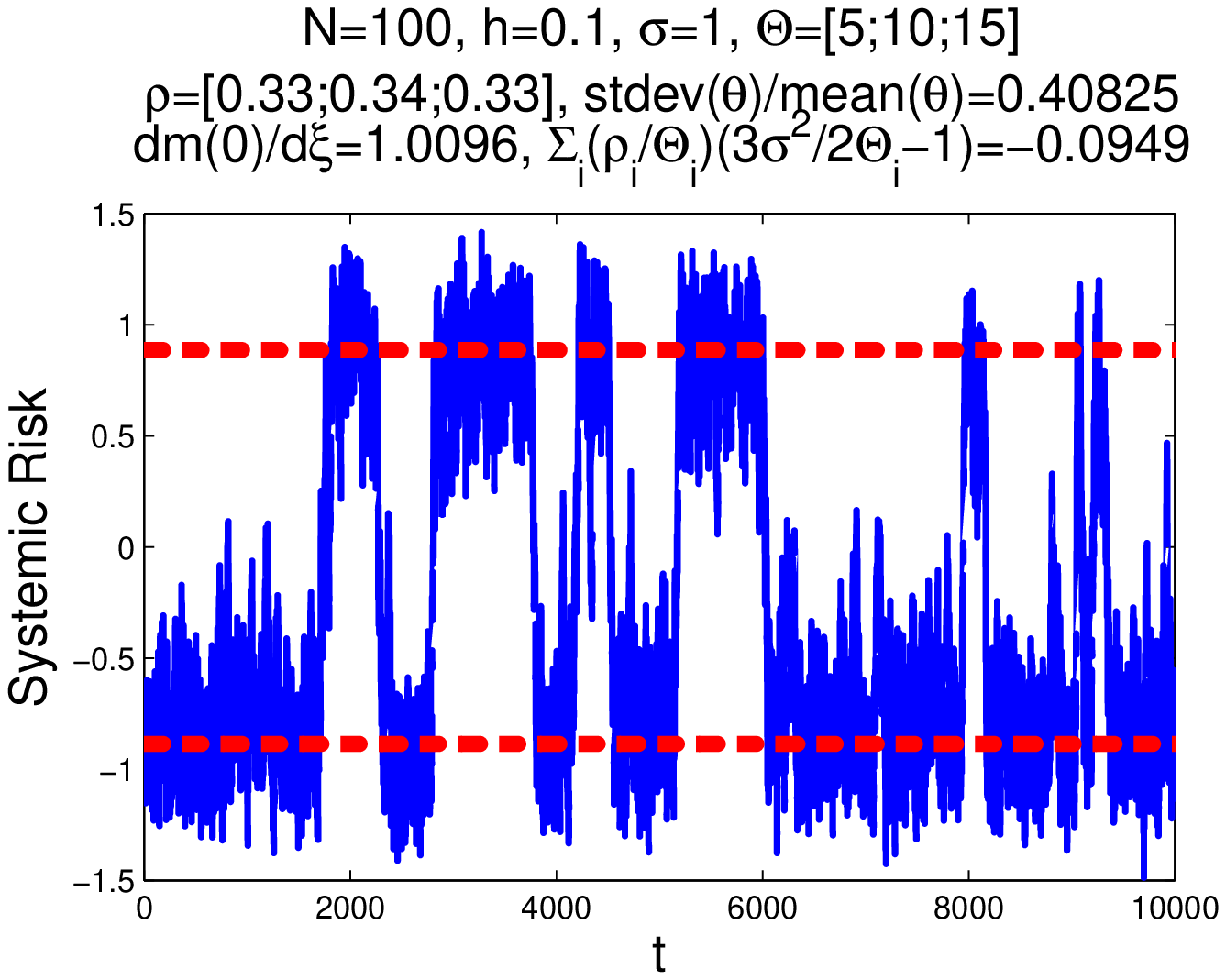}
	\includegraphics[width=0.32\textwidth]{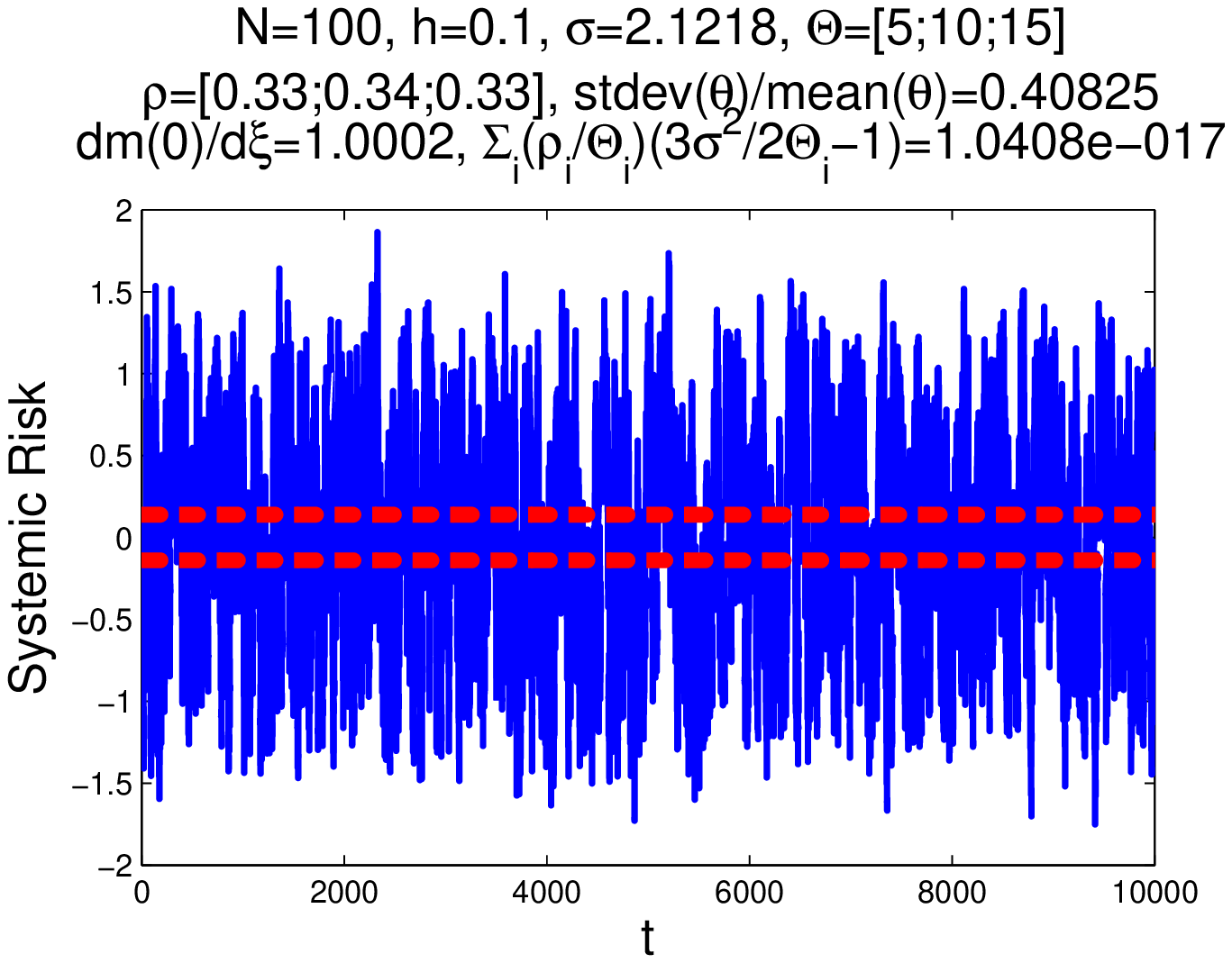}
	\includegraphics[width=0.32\textwidth]{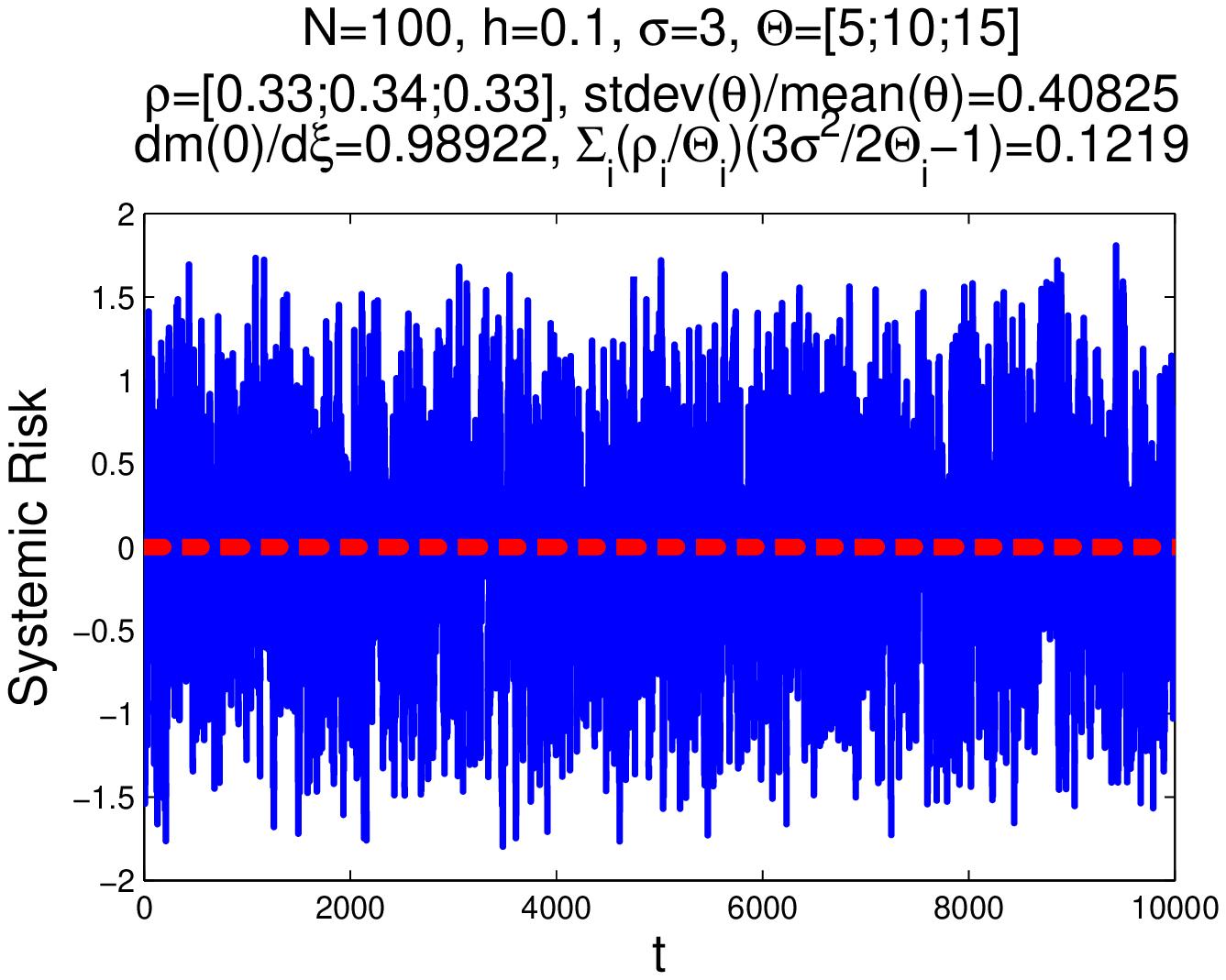}
	\caption{
		\label{fig:The change of sigma, diversity case}
		Effect of changes in $\sigma$. The system has two stable equllibria when $\sigma$ is below the 
		critical value and has single one otherwise. For small $h$, 
		$\sum_{l=1}^K(\rho_l/\Theta_l)(3\sigma^2/2\Theta_l-1) < 1$ is the approximate 
		criterion.}
\end{figure}

As in the homogeneous case, in Figure \ref{fig:The change of sigma, diversity case} we consider cases with  
$\sigma$ below, close to and above the critical value. The results are similar to the homogeneous case as 
expected. For $\sigma$ below the critical value we have two equlibria and for $\sigma$ above the critical 
value one equilibrium. The condition $\frac{d}{d\xi}m(0)>1$ is still necessary and sufficient for the
existence two equlibria. The condition $\sum_{l=1}^K(\rho_l/\Theta_l)(3\sigma^2/2\Theta_l-1) < 1$ is also a 
good approximation to the exact one when $h$ is small.

\begin{figure}
	\includegraphics[width=0.32\textwidth]{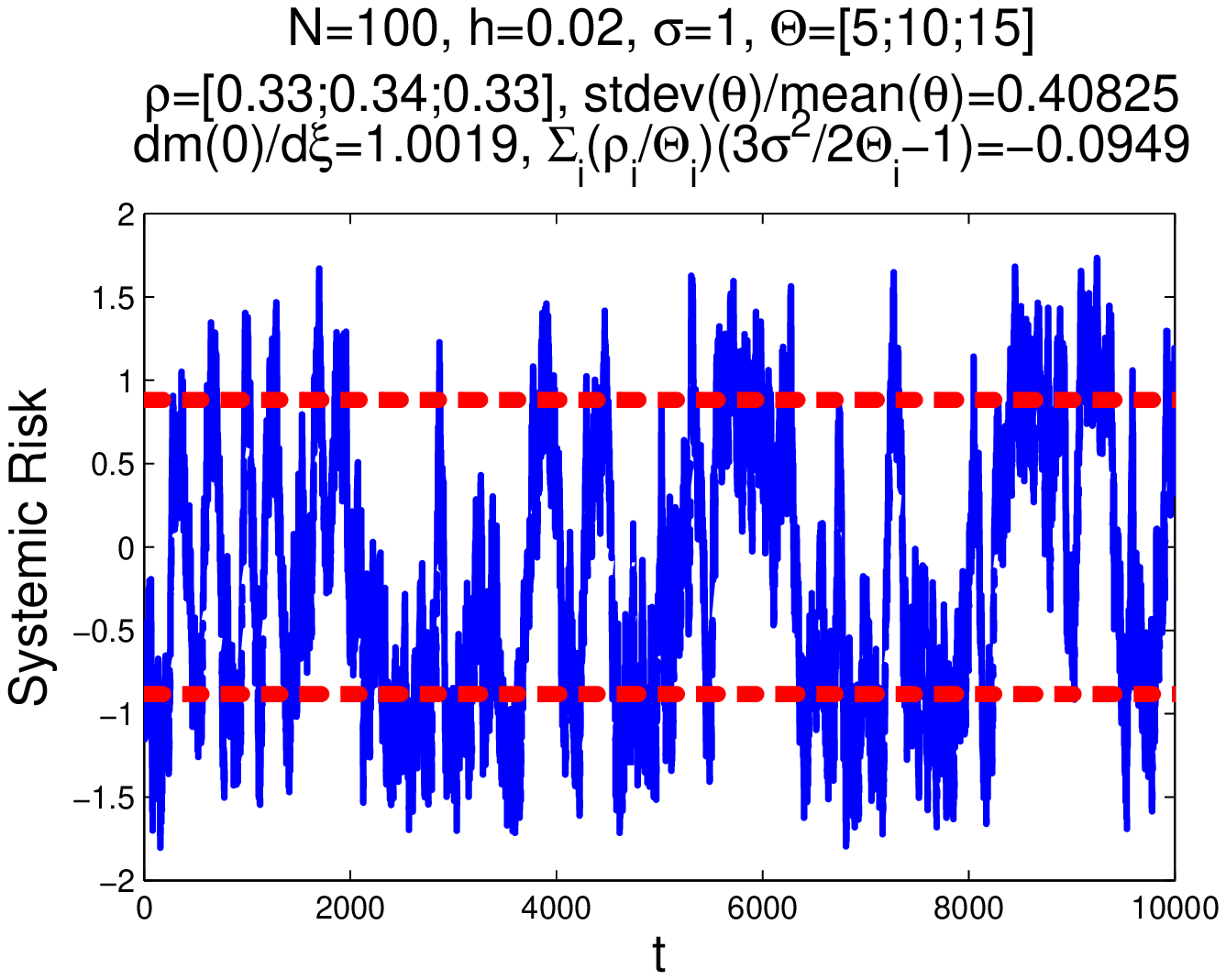}
	\includegraphics[width=0.32\textwidth]{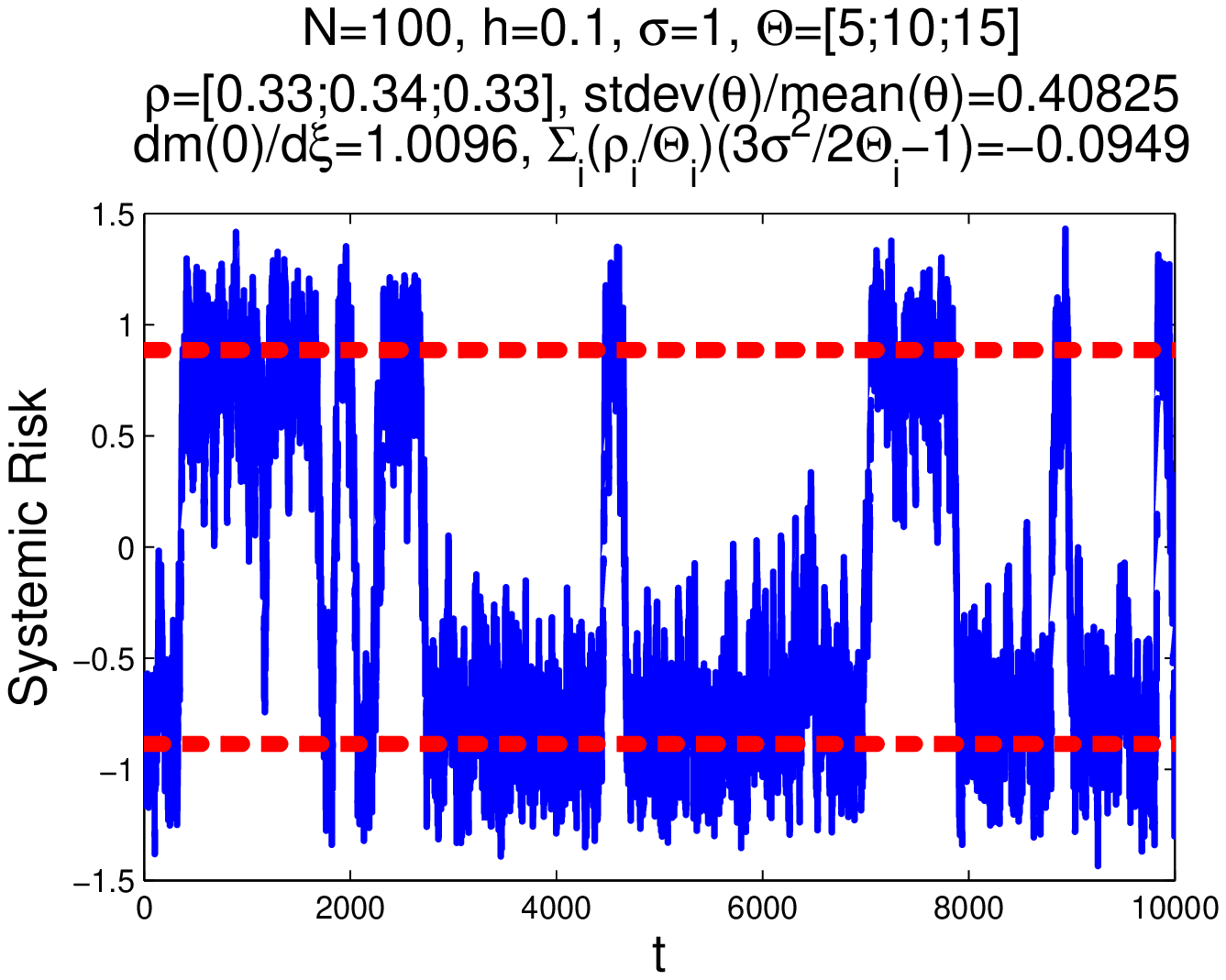}
	\includegraphics[width=0.32\textwidth]{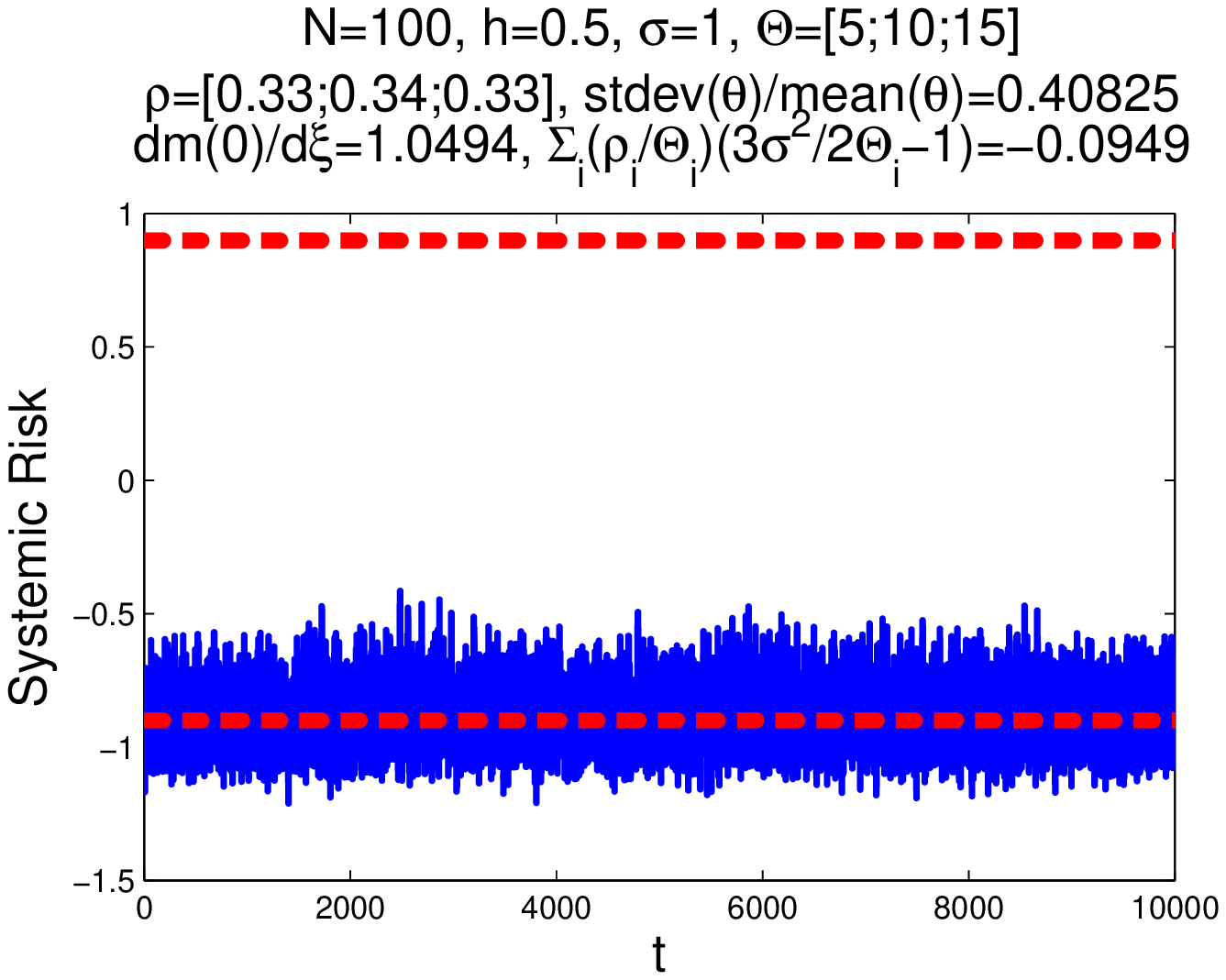}
	\caption{
		\label{fig:The change of h, diversity case}
		Effect of changing $h$. Increasing it stabilizes the system.}
\end{figure}

\begin{figure}
	\includegraphics[width=0.32\textwidth]{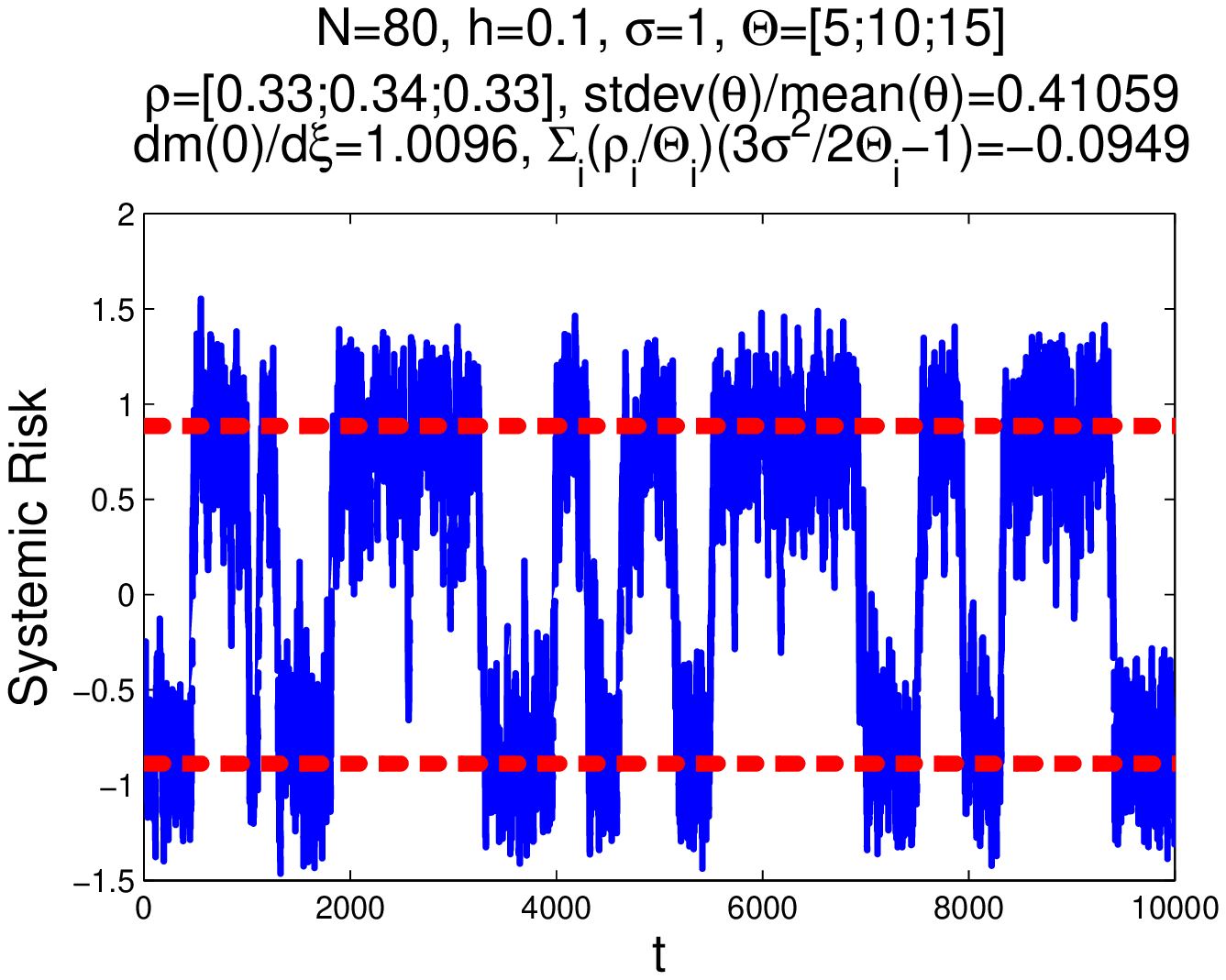}
	\includegraphics[width=0.32\textwidth]{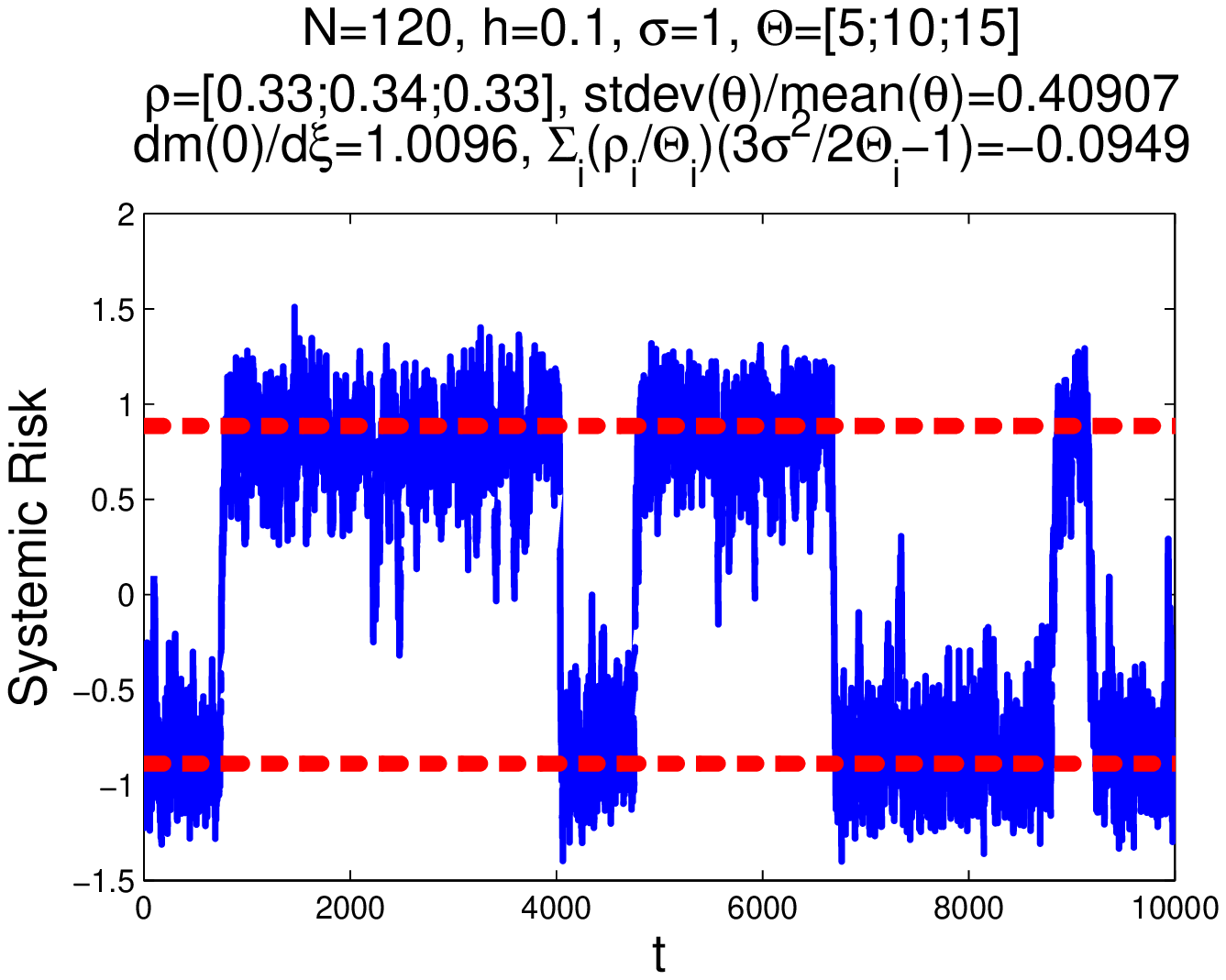}
	\includegraphics[width=0.32\textwidth]{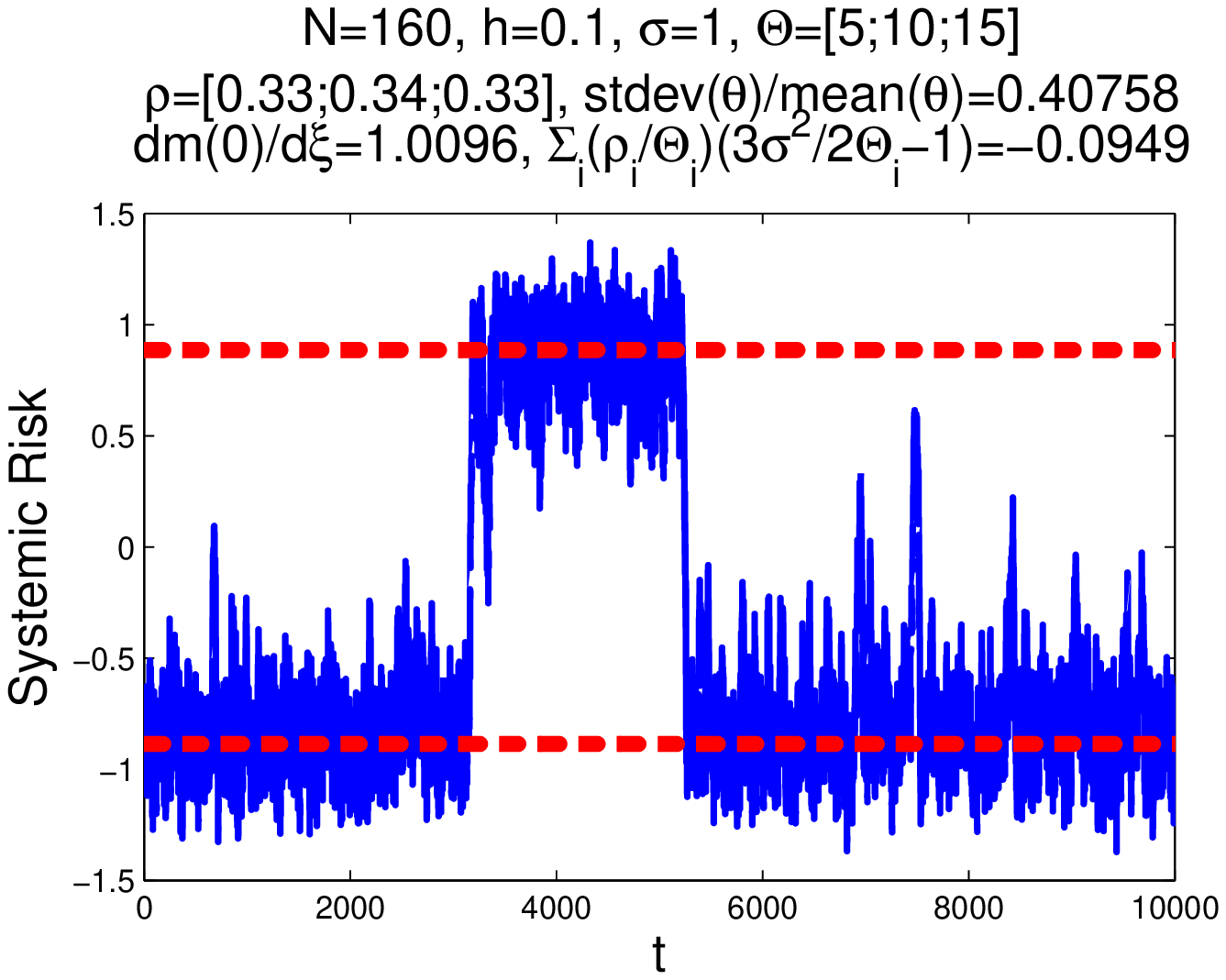}
	\caption{
		\label{fig:The change of N, diversity case}
		Effect of changing the system size $N$. Larger system have a more stable behavior.}
\end{figure}

The parameter $h$ and the system size $N$ are closely associated with system stability. We note that in 
Figure \ref{fig:The change of h, diversity case} and Figure \ref{fig:The change of N, diversity case} when 
$h$ or $N$ are increased, the system becomes visibly more stable. Another observation is that with $h$, 
$\sigma$ and $N$ fixed, and with the mean of $\theta_j$ of (\ref{eq:Euler method, heterogeneous case}) equal 
to $\theta$ of (\ref{eq:Euler method, homogeneous case}), the heterogeneous system is consistently more 
unstable than the corresponding homogeneous model (see Figure \ref{fig:The change of h} and Figure 
\ref{fig:The change of N}). Clearly diversity tends to destabilize the system.

\begin{figure}
	\includegraphics[width=0.32\textwidth]{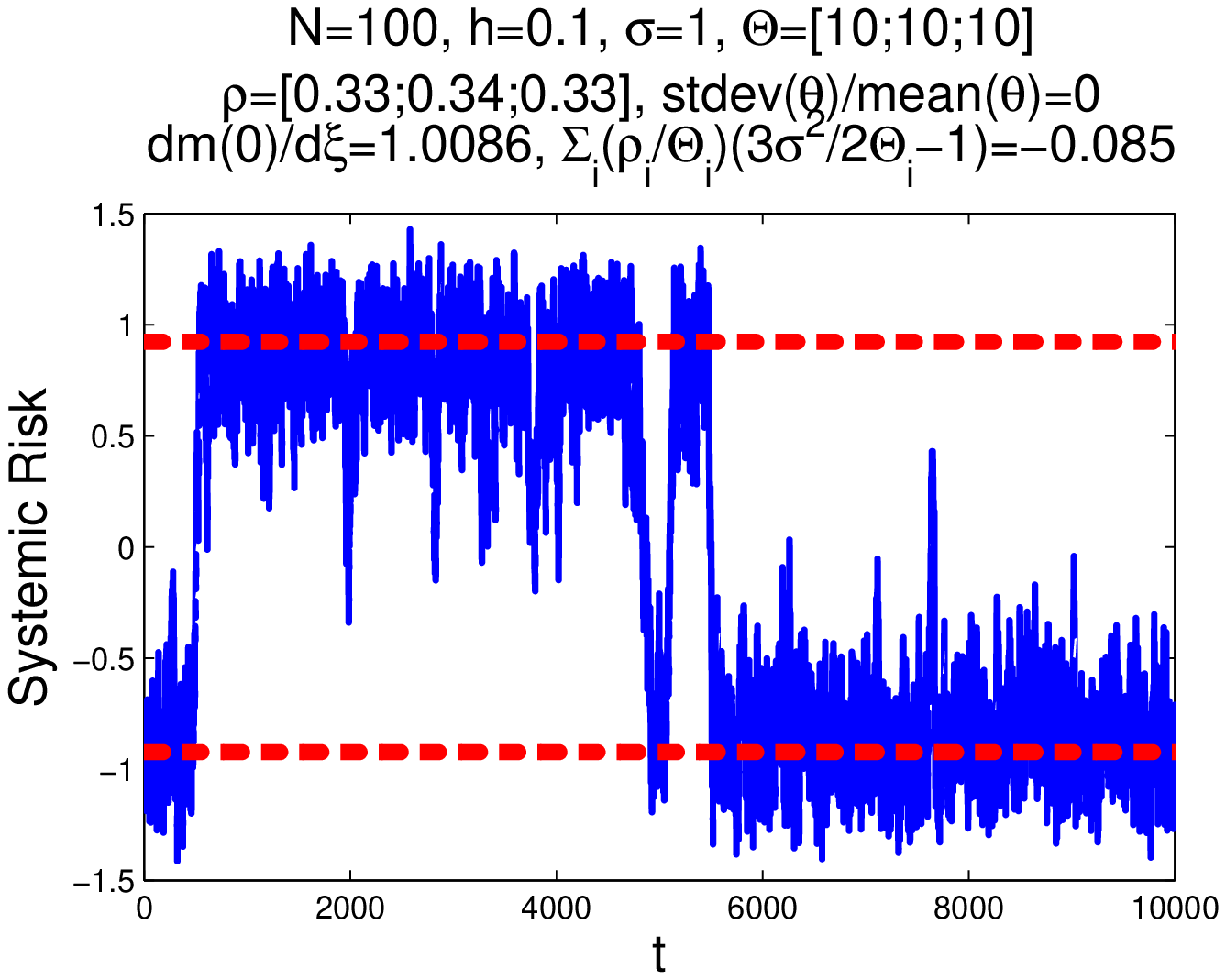}
	\includegraphics[width=0.32\textwidth]{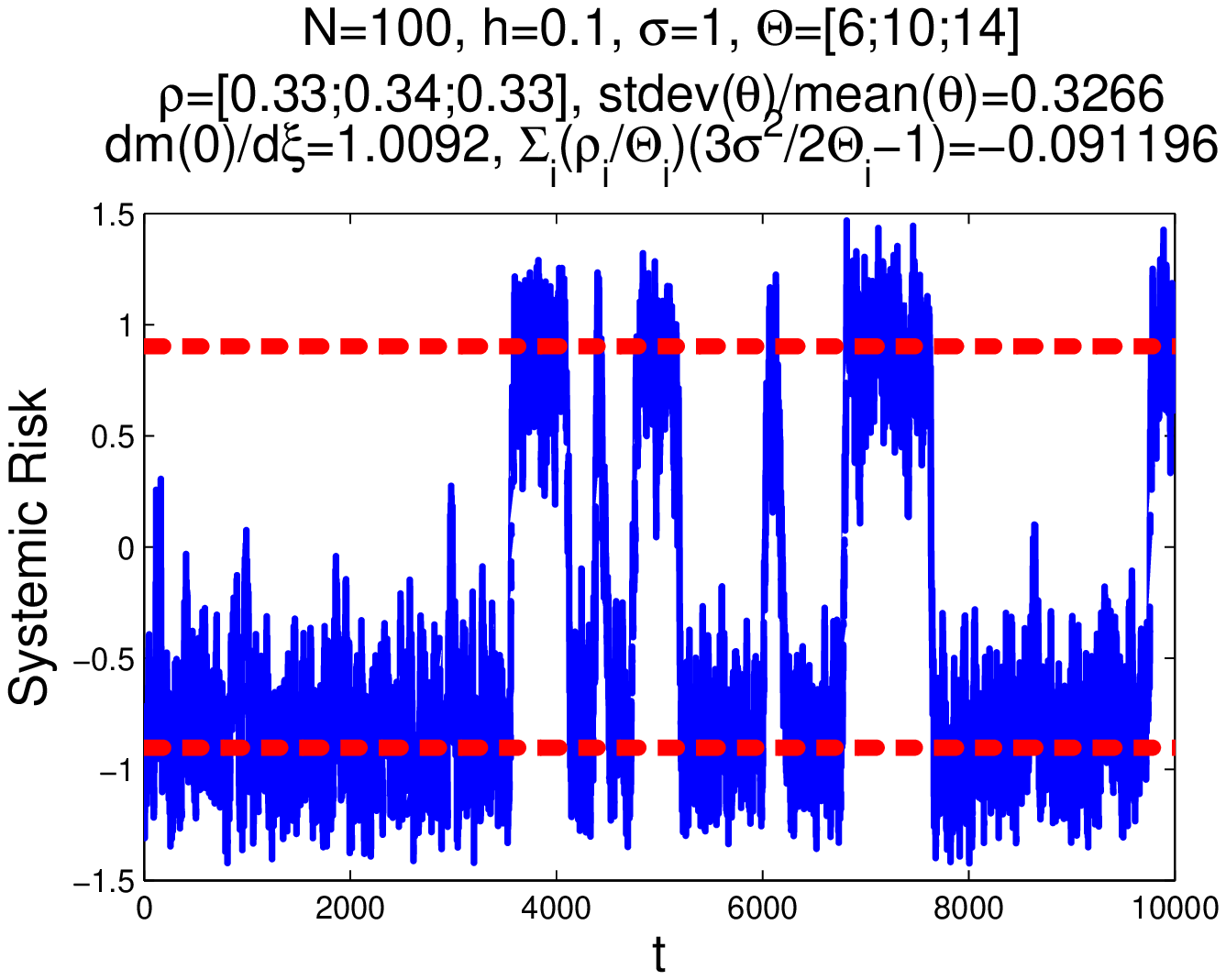}
	\includegraphics[width=0.32\textwidth]{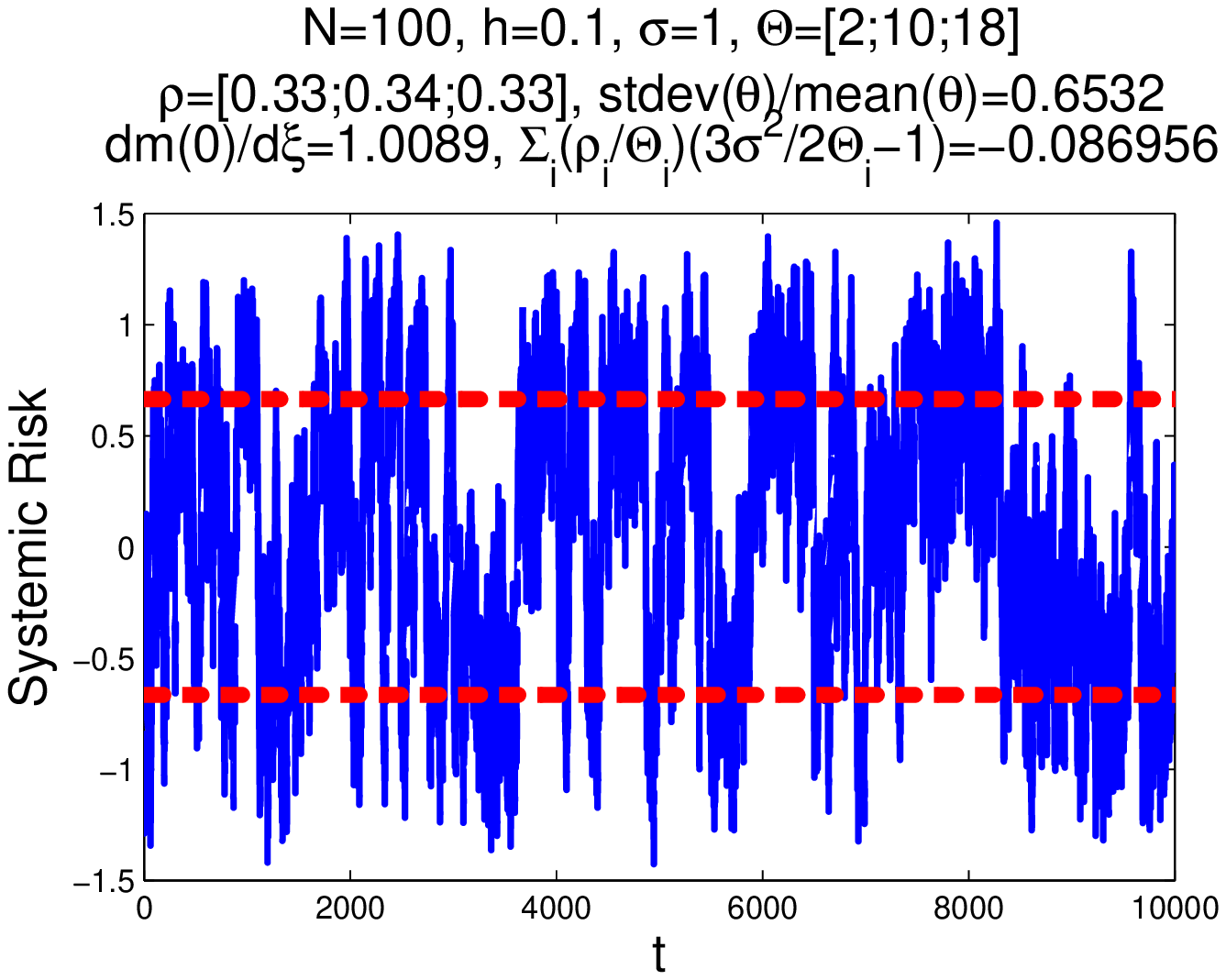}
	\caption{
		\label{fig:The change of Theta, diversity case}
		The effect of changes in $\Theta_l$. The median of the diversity values is fixed but the low and high 
		sensitivities are changed to adjust the level of diversity of $\theta_j$ while $\rho_l$ 
		and the mean of $\theta_j$ are the same. Increasing diversity tends to destabilize 
		the system.}
\end{figure}

\begin{figure}
	\includegraphics[width=0.32\textwidth]{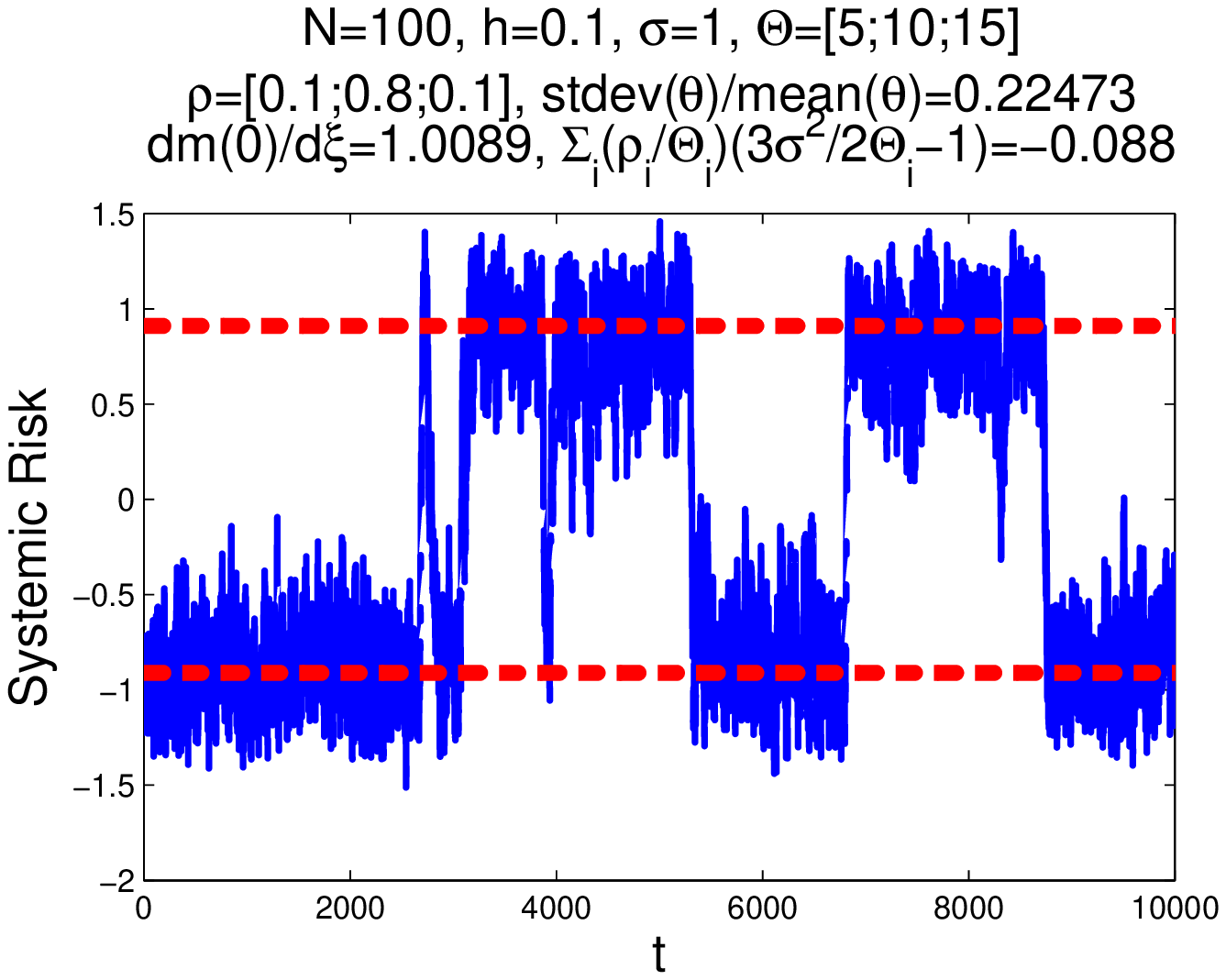}
	\includegraphics[width=0.32\textwidth]{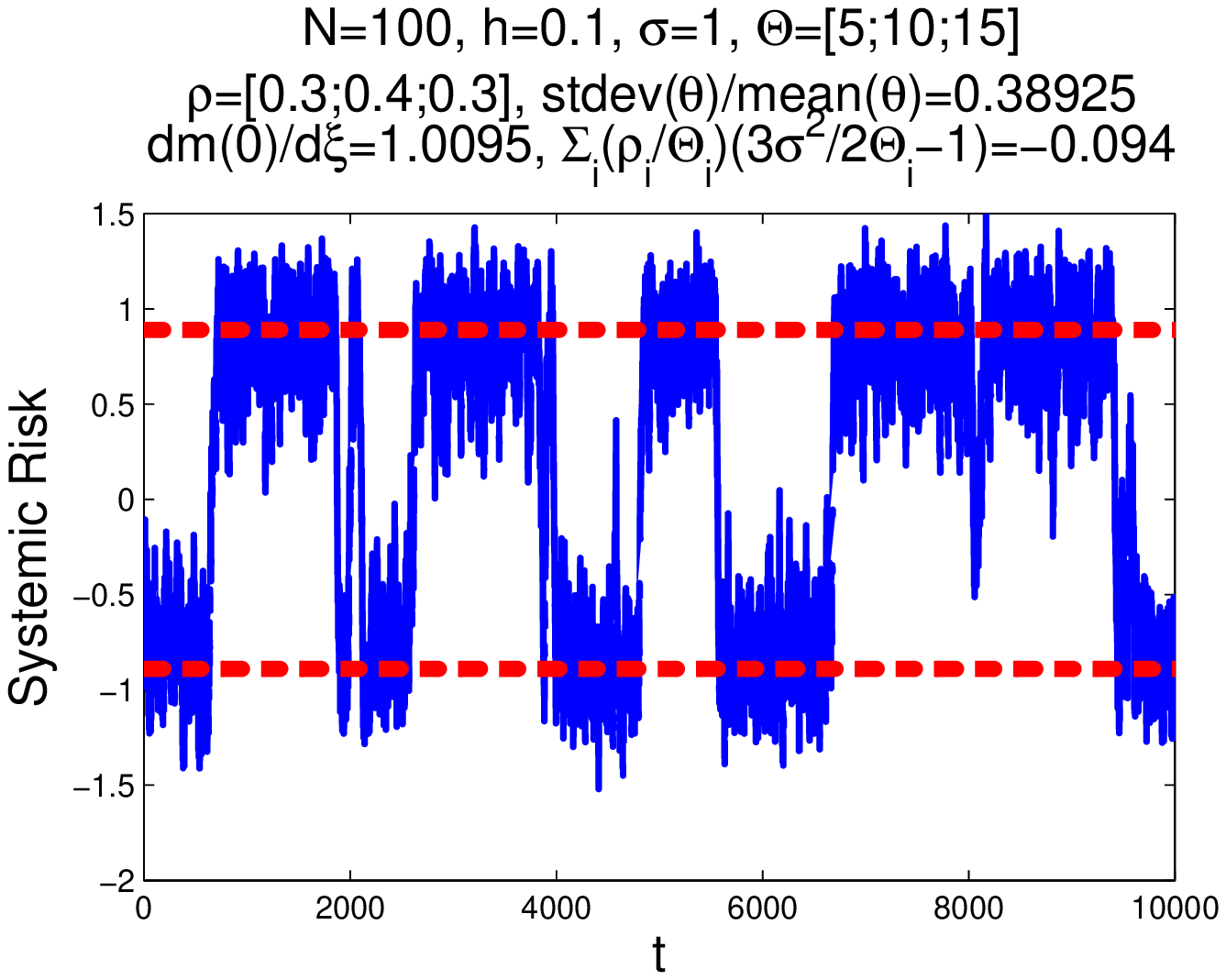}
	\includegraphics[width=0.32\textwidth]{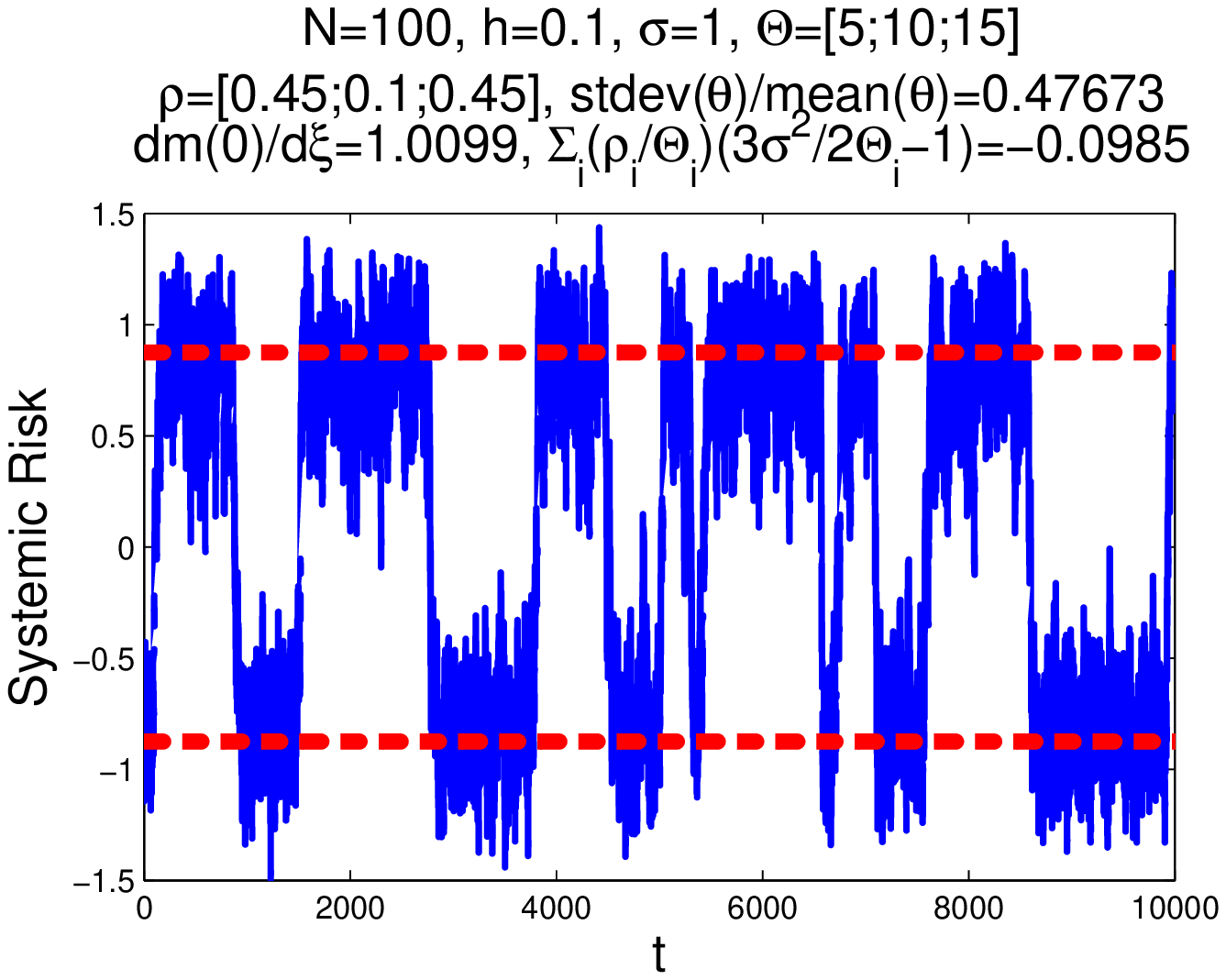}
	\caption{
		\label{fig:The change of rho, diversity case}
		The effect of changes in $\rho_l$, with $\Theta_l$ and the mean of $\theta_j$ fixed. 
		Increasing diversity tends to destabilize the system.}
\end{figure}

We also change the diversity of $\theta_j$ by changing $\Theta_l$ and $\rho_l$. To compare with the 
homogeneous case, in Figure \ref{fig:The change of Theta, diversity case} and Figure 
\ref{fig:The change of rho, diversity case} we change the standard deviation of $\theta_j$ while the mean of 
$\theta_j$ is fixed. In this most interesting part of the simulations we see that when we increase the 
standard deviation of diversity values, the number of transitions is notably larger than that in the 
homogeneous case.
\section{Large Deviations}
\label{sec:large deviations}

In the previous two sections we saw both analytically and numerically
that for large $N$, the empirical density $X_N(t,dy)$ is close (weakly, in probability) to
the solution of the Fokker-Planck equation (\ref{eq: full Fokker-Planck eqn}),
and so the mean $\bar{x}(t)$ in (\ref{eq:SDE of single component})
stays around the first order moment of the deterministic limit,
$\int_{-\infty}^{\infty} y u(t,y)dy$. 
If the condition of existence of two equilibria is satisfied, 
then $\bar{x}(t)$ will remain close to either $-\xi_b$ or $+\xi_b$
for relatively long time intervals, depending in particular on the parameter $h$.
However, as long as $N<\infty$, as we have seen in the simulations 
the random forcing by the Brownian motions $\{w_j(t)\}_{j=1}^N$ will cause transitions 
with non-zero probability.
A systemic transition is the event that $\bar{x}(t)$
is displaced from $\pm \xi_b$ to $\mp \xi_b$ within a finite time horizon.
Thus, systemic transition means that a large number of agents transition in a finite time. 
In this paper, we are interested in computing the probability of such a systemic transition. 
Mathematically, given a finite time horizon $[0,T]$ 
and the conditions for existence of two equilibria, we want to compute the probability
\begin{equation}
	\label{eq:probability of phase transition}
	\mathbf{P}(\bar{x}(0)=-\xi_b, \bar{x}(T)=\xi_b)
\end{equation}
when $N$ is large and as a function of the parameters $(h,\theta,\sigma)$ in (\ref{eq:SDE of single component}).

\subsection{Large Deviations of Mean-fields}

According to \cite{Dawson1987}, we can calculate this probability asymptotically for large $N$ 
using large deviations. To state the large deviations theory that we will use, 
we will review briefly some notation and terminology  from \cite{Dawson1987}.
\begin{itemize}
	\item $M_1(\mathbb{R})$ is the space of probability measures on 
	$\mathbb{R}$ with the Prohorov metric $\rho$, associated with weak convergence.
	
	\item $C([0,T],M_1(\mathbb{R}))$ is the space of continuous functions from $[0,T]$ 
	to $M_1(\mathbb{R})$ with the metric $\sup_{0\leq t\leq T}\rho(\phi_1(t),\phi_2(t))$.

	\item $M_R(\mathbb{R})=\{\mu\in M_1(\mathbb{R}),\int\varphi(y)\mu(dy)\leq R\}$, 
	where $\varphi \in C^2(\mathbb{R})$ is a nonnegative function with 
	$\lim_{|x|\to\infty}\varphi(x)=\infty$. From \cite{Dawson1987}, if $U(y)=y^3-y$, we 
	can choose $\varphi(y)=1+y^2+\gamma y^4$, $0\leq\gamma\leq h/2$.
		
	\item $M_\infty(\mathbb{R})=\cup_{R>0}M_R(\mathbb{R})
	=\{\mu\in M_1(\mathbb{R}), \int\varphi(y)\mu(dy)<\infty\}$ endowed with the 
	inductive topology: $\mu_n\rightarrow\mu$ in $M_\infty(\mathbb{R})$ if and only if 
	$\mu_n\rightarrow\mu$ in $M_1(\mathbb{R})$ and 	
	$\sup_n\int\varphi(y)\mu_n(dy)<\infty$. 
	
	\item $C([0,T],M_\infty(\mathbb{R}))$ is the space of continuous functions from 
	$[0,T]$ to $M_\infty(\mathbb{R})$ endowed with the topology: 
	$\phi_n(\cdot)\rightarrow\phi(\cdot)$ in $C([0,T],M_\infty(\mathbb{R}))$ 
	if and only if $\phi_n(\cdot)\rightarrow\phi(\cdot)$ in $C([0,T],M_1(\mathbb{R}))$ 
	and $\sup_{0\leq t\leq T}\sup_n\int\varphi(y)\phi_n(t,dy)<\infty$. 

	\item Given $\nu\in M_\infty(\mathbb{R})$, we let
	$\mathcal{E}^\nu=\{\phi\in C([0,T],M_\infty(\mathbb{R})): \phi(0)=\nu\}$,
	endowed with the relative topology.
\end{itemize}

To simplify the notation, we rewrite (\ref{eq: full Fokker-Planck eqn}) as 
$u_t = \mathcal{L}_u^* u + h\mathcal{M}^* u$, where 
\[
	\mathcal{L}_\psi^* \phi = \frac{1}{2}\sigma^2 \phi_{yy} 
	+\theta\frac{\partial}{\partial y}\left\{\left[y-\int y\psi(t,y)dy\right]\phi\right\},
	\quad
	\mathcal{M}^*\phi =	 \frac{\partial}{\partial y}\left[U(y)\phi\right].
\]

\begin{theorem}
	\label{thm: Dawson and Gartner, general LD}
	(Dawson and G\"{a}rtner, 1987) Given a finite horizon $\left[0,T\right]$, 
	$\nu\in M_\infty(\mathbb{R})$ and $A \subseteq \mathcal{E}^\nu$, 
	if $X_N(0)=\frac{1}{N}\sum_{j=1}^N\delta_{x_j(0)}\rightarrow \nu$ in  
	$M_\infty(\mathbb{R})$ as $N \rightarrow \infty$, then the law of 
	$X_N(t)=\frac{1}{N}\sum_{j=1}^N\delta_{x_j(t)}$ satisfies the large deviation 
	principle with the good rate function $I_h$:
	\begin{align*}
		-\inf_{\phi \in \mathring{A}}I_h(\phi) 
		& \leq \liminf_{N\rightarrow\infty}\frac{1}{N}\log\mathbf{P}(X_N\in A)\\
		& \leq \limsup_{N\rightarrow\infty}\frac{1}{N}\log\mathbf{P}(X_N\in A)
		\leq - \inf_{\phi \in \bar{A}}I_h(\phi),
	\end{align*}
	where $\mathring{A}$ and $\bar{A}$ are the interior and closure of $A$ in 
	$\mathcal{E}^\nu$, respectively, and 
	\begin{equation}
		\label{def:Ih}
		I_h(\phi) = \frac{1}{2\sigma^2} \int_0^T 
		\sup_{f:\langle \phi, f_y^2 \rangle \neq 0} J_h(\phi,f)dt,
	\end{equation}
	\[
		J_h(\phi,f) = \langle \phi_t - \mathcal{L}_\phi^*\phi 
		- h\mathcal{M}^*\phi, f\rangle^2 / \langle \phi, f_y^2 \rangle, \quad
		\langle \phi,f\rangle = \int_{-\infty}^\infty f(y)\phi(dy),
	\]
	if $\phi(t)$ is absolutely continuous in $t\in[0,T]$ and $I_h(\phi)=\infty$ otherwise.
\end{theorem}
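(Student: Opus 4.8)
The plan is to follow the Sanov-plus-Girsanov strategy underlying the work of Dawson and G\"artner: decouple the particles by a change of measure, transport an i.i.d.\ large deviation principle through the resulting exponential tilt, and finally contract from path space down to the flow of one-time marginals. \textbf{Step 1 (decoupling and Girsanov).} Let $P_N$ be the law on $C([0,T],\mathbb{R})^N$ of the interacting system (\ref{eq:SDE of single component}), and let $Q_N = Q_0^{\otimes N}$, where $Q_0$ is the law of the single decoupled diffusion $dy = -hU(y)\,dt + \sigma\,dw$ started from the limiting initial law $\nu$. Girsanov's theorem expresses $dP_N/dQ_N$, up to corrections negligible on the scale $e^{cN}$, as $\exp(N\,\Psi(L_N))$, where $L_N := \frac1N\sum_{j=1}^N\delta_{x_j(\cdot)}$ is the path-space empirical measure and $\Psi$ is the functional generated by the mean-reversion drift $\theta(\bar x(s)-x_j(s))$; because this drift is affine in $x_j$, the It\^o integral in the exponent can be rewritten, via integration by parts in time, as a continuous bounded functional of $L_N$ together with a quadratic-variation term that is again continuous in $L_N$.

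\textbf{Step 2 (LDP on path space and exponential tightness).} Under $Q_N$ the coordinates are i.i.d., so Sanov's theorem gives an LDP for $L_N$ in $M_1(C([0,T],\mathbb{R}))$ with good rate function the relative entropy $H(\,\cdot\,\|\,Q_0)$. Combining this with the tilt $e^{N\Psi(L_N)}$ via Varadhan's lemma yields, under $P_N$, an LDP for $L_N$ with rate function $\widetilde I(\mu) = H(\mu\|\,Q_0) - \Psi(\mu) + (\text{const})$, which one identifies with $H(\mu\|\,Q_\mu)$, the entropy of $\mu$ relative to the law $Q_\mu$ of the McKean--Vlasov diffusion driven by the marginal flow of $\mu$. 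To obtain the statement in the spaces $M_\infty(\mathbb{R})$ one upgrades this to exponential tightness in the inductive topology: here is precisely where the auxiliary function $\varphi(y)=1+y^2+\gamma y^4$, $0\le\gamma\le h/2$, enters. Applying It\^o to $\langle X_N(t),\varphi\rangle$, the cubic confinement term $-hU(y)\varphi'(y)$ dominates the mean-reversion and diffusion contributions for large $|y|$, so $\langle\phi,\varphi\rangle$-sublevel sets are exponentially tight and $\sup_{0\le t\le T}\langle X_N(t),\varphi\rangle$ stays bounded with probability super-exponentially close to $1$.

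\textbf{Step 3 (contraction and identification of $I_h$).} The map $\mu\mapsto\big(t\mapsto \mu\circ(\mathrm{ev}_t)^{-1}\big)$ sending a path measure to its flow of one-time marginals is continuous from $M_1(C([0,T],\mathbb{R}))$ into $C([0,T],M_\infty(\mathbb{R}))$ on the relevant sublevel sets, so the contraction principle gives the LDP for $X_N$ on $\mathcal{E}^\nu$ with $I_h(\phi) = \inf\{\widetilde I(\mu) : \mu\text{ has marginal flow }\phi\}$. It remains to evaluate this infimum. Writing the optimal Markovian $\mu$ with drift $-hU(y) + \theta(\langle\phi(t),\mathrm{id}\rangle - y) + c(t,y)$, the marginal constraint becomes, against test functions $f$, $\langle \phi_t - \mathcal{L}_\phi^*\phi - h\mathcal{M}^*\phi,\, f\rangle = -\langle \phi,\, c\, f_y\rangle$, while the entropy cost is $\frac{1}{2\sigma^2}\int_0^T\langle\phi(t),c(t,\cdot)^2\rangle\,dt$. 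Minimizing $\langle\phi,c^2\rangle$ subject to this constraint is a Cauchy--Schwarz (Legendre) duality: $\langle\phi,c^2\rangle \ge \langle\phi_t - \mathcal{L}_\phi^*\phi - h\mathcal{M}^*\phi, f\rangle^2/\langle\phi,f_y^2\rangle$ for every $f$, with equality when $c$ is aligned with $f_y$ for the optimizing $f^\ast$; taking the supremum over $f$ reproduces exactly $J_h(\phi,f)$ and the formula (\ref{def:Ih}), with $I_h(\phi)=\infty$ unless $\phi$ is absolutely continuous in $t$ with finite $H^{-1}$-type cost. Matching lower and upper bounds, with interior and closure taken in $\mathcal{E}^\nu$, finishes the proof.

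\textbf{Main obstacle.} The delicate points are the exponential tightness in the inductive-limit topology on $C([0,T],M_\infty(\mathbb{R}))$ — which genuinely requires the super-linear dissipation of the bistable drift, quantified through $\varphi$, and a uniform-in-time control stronger than ordinary tightness — together with the rigorous verification that the Girsanov exponent is, modulo negligible corrections, a bounded continuous functional of $L_N$, including the careful treatment of the stochastic-integral term. By contrast, the final variational identification of $I_h$ is essentially the duality computation sketched above, once the admissible class of trajectories $\phi$ has been pinned down.
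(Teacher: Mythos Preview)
The paper does not contain a proof of this theorem: it is stated with the attribution ``(Dawson and G\"artner, 1987)'' and cited from \cite{Dawson1987}, followed only by a clarifying remark on the meaning of absolute continuity and the distributional interpretation of the operators. There is therefore no in-paper proof to compare your proposal against.

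That said, your sketch is a coherent outline of one standard route to mean-field LDPs --- the Sanov-plus-Girsanov-plus-contraction strategy --- and the variational identification in your Step~3 is exactly the computation the paper later exploits in Proposition~\ref{prop:I_h in terms of the driving force}. Two caveats are worth flagging. First, the original Dawson--G\"artner argument in \cite{Dawson1987} does not proceed literally via Sanov and Varadhan; they establish upper and lower bounds more directly, building the rate function through its dual (Legendre) representation and handling exponential tightness via the Lyapunov function $\varphi$ you mention. The Sanov/Varadhan packaging you describe is closer to later treatments. Second, the step you correctly identify as the main obstacle --- showing that the Girsanov exponent is, up to negligible error, a continuous functional of the path-space empirical measure --- is genuinely nontrivial here because the interaction drift is unbounded and the It\^o integral against $dw_j$ does not immediately factor through $L_N$; one typically needs a cutoff argument or a careful rewriting of the stochastic integral, and this is precisely where the polynomial growth of $U$ and the choice of $\varphi$ do real work.
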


\textbf{Remark.} Here for $\phi \in {\cal E}^\nu$ and $t\in[0,T]$, $\phi(t)$ is viewed as a real Schwartz distribution 
on $\mathbb{R}$, $\mathcal{L}_\psi^*$ and $\mathcal{M}^*\phi$ are differential operators in the distribution 
sense, and $f$ in (\ref{def:Ih}) is a real Schwartz test function. The definition of absolute continuity for 
the path of measures $(\phi(t))_{t \in [0,T]}$ is in the sense of Definition 4.1 in \cite{Dawson1987}, that 
is to say: for each compact set $K \subset \mathbb{R}$ there exists a neighborhood $U_K$ of the null 
function in the set of test functions with compact support in $K$ and an absolutely continuous function 
$H_K$ from $[0,T]$ to $\mathbb{R}$ such that $| \langle \phi(t),f\rangle -	\langle \phi(s),f\rangle | \leq 
|H_K(t)-H_K(s)|$ for all $s,t \in [0,T]$ and $f\in U_K$. Note that by Lemma 4.2 in \cite{Dawson1987}, if 
$\phi(t)$ is absolutely continuous in $t\in[0,T]$, $\phi_t(t)$ exists in the distribution sense almost 
everywhere on $t\in[0,T]$.

In order to use Theorem \ref{thm: Dawson and Gartner, general LD}, we let 
$\nu = u^e_{-\xi_b}$ in (\ref{eq:equilibrium of Fokker-Planck eq}) 
and define the rare event $A$ of systemic transition by
\begin{equation}
	\label{def: the set of phase transition}
	A = \left\{ \phi\in\mathcal{E}^\nu: \phi(T)=u^e_{\xi_b} \right\}.
\end{equation}
However, since $\mathring{A}$ is an empty set, 
Theorem \ref{thm: Dawson and Gartner, general LD} give a  trivial lower bound 
for the probability in question. Therefore we consider instead the closed rare event $A_\delta$:
\[
	A_\delta=\left\{\phi\in\mathcal{E}^\nu: \rho(\phi(T),u^e_{\xi_b})\leq\delta\right\}.
\]
Then Theorem \ref{thm: Dawson and Gartner, general LD} implies that 
\begin{align*}
	-\inf_{\phi \in \mathring{A}_\delta}I_h(\phi) 
	& \leq \liminf_{N\rightarrow\infty}\frac{1}{N}\log\mathbf{P}(X_N\in A_\delta)\\
	& \leq \limsup_{N\rightarrow\infty}\frac{1}{N}\log\mathbf{P}(X_N\in A_\delta)
	\leq - \inf_{\phi \in A_\delta}I_h(\phi).
\end{align*}
In addition, we show that $\inf_{\phi \in A_\delta}I_h(\phi)$ can be bounded from below 
by $\inf_{\phi \in A}I_h(\phi)$ as $\delta \rightarrow 0$.
\begin{lemma}
	\label{lma:lower bound of I_h over A_delta}
	By definition $\inf_{\phi\in A_\delta}I_h(\phi)$ is decreasing with $\delta>0$ and 
	bounded from above by $\inf_{\phi\in A}I_h(\phi)$. In addition,
	\[
		\lim_{\delta\to 0}\inf_{\phi\in A_\delta}I_h(\phi)\geq\inf_{\phi\in A}I_h(\phi).
	\]
\end{lemma}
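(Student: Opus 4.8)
The plan is to argue by a standard "diagonal" or compactness argument in the path space $\mathcal{E}^\nu$. The monotonicity and upper bound are immediate: if $\delta_1 < \delta_2$ then $A_{\delta_1} \subseteq A_{\delta_2}$, so $\inf_{A_{\delta_1}} I_h \geq \inf_{A_{\delta_2}} I_h$; and $A \subseteq A_\delta$ for every $\delta > 0$, so $\inf_{A_\delta} I_h \leq \inf_A I_h$. Hence the limit $L := \lim_{\delta \to 0}\inf_{\phi \in A_\delta} I_h(\phi)$ exists and satisfies $L \leq \inf_A I_h$. The content of the lemma is the reverse inequality $L \geq \inf_A I_h$.

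To prove $L \geq \inf_A I_h$, first I would dispose of the trivial case $L = \infty$. Otherwise, choose a sequence $\delta_n \downarrow 0$ and paths $\phi_n \in A_{\delta_n}$ with $I_h(\phi_n) \leq L + 1/n$; in particular $\sup_n I_h(\phi_n) < \infty$. Since $I_h$ is a good rate function (Theorem \ref{thm: Dawson and Gartner, general LD}), its sublevel set $\{\phi : I_h(\phi) \leq L+1\}$ is compact in $\mathcal{E}^\nu$, so along a subsequence (not relabeled) $\phi_n \to \phi_*$ in $\mathcal{E}^\nu$. By lower semicontinuity of $I_h$ on $\mathcal{E}^\nu$ (again part of the good-rate-function property), $I_h(\phi_*) \leq \liminf_n I_h(\phi_n) \leq L$. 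It remains to check that $\phi_* \in A$, i.e. $\phi_*(0) = \nu$ and $\phi_*(T) = u^e_{\xi_b}$. The first holds because every $\phi_n$ (hence the limit, by continuity of evaluation at a fixed time in the topology of $C([0,T],M_\infty(\mathbb{R}))$) lies in $\mathcal{E}^\nu$. For the second: convergence in $C([0,T],M_\infty(\mathbb{R}))$ implies $\phi_n(T) \to \phi_*(T)$ in $M_1(\mathbb{R})$, hence $\rho(\phi_n(T), \phi_*(T)) \to 0$; combined with $\rho(\phi_n(T), u^e_{\xi_b}) \leq \delta_n \to 0$ and the triangle inequality for the Prohorov metric, we get $\rho(\phi_*(T), u^e_{\xi_b}) = 0$, i.e. $\phi_*(T) = u^e_{\xi_b}$. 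Therefore $\phi_* \in A$ and $\inf_A I_h \leq I_h(\phi_*) \leq L$, which is the desired inequality.

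The main obstacle is making sure the compactness step is applied in the right space: $I_h$ is a good rate function for the large deviation principle on $\mathcal{E}^\nu = \{\phi \in C([0,T],M_\infty(\mathbb{R})) : \phi(0)=\nu\}$ with its (inductive) topology, so its sublevel sets are compact there, and the evaluation maps $\phi \mapsto \phi(t)$ are continuous from $\mathcal{E}^\nu$ into $M_\infty(\mathbb{R})$ and a fortiori into $(M_1(\mathbb{R}),\rho)$. One should also note that the sublevel set really is nonempty for the argument to have content — but if it were empty then $\inf_{A_{\delta_n}} I_h = \infty$ for all large $n$ (there being no path of finite rate in $A_{\delta_n}$... ), contradicting $L < \infty$; more simply, the $\phi_n$ exhibited above are themselves elements of the sublevel set. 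With these identifications in place the remaining steps are routine properties of lower semicontinuous functions on metric (or metrizable inductive-limit) spaces and of the Prohorov metric.
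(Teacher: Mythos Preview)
Your argument is correct and is essentially the same as the paper's: pick near-minimizers $\phi_n\in A_{1/n}$, use that $I_h$ is a good rate function to extract a convergent subsequence with limit $\phi^*\in A$, and conclude by lower semicontinuity. The one point the paper makes more explicit is the passage from compactness of sublevel sets to \emph{sequential} compactness in the inductive topology on $C([0,T],M_\infty(\mathbb{R}))$, for which it invokes Proposition~B.13 of G\"artner~\cite{Gartner1988}; you rightly flag this as the main obstacle, and that citation is what closes it.
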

\begin{proof}
	See Appendix \ref{pf:lower bound of I_h over A_delta}.
\end{proof}

Combining Lemma \ref{lma:lower bound of I_h over A_delta} and the fact that 
$\inf_{\phi \in \mathring{A}_\delta}I_h(\phi) \leq \inf_{\phi \in A}I_h(\phi)$, for any 
$\epsilon>0$, we have for sufficiently small $\delta>0$ 
\begin{align*}
	-\inf_{\phi \in A}I_h(\phi)
	& \leq \liminf_{N\rightarrow\infty}\frac{1}{N}\log\mathbf{P}(X_N\in A_\delta)\\
	& \leq \limsup_{N\rightarrow\infty}\frac{1}{N}\log\mathbf{P}(X_N\in A_\delta)
	\leq - \inf_{\phi \in A}I_h(\phi) + \epsilon.
\end{align*}
Therefore for large $N$ and sufficiently small $\delta$,
\begin{equation}
	\label{eq:probability of phase transition, LD form}
	\mathbf{P}(X_N\in A_\delta) \approx \exp\left(-N\inf_{\phi\in A}I_h(\phi)\right).
\end{equation}
This tells us that a larger system has a more stable empirical mean
trajectory, which is consistent with what we have seen in the numerical
simulation. Now the main step is finding $\inf_{\phi\in A}I_h(\phi)$,
which is a min-max variational problem
\begin{equation}
	\label{eq:general LD}
	\inf_{\phi \in A} I_h(\phi)
	= \inf_{\phi \in A} \frac{1}{2\sigma^2} \int_0^T 
	\sup_{f:\langle \phi, f_y^2 \rangle \neq 0}
	\langle \phi_t - \mathcal{L}_\phi^*\phi - h\mathcal{M}^*\phi, f\rangle^2
	/ \langle \phi, f_y^2 \rangle dt,
\end{equation}
where the $f$ in the $\sup$ is a  real Schwartz test function.

\subsection{An Alternative Expression for the Rate Function}

The representation of the rate function (\ref{def:Ih}) is somewhat complicated, but we 
can simplify it if $\phi$ has the density with some additional properties.
If $\phi$ is a density function such that $\phi(t,y)$ is smooth, rapidly decreasing in 
$y\in\mathbb{R}$ for each $t\in[0,T]$ and is absolutely continuous in $t\in[0,T]$ for 
each $y\in\mathbb{R}$, then let $g(t,y)$ satisfy
\begin{equation}
	\label{eq:Fokker-Planck equation with the driving force}
	\phi_t - \mathcal{L}_{\phi}^{*}\phi - h\mathcal{M}^{*}\phi = (\phi g)_y.
\end{equation}
Note that because of the properties of $\phi$, the left hand side of 
(\ref{eq:Fokker-Planck equation with the driving force}) is well-defined in 
$y\in\mathbb{R}$ and almost everywhere in $t\in[0,T]$. In addition, because $\phi$ is 
positive valued, $g$ exists and is unique except on a measure zero set in $[0,T]$.

%

Note that for the pair $(\phi,g)$ satisfying 
(\ref{eq:Fokker-Planck equation with the driving force})
\[
	\sup_{f:\langle \phi(t), f_y^2 \rangle \neq 0} J_h(\phi(t),f)
	= \sup_{f:\langle \phi(t), f_y^2 \rangle \neq 0} 
	\langle\phi(t),f_y g\rangle^2/\langle\phi(t),f_y^2\rangle
	= \langle \phi(t), g^2\rangle,
\]
and therefore we have the following proposition.

\begin{proposition}
	\label{prop:I_h in terms of the driving force}
	If $\phi$ is a density function such that $\phi(t)$ is a Schwartz function for each 
	$t\in[0,T]$ and is absolutely continuous in $t\in[0,T]$ for each $y\in\mathbb{R}$, 
	and $g(t,y)$ satisfies (\ref{eq:Fokker-Planck equation with the driving force}), the 
	rate function $I_h(\phi)$ in (\ref{def:Ih}) can be written in the form
	\begin{equation}
		\label{eq:I_h in terms of the driving force}
		I_h(\phi) = \frac{1}{2\sigma^2} \int_0^T \langle \phi, g^2\rangle dt.
	\end{equation}
\end{proposition}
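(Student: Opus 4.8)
The proposition follows by computing the inner supremum in the definition (\ref{def:Ih}) of $I_h(\phi)$ pointwise in $t$, using the representation (\ref{eq:Fokker-Planck equation with the driving force}) of the ``driving force'' $g$. The key observation is already recorded in the displayed chain of equalities just before the statement: for a pair $(\phi,g)$ satisfying (\ref{eq:Fokker-Planck equation with the driving force}), one has, for each fixed $t$ with $\langle\phi(t),f_y^2\rangle\neq 0$,
\[
	J_h(\phi(t),f)
	= \frac{\langle \phi_t-\mathcal{L}_\phi^*\phi-h\mathcal{M}^*\phi,\, f\rangle^2}{\langle\phi(t),f_y^2\rangle}
	= \frac{\langle (\phi g)_y,\, f\rangle^2}{\langle\phi(t),f_y^2\rangle}
	= \frac{\langle \phi(t),\, f_y g\rangle^2}{\langle\phi(t),f_y^2\rangle},
\]
where the last step is integration by parts, legitimate because $\phi(t)$ is a Schwartz density and $f$ has compact support (or is itself Schwartz), so the boundary terms vanish. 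Thus the task reduces to showing $\sup_f \langle\phi(t),f_y g\rangle^2/\langle\phi(t),f_y^2\rangle = \langle\phi(t),g^2\rangle$, after which one integrates over $t\in[0,T]$ and multiplies by $1/(2\sigma^2)$ to obtain (\ref{eq:I_h in terms of the driving force}).

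**The supremum step.** To evaluate the supremum, I would argue as follows. By Cauchy--Schwarz in $L^2(\phi(t,y)\,dy)$ applied to the functions $f_y$ and $g$,
\[
	\langle\phi(t), f_y g\rangle^2 \leq \langle\phi(t),f_y^2\rangle \,\langle\phi(t),g^2\rangle,
\]
which gives the upper bound $\sup_f J_h(\phi(t),f) \leq \langle\phi(t),g^2\rangle$ immediately. For the matching lower bound I need a sequence of admissible test functions $f$ whose derivatives $f_y$ approximate $g$ in $L^2(\phi(t,y)\,dy)$; choosing $f_y \approx g$ makes the ratio tend to $\langle\phi(t),g^2\rangle$. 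Concretely, since $g(t,\cdot)\in L^2(\phi(t,y)dy)$, one can find smooth compactly supported functions $\chi_n$ with $\chi_n\to g$ in that weighted $L^2$ space, and then take $f_n(y)=\int_{-\infty}^y \chi_n(z)\,dz$ — but this $f_n$ need not decay, so to keep $f_n$ within the admissible class of Schwartz (or compactly supported) test functions one multiplies by a slowly varying cutoff and checks the error terms are negligible. This density/approximation argument is the one genuinely technical point.

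**Integrating and assembling.** Once the pointwise identity $\sup_f J_h(\phi(t),f)=\langle\phi(t),g^2\rangle$ is established for a.e.\ $t\in[0,T]$, I substitute into (\ref{def:Ih}):
\[
	I_h(\phi) = \frac{1}{2\sigma^2}\int_0^T \sup_{f:\langle\phi(t),f_y^2\rangle\neq 0} J_h(\phi(t),f)\,dt
	= \frac{1}{2\sigma^2}\int_0^T \langle\phi(t),g^2\rangle\,dt,
\]
which is exactly (\ref{eq:I_h in terms of the driving force}). I should also note, for completeness, that the smoothness and decay hypotheses on $\phi$ guarantee $\phi$ is absolutely continuous in $t$ in the sense of Definition 4.1 of \cite{Dawson1987}, so that $I_h(\phi)$ is indeed given by the integral formula (\ref{def:Ih}) rather than being $+\infty$, and that $\phi_t$ exists in the distribution sense a.e.\ in $t$ by Lemma 4.2 of \cite{Dawson1987}; this is what makes the left-hand side of (\ref{eq:Fokker-Planck equation with the driving force}) well defined, as remarked after that equation.

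**Main obstacle.** The only real obstacle is the lower-bound half of the supremum computation: producing admissible test functions $f$ — Schwartz or compactly supported, as required in Theorem \ref{thm: Dawson and Gartner, general LD} — with $f_y$ arbitrarily close to $g$ in $L^2(\phi(t,y)\,dy)$, uniformly enough in $t$ that one may pass the limit inside the time integral (e.g.\ via monotone/dominated convergence, using that $\langle\phi(t),g^2\rangle$ is integrable for paths of finite rate). The upper bound is a one-line Cauchy--Schwarz and the rest is bookkeeping with the integration by parts and the regularity conventions of \cite{Dawson1987}.
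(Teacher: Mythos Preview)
Your proposal is correct and follows essentially the same approach as the paper: the paper's entire argument is the displayed chain of equalities immediately preceding the proposition, which rewrites $J_h(\phi(t),f)$ as $\langle\phi(t),f_yg\rangle^2/\langle\phi(t),f_y^2\rangle$ via integration by parts and then asserts the supremum equals $\langle\phi(t),g^2\rangle$ (implicitly by Cauchy--Schwarz). You are in fact more careful than the paper about the lower-bound half---the paper simply states the equality without discussing the approximation of $g$ by derivatives of admissible test functions---so your identification of that step as the only genuine technical point is accurate, even if the paper does not spell it out.
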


We interpret (\ref{eq:Fokker-Planck equation with the driving force}) and 
(\ref{eq:I_h in terms of the driving force}) as follows. The function $g$ is regarded as 
the driving force making $\phi$ deviate from the solution of the Fokker-Planck equation 
(\ref{eq: full Fokker-Planck eqn}), and $I_h(\phi)$ is the $L^2(\phi)$ norm of $g$, 
which measures how difficult it is to have this deviation $\phi$. 

\section{Small $h$ Analysis}
\label{sec:small h}

The goal of this section is to analyze the min-max problem (\ref{eq:general LD}) which
controls the asymptotic systemic transition probability. 
This problem is nonlinear and infinitely dimensional and is 
difficult to analyze. To get some useful information about it
we will assume that $h$ is small and analyze it in this regime. 
We will first solve (\ref{eq:general LD}) when $h$ is exactly $0$, and 
then we will get rigorous upper and lower bounds for (\ref{eq:general LD}) when $h$ is nonzero 
but small. We will then compare the large deviations result with the local
fluctuation theory of a single agent so as to explain why interconnectedness destabilizes 
the system. 


\subsection{The $h=0$ and the Small $h$ Analysis}
\label{sec:small h analysis}

We note that when $h=0$, $u_{\pm\xi_b}^e = u_{\pm\xi_0}^e$, where
\begin{equation}
	u_{\pm\xi_0}^e(y) = \frac{1}{\sqrt{2\pi \frac{\sigma^2}{2\theta}}}
	\exp\left\{ -\frac{(y - (\pm\xi_0))^2}{2\frac{\sigma^2}{2\theta}} \right\},
	\quad\quad \xi_0 = \sqrt{1-3\frac{\sigma^2}{2\theta}}.
\end{equation}
In this case, (\ref{eq:general LD}) is solvable and the optimal path is a Gaussian, 
starting from $u_{-\xi_0}^e$ and ending in $u_{+\xi_0}^e$.
\begin{theorem}
	\label{thm:LD for zero h and Gaussian path}
	Let $h=0$ and define 
	\begin{equation}
		\label{def:pe}
		p^e(t,y) = \frac{1}{\sqrt{2\pi \frac{\sigma^2}{2\theta}}}
		\exp\left\{ -\frac{(y - a^e(t))^2}{2\frac{\sigma^2}{2\theta}} \right\}, \quad
		a^e(t)= \frac{2\xi_0}{T} t - \xi_0.
	\end{equation}
	Then $p^e \in A$ is the unique minimizer for (\ref{eq:general LD}) and
	\[
		\inf_{\phi \in A} I_0(\phi) = I_0(p^e) = \frac{2\xi_0^2}{\sigma^2 T}.
	\]
\end{theorem}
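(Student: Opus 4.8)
The plan is to bound $I_0$ from below on all of $A$ by projecting each path onto its first moment, then to check that the Gaussian path $p^e$ attains that bound, and finally to trace the equality cases to obtain uniqueness.

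\emph{Lower bound.} Let $\phi\in A$ be absolutely continuous (otherwise $I_0(\phi)=\infty$) and put $m_\phi(t)=\langle\phi(t),y\rangle$. Using the test function $f(y)=y$ in (\ref{def:Ih}) we have $\langle\phi(t),f_y^2\rangle=1$, while an integration by parts in the distribution sense kills both pieces of $\mathcal{L}_\phi^*\phi$: $\langle\frac{1}{2}\sigma^2\phi_{yy},y\rangle=0$ and $\langle\theta\partial_y\{(y-m_\phi)\phi\},y\rangle=-\theta\langle(y-m_\phi)\phi,1\rangle=0$. Hence $J_0(\phi,f)=\langle\phi_t,y\rangle^2=\dot m_\phi(t)^2$ for a.e. $t$, so $\sup_f J_0(\phi(t),f)\ge\dot m_\phi(t)^2$ and $I_0(\phi)\ge\frac{1}{2\sigma^2}\int_0^T\dot m_\phi(t)^2\,dt$. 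Since $\phi(0)=u^e_{-\xi_0}$ and $\phi(T)=u^e_{\xi_0}$ force $m_\phi(0)=-\xi_0$ and $m_\phi(T)=\xi_0$, the Cauchy--Schwarz inequality gives $\int_0^T\dot m_\phi^2\,dt\ge(2\xi_0)^2/T$, and therefore $I_0(\phi)\ge 2\xi_0^2/(\sigma^2 T)$.

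\emph{Attainment.} With $h=0$ the equilibrium density (\ref{eq:equilibrium of Fokker-Planck eq}) reduces (with $Z_\xi=1$) to the Gaussian of mean $\xi$ and variance $\sigma^2/2\theta$, so $u^e_{\mp\xi_0}$ coincide with $p^e(0)$ and $p^e(T)$ and $p^e\in A$. For each fixed $t$ the frozen operator $\mathcal{L}^*_{p^e(t)}$ is the Fokker--Planck operator of the Ornstein--Uhlenbeck process with mean-reversion rate $\theta$ centered at $\langle p^e(t),y\rangle=a^e(t)$, whose stationary density has variance $\sigma^2/2\theta$; since $p^e(t,\cdot)$ is exactly that stationary density, $\mathcal{L}^*_{p^e}p^e=0$. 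A direct differentiation gives $p^e_t=-\dot a^e\,\partial_y p^e=\partial_y(-\dot a^e p^e)$, so in the notation of (\ref{eq:Fokker-Planck equation with the driving force}) the driving force is the constant $g\equiv-\dot a^e=-2\xi_0/T$. Since $p^e(t,\cdot)$ is Schwartz and smoothly varying in $t$, Proposition \ref{prop:I_h in terms of the driving force} applies and yields $I_0(p^e)=\frac{1}{2\sigma^2}\int_0^T\langle p^e,g^2\rangle\,dt=\frac{1}{2\sigma^2}(2\xi_0/T)^2\,T=2\xi_0^2/(\sigma^2 T)$, matching the lower bound; hence $p^e$ is a minimizer and $\inf_{\phi\in A}I_0(\phi)=2\xi_0^2/(\sigma^2 T)$.

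\emph{Uniqueness.} If $\phi\in A$ is a minimizer, equality must hold in both estimates of the first step. Equality in Cauchy--Schwarz forces $\dot m_\phi\equiv 2\xi_0/T$, i.e. $m_\phi(t)=a^e(t)$. Equality $\sup_f J_0(\phi(t),f)=\dot m_\phi(t)^2$ for a.e. $t$ means $f(y)=y$ is optimal there; by the identity $\sup_f J_0(\phi(t),f)=\langle\phi(t),g(t,\cdot)^2\rangle$ recorded just before Proposition \ref{prop:I_h in terms of the driving force}, and since pairing (\ref{eq:Fokker-Planck equation with the driving force}) with $f(y)=y$ gives $\langle\phi(t),g(t,\cdot)\rangle=-\dot m_\phi(t)$, the equality $\langle\phi,g^2\rangle=\langle\phi,g\rangle^2$ forces $g(t,\cdot)$ to be $\phi(t)$-a.s. constant, equal to $-2\xi_0/T$. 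Then $\phi$ solves the \emph{linear} Fokker--Planck equation $\phi_t=\frac{1}{2}\sigma^2\phi_{yy}+\theta\partial_y\{(y-a^e(t))\phi\}-\frac{2\xi_0}{T}\partial_y\phi$ with Gaussian initial datum of mean $-\xi_0$ and variance $\sigma^2/2\theta$; uniqueness for this linear equation, together with the fact that $p^e$ solves it, gives $\phi=p^e$.

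\emph{Main obstacle.} The manipulations with $p^e$ are routine Gaussian algebra; the delicate points are (i) that $f(y)=y$ is not a Schwartz test function, so the lower bound really requires approximating it by $f_n(y)=y\chi_n(y)$ with smooth cutoffs and passing to the limit using the uniform moment bound $\sup_t\langle\phi(t),\varphi\rangle<\infty$ built into the $M_\infty(\mathbb{R})$ topology (this is where $\varphi\sim y^4$ enters), and (ii) justifying the $g$-representation and the linear-equation uniqueness for the minimizer, which requires it to be regular enough; here one leans on finiteness of $I_0(\phi)$ and the absolute-continuity notion of \cite{Dawson1987}.
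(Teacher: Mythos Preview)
Your proof is correct and follows essentially the same three-step structure as the paper's proof in Appendix~\ref{pf:LD for zero h and Gaussian path}: lower bound by testing with $f(y)=y$, attainment by direct computation of $g\equiv -\dot a^e$ for $p^e$, and uniqueness via a linear parabolic equation. The only noteworthy difference is in the uniqueness step. The paper argues that since $f=-\dot a^e\,y$ is a global maximizer of $J_0(\phi,\cdot)$, the first-order condition $\frac{d}{d\epsilon}J_0(\phi,-\dot a^e y+\epsilon\tilde f)\big|_{\epsilon=0}=0$ for all test functions $\tilde f$ directly yields the linear PDE $\phi_t=\tfrac12\sigma^2\phi_{yy}+\theta\partial_y[(y-a^e)\phi]-\dot a^e\phi_y$. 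You instead invoke the $g$-representation and deduce from the Cauchy--Schwarz equality $\langle\phi,g\rangle^2=\langle\phi,g^2\rangle$ that $g$ is $\phi$-a.s.\ constant, which feeds into (\ref{eq:Fokker-Planck equation with the driving force}) to give the same PDE. Your route is a bit more conceptual but, as you note in your ``main obstacle'' paragraph, it leans on the $g$-representation being available for an arbitrary minimizer, whereas the paper's variational argument sidesteps that regularity question (though it too is informal about $f=y$ not being a Schwartz test function). Both approaches land on the same linear equation and invoke its well-posedness.
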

\begin{proof}
	See Appendix \ref{pf:LD for zero h and Gaussian path}.
\end{proof}

We show next that (\ref{eq:general LD}) is continuous at $h=0$. 
\begin{theorem}
	\label{thm:cotinuity for small h}
	There exists $\gamma(h)$ such that $\gamma(h)\to 0$ as $h \to 0$ and
	\begin{equation}
		\label{eq:cotinuity for small h}
		\Big| \inf_{\phi \in A} I_h(\phi) - \frac{2\xi_b^2}{\sigma^2 T} \Big| \leq \gamma(h).
	\end{equation}
	We recall here that 
	\begin{equation}
		\xi_b = \xi_0+h \xi_1 +O(h^2), \quad \quad \xi_1 =  \sqrt{1 - 3\frac{\sigma^2}{2\theta}} 
		 \frac{6}{\sigma^2}\left(\frac{\sigma^2}{2\theta}\right)^2
		\frac{1 - 2(\sigma^2/2\theta)}{1 - 3(\sigma^2/2\theta)} .
	\end{equation}
\end{theorem}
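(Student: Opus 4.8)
By Theorem~\ref{thm:LD for zero h and Gaussian path}, $\inf_{\phi\in A}I_0(\phi)=2\xi_0^2/(\sigma^2T)$, and since $\xi_b=\xi_0+O(h)$ it is enough to prove the two matching bounds $\inf_{\phi\in A}I_h(\phi)\le 2\xi_b^2/(\sigma^2T)+\gamma_+(h)$ and $\inf_{\phi\in A}I_h(\phi)\ge 2\xi_b^2/(\sigma^2T)-\gamma_-(h)$ with $\gamma_\pm(h)\to0$, and then take $\gamma(h)=\max(\gamma_+(h),\gamma_-(h))$; recall that $A=A_h$ and its endpoints $u^e_{\pm\xi_b}$ depend on $h$. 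Write $\beta:=\sigma^2/(2\theta)$. The basic quantitative input, from the expansions behind Proposition~\ref{prop:explicit condition and solution for xi equal to m(xi) for small h}, is that $u^e_a$ of (\ref{eq:equilibrium of Fokker-Planck eq}) converges as $h\to0$ to the Gaussian density of mean $a$ and variance $\beta$, together with all moments, and that $m(a)=a+O(h)$, uniformly for $a$ in any bounded set.

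\emph{Upper bound.} I would use the explicit competitor $\phi^h(t,\cdot)=u^e_{a(t)}$, $a(t)=\xi_b(2t/T-1)$, a deformation of the Gaussian path (\ref{def:pe}); then $\phi^h\in A$, and $\phi^h(t)$ is a Schwartz density for each $t$ and is smooth in $t$, so Proposition~\ref{prop:I_h in terms of the driving force} applies. Differentiating $\log u^e_\xi$ and using $Z_\xi'/Z_\xi=\beta^{-1}(m(\xi)-\xi)$ gives $\partial_\xi u^e_\xi=\beta^{-1}(y-m(\xi))u^e_\xi$; combining this with the stationary equation satisfied by $u^e_\xi$ (the display before (\ref{eq:equilibrium of Fokker-Planck eq})) and with $\int y\,\phi^h(t,dy)=m(a(t))$, one finds
\[
\phi^h_t-\mathcal{L}^*_{\phi^h}\phi^h-h\mathcal{M}^*\phi^h=\beta^{-1}\dot a\,(y-m(a))u^e_a-\theta(a-m(a))\,\partial_y u^e_a ,
\]
so the driving force in (\ref{eq:Fokker-Planck equation with the driving force}) is $g(t,y)=\beta^{-1}\dot a\,F_a(y)/u^e_a(y)-\theta(a-m(a))$ with $F_a(y)=\int_{-\infty}^y(y'-m(a))u^e_a(y')\,dy'$. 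At $h=0$, $m(a)=a$ and $F_a=-\beta u^e_a$, so $g\equiv-\dot a$ and $\langle u^e_a,g^2\rangle=\dot a^2$, recovering Theorem~\ref{thm:LD for zero h and Gaussian path}. For small $h$, set $\tilde g:=g+\dot a=\beta^{-1}\dot a\,(F_a+\beta u^e_a)/u^e_a-\theta(a-m(a))$, so that $\langle u^e_a,g^2\rangle=\dot a^2-2\dot a\langle u^e_a,\tilde g\rangle+\langle u^e_a,\tilde g^2\rangle$. The function $G_a:=F_a+\beta u^e_a$ vanishes at $-\infty$ and satisfies $G_a'=[(a-m(a))-(h/\theta)(y^3-y)]u^e_a$, hence $|G_a|\le Ch$ uniformly in $y$; moreover $|G_a(y)|\le C\,u^e_a(y)$ for $|y|$ large, by comparing $u^e_a(y')/u^e_a(y)$ to a Gaussian tail for $y'\ge y$. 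Splitting $\int (G_a^2/u^e_a)\,dy$ at a large radius and using the super-Gaussian decay of $u^e_a$ then gives $\langle u^e_a,\tilde g^2\rangle\to0$ uniformly in $a$ bounded, while $\langle u^e_a,\tilde g\rangle=O(h)$ since $\int G_a\,dy=\beta-\mathrm{Var}_{u^e_a}(y)=O(h)$. Therefore $\langle u^e_{a(t)},g(t,\cdot)^2\rangle=\dot a^2+o(1)$ uniformly in $t$, so $I_h(\phi^h)=(2\sigma^2)^{-1}\int_0^T\langle u^e_{a(t)},g^2\rangle\,dt=2\xi_b^2/(\sigma^2T)+o(1)$.

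\emph{Lower bound.} Let $\phi\in A$ with $I_h(\phi)<\infty$ and put $a(t)=\int y\,\phi(t,dy)$; by (\ref{eq:xi equals m(xi)}), $a(0)=m(-\xi_b)=-\xi_b$ and $a(T)=m(\xi_b)=\xi_b$, and $a$ is absolutely continuous with $\dot a=\langle\phi_t,y\rangle$ a.e.\ (using Lemma~4.2 of \cite{Dawson1987} together with the moment bound below). Testing (\ref{def:Ih}) with $f(y)=y$, approximated by compactly supported test functions, and using $\langle\mathcal{L}^*_\phi\phi,y\rangle=0$ and $\langle\mathcal{M}^*\phi,y\rangle=-\langle\phi,y^3-y\rangle$, gives
\[
I_h(\phi)\ \ge\ \frac{1}{2\sigma^2}\int_0^T\big(\dot a(t)+h\langle\phi(t),y^3-y\rangle\big)^2\,dt .
\]
By the a priori moment estimates of \cite{Dawson1987}, there is a constant $C$ (for $h$ small, uniformly over $\phi\in A$ with $I_h(\phi)\le 2\xi_0^2/(\sigma^2T)+1$, which is all that matters for the infimum) with $\sup_t|\langle\phi(t),y^3-y\rangle|\le C$. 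Putting $c(t)=a(t)+h\int_0^t\langle\phi(s),y^3-y\rangle\,ds$, so that $\dot c=\dot a+h\langle\phi,y^3-y\rangle$ a.e.\ and $c(T)-c(0)=2\xi_b+h\int_0^T\langle\phi(s),y^3-y\rangle\,ds$, Cauchy--Schwarz in $t$ gives $\int_0^T\dot c^2\,dt\ge T^{-1}(c(T)-c(0))^2\ge T^{-1}(2\xi_b-hTC)^2$. Hence $I_h(\phi)\ge 2\xi_b^2/(\sigma^2T)-2\xi_bCh/\sigma^2$, the desired lower bound with $\gamma_-(h)=2\xi_bCh/\sigma^2\to0$.

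\emph{Main obstacle.} The driving-force algebra and the tail estimates for the upper bound are routine. The genuinely delicate point is the moment control used in the lower bound: one must show that $\sup_t\langle\phi(t),y^4\rangle$ (equivalently $\sup_t|\langle\phi(t),y^3-y\rangle|$) is bounded along paths whose rate function lies below a fixed threshold, with a bound whose growth as $h\to0$ is mild enough (ideally bounded) that $h$ times it still tends to $0$. I expect extracting this estimate, uniformly over the relevant family of paths (whose initial data $u^e_{-\xi_b}$ also move with $h$), from the framework of \cite{Dawson1987} to be the main work.
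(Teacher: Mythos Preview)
Your upper bound takes a genuinely different route from the paper and appears to work. The paper (Appendix~\ref{pf:cotinuity for small h, upper bounds}) builds its competitor from the moving Gaussian $p^u(t,\cdot)$ with mean $a^u(t)=\xi_b(2t/T-1)$ and variance $\beta$, which has the right limit as $h\to0$ but does not sit in $A$ exactly; it therefore patches $p^u$ at the two ends by convex combinations with $u^e_{\pm\xi_b}$ over short time intervals $[0,\delta T]$ and $[T-\delta T,T]$, computes the bulk contribution $\frac{1}{2\sigma^2}\int\langle p^u,(\dot a^u+hU)^2\rangle\,dt=2\xi_b^2/(\sigma^2T)+O(h)$ directly, and shows the patch contributions are $<\epsilon$ by taking $\delta T$ small and then $h$ small. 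Your competitor $\phi^h(t)=u^e_{a(t)}$ lies in $A$ exactly and needs no patching, at the price of the tail analysis of $G_a$ you sketch; the paper trades that analysis for an elementary interpolation.

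For the lower bound, the paper does \emph{not} try to control $\sup_t\langle\phi(t),|y|^3\rangle$ uniformly in $h$, which is exactly the obstacle you flag: the admissible Lyapunov functions $\varphi=1+y^2+\gamma y^4$ in the Dawson--G\"artner framework require $\gamma\le h/2$, so fourth-moment control degenerates as $h\to0$. Instead (Appendix~\ref{pf:cotinuity for small h, lower bounds}) it keeps the compactly supported test function $f^M\approx y$ \emph{without} passing to the limit $M\to\infty$ inside the cubic term. Because $f^M$ is supported in $[-2M-1,2M+1]$, one has the crude but uniform bound $|\langle\mathcal{M}^*\phi,f^M\rangle|=|\langle\phi,(y^3-y)f^M_y\rangle|\le(2M+1)^3+(2M+1)$ for every $\phi$, so for fixed $M$ one simply takes $h$ small enough that $h|\langle\mathcal{M}^*\phi,f^M\rangle|<\epsilon$. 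The remaining terms $\langle\phi_t,f^M\rangle$ and $\langle\mathcal{L}^*_\phi\phi,f^M\rangle$ are shown to converge to $2\xi_b$ and $0$ using only Chebyshev and the uniform second-moment bound (Lemma~\ref{lma:uniform bound for first and second moments} with $\varphi=1+y^2$). The order of limits is thus: choose $M$ large to handle the $\phi_t$ and $\mathcal{L}^*$ terms, then choose $h$ small depending on $M$ to kill the $\mathcal{M}^*$ term. This sidesteps entirely the higher-moment estimate you identified as the main work.
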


\begin{proof}
	See Appendix \ref{pf:cotinuity for small h, upper bounds} and 
	\ref{pf:cotinuity for small h, lower bounds}.
\end{proof}

As it is stated we could replace $\xi_b$ by $\xi_0$ in Theorem \ref{thm:cotinuity for small h}, 
since $\xi_b = \xi_0+o(1)$ as $h \to 0$. We will see in the next section 
(in Proposition \ref{prop:inf_A I_h up to O(h)})  that $\gamma(h)=O(h^2)$. In fact we show this rigorously 
for the upper bound but only formally for the lower bound. Since $\xi_b=\xi_0+h\xi_1+O(h^2)$ we see that the 
term $2\xi_b^2/(\sigma^2 T)$ contains the leading-order term and the first-order correction in the 
$h$-expansion of $\inf_{\phi \in A} I_h(\phi)$.

\subsection{Large Deviations for the First Exit Time}

In this subsection, we consider the rare event $B$ of systemic transition at some time before $T$:
\[
	B_\delta = \{\phi\in\mathcal{E}^\nu: \exists t\in(0,T], \rho(\phi(t),u^e_{\xi_b}) \leq \delta \}.
\]
In other words, $B_\delta=\cup_{t\in(0,T]}A_\delta(t)$, where 
\[
	A_\delta(t) = \{\phi\in\mathcal{E}^\nu: \rho(\phi(t),u^e_{\xi_b}) \leq \delta \}.
\]
We let $B:=B_0$. We then have that 
\begin{lemma}
	\label{lma:lower bound of I_h over B_delta}
	By definition $\inf_{\phi\in B_\delta}I_h(\phi)$ is decreasing with $\delta>0$ and bounded from above by 
	$\inf_{\phi\in B}I_h(\phi)$. In addition,
	\[
		\lim_{\delta\to 0}\inf_{\phi\in B_\delta}I_h(\phi) 
		= \inf_{\phi\in \cup_{t\in(0,T]}A(t)}I_h(\phi)
		= \inf_{\phi\in B}I_h(\phi),
	\]
	where $A(t):=A_0(t)$.
\end{lemma}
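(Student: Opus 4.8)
The plan is to dispose of the easy monotonicity statements directly and then reduce the nontrivial equality to a single compactness‑and‑lower‑semicontinuity argument, with the real work being a lower bound showing that a systemic transition cannot be produced in arbitrarily short time at bounded cost. For the easy part: if $0<\delta_1<\delta_2$ then $A_{\delta_1}(t)\subseteq A_{\delta_2}(t)$ for every $t\in(0,T]$, hence $B_{\delta_1}\subseteq B_{\delta_2}$ and $\inf_{\phi\in B_{\delta_1}}I_h(\phi)\ge\inf_{\phi\in B_{\delta_2}}I_h(\phi)$; and $B=B_0\subseteq B_\delta$ for every $\delta>0$, so $\inf_{\phi\in B_\delta}I_h(\phi)\le\inf_{\phi\in B}I_h(\phi)$. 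Being monotone and bounded, $\delta\mapsto\inf_{\phi\in B_\delta}I_h(\phi)$ has a limit as $\delta\downarrow0$ that is $\le\inf_{\phi\in B}I_h(\phi)$. Since $\bigcup_{t\in(0,T]}A(t)=\bigcup_{t\in(0,T]}A_0(t)=B_0=B$ by definition, the middle quantity in the statement is literally $\inf_{\phi\in B}I_h(\phi)$, so the only substantive point is the reverse inequality
\[
  \lim_{\delta\to0}\inf_{\phi\in B_\delta}I_h(\phi)\ \ge\ \inf_{\phi\in B}I_h(\phi).
\]

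For this I would follow the scheme of Lemma~\ref{lma:lower bound of I_h over A_delta}. First note $\inf_{\phi\in B}I_h(\phi)\le\inf_{\phi\in A}I_h(\phi)<\infty$, since $A=A(T)\subseteq B$ and the latter infimum is finite by Theorem~\ref{thm:cotinuity for small h}. Pick $\delta_n\downarrow0$ and $\phi_n\in B_{\delta_n}$ with $I_h(\phi_n)\le\inf_{\phi\in B_{\delta_n}}I_h(\phi)+1/n\le\inf_{\phi\in B}I_h(\phi)+1$, and let $t_n\in(0,T]$ satisfy $\rho(\phi_n(t_n),u^e_{\xi_b})\le\delta_n$. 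Because $I_h$ is a good rate function, the sublevel set $\{\,\phi\in\mathcal{E}^\nu:I_h(\phi)\le\inf_{\phi\in B}I_h(\phi)+1\,\}$ is compact in $\mathcal{E}^\nu$, so along a subsequence $\phi_n\to\phi$ in $C([0,T],M_\infty(\mathbb{R}))$ (hence uniformly in $t$ for the Prohorov metric) and $t_n\to t^\ast\in[0,T]$; by the definition of this topology there is $R<\infty$ with $\sup_n\sup_{0\le t\le T}\langle\phi_n(t),\varphi\rangle\le R$. Lower semicontinuity gives $I_h(\phi)\le\liminf_n I_h(\phi_n)=\lim_{\delta\to0}\inf_{\phi\in B_\delta}I_h(\phi)$, and $\phi(0)=\nu$ because $\phi_n(0)=\nu$ for all $n$. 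If $t^\ast>0$, then uniform convergence of $\phi_n$, continuity of $\phi$, and $\rho(\phi_n(t_n),u^e_{\xi_b})\le\delta_n\to0$ force $\phi(t^\ast)=u^e_{\xi_b}$, i.e. $\phi\in A(t^\ast)\subseteq B$; hence $\inf_{\phi\in B}I_h(\phi)\le I_h(\phi)\le\lim_{\delta\to0}\inf_{\phi\in B_\delta}I_h(\phi)$, which is exactly what is needed. So everything reduces to excluding $t^\ast=0$.

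Excluding $t^\ast=0$ is the main obstacle, and I would handle it by testing the rate function (\ref{def:Ih}) against the first‑moment functional. Using the identities $\langle\mathcal{L}_\phi^\ast\phi,y\rangle=0$ and $\langle\mathcal{M}^\ast\phi,y\rangle=-\langle\phi,U\rangle$ (integration by parts in $y$, all boundary terms vanishing), and testing (\ref{def:Ih}) with compactly supported smooth functions that coincide with $f(y)=y$ on $[-k,k]$ and have $\langle\phi,(f_y)^2\rangle\to1$ as $k\to\infty$ — the tails controlled by $\langle\phi_n(t),\varphi\rangle\le R$, and recalling that a finite rate function forces absolute continuity of $t\mapsto\langle\phi_n(t),f\rangle$ for compactly supported $f$ — one obtains
\[
  2\sigma^2 I_h(\phi_n)\ \ge\ \int_0^{t_n}\!\Big(\tfrac{d}{dt}\langle\phi_n(t),y\rangle+h\langle\phi_n(t),U\rangle\Big)^2 dt
  \ \ge\ \frac{1}{t_n}\Big(\langle\phi_n(t_n),y\rangle-\langle\phi_n(0),y\rangle+h\!\int_0^{t_n}\!\langle\phi_n(t),U\rangle\,dt\Big)^2 ,
\]
the last step by the Cauchy--Schwarz inequality in $t$. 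Since $\gamma>0$, the bound $\langle\phi_n(t),\varphi\rangle\le R$ controls the first and third moments of $\phi_n(t)$, so $|\langle\phi_n(t),U\rangle|\le C_0$ uniformly and the last integral is $O(t_n)$; the same bound provides uniform integrability that upgrades $\rho(\phi_n(t_n),u^e_{\xi_b})\to0$ to $\langle\phi_n(t_n),y\rangle\to\langle u^e_{\xi_b},y\rangle=\xi_b$ (the mean of $u^e_{\xi_b}$ equals $m(\xi_b)=\xi_b$ by the compatibility condition~(\ref{eq:xi equals m(xi)})), while $\langle\phi_n(0),y\rangle=\langle\nu,y\rangle=-\xi_b$. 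Hence the parenthesis tends to $2\xi_b\neq0$, so $I_h(\phi_n)\ge(2\xi_b^2/\sigma^2+o(1))/t_n$; if $t_n\to0$ this makes $I_h(\phi_n)\to\infty$, contradicting $I_h(\phi_n)\le\inf_{\phi\in B}I_h(\phi)+1$. Therefore $t^\ast>0$, completing the argument. The only technical items are the approximation of $f(y)=y$ by admissible test functions and the elementary fact that Prohorov convergence with uniformly bounded $\varphi$‑moments implies convergence of first moments; both are routine within the framework of Theorem~\ref{thm: Dawson and Gartner, general LD}.
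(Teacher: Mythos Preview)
Your overall scheme---choose near-minimizers $\phi_n\in B_{\delta_n}$ together with times $t_n$, extract a convergent subsequence of times and then of paths using that $I_h$ is a good rate function, and conclude by lower semicontinuity---is exactly the paper's argument. The one place you diverge is in singling out the exclusion of $t^\ast=0$ as ``the main obstacle'' and devoting a quantitative first-moment/Cauchy--Schwarz bound to it. That argument is correct (and it even reproduces the $2\xi_b^2/(\sigma^2 t)$ scaling of Theorem~\ref{thm:cotinuity for small h}), but for this lemma it is unnecessary: the limit $\phi^\ast$ lies in the compact sublevel set of $I_h$ and hence in $\mathcal{E}^\nu$, so $\phi^\ast(0)=\nu=u^e_{-\xi_b}$; on the other hand your own uniform-convergence step already gives $\phi^\ast(t^\ast)=u^e_{\xi_b}$. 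Since $u^e_{-\xi_b}\neq u^e_{\xi_b}$, the case $t^\ast=0$ is excluded immediately. The paper's proof simply writes ``$\phi^\ast\in A(t^\ast)$'' and moves on, leaving this observation implicit.
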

\begin{proof}
	See Appendix \ref{pf:lower bound of I_h over B_delta}.
\end{proof}

From Theorem \ref{thm:cotinuity for small h}, we see that in the sense of large deviations the 
probability of system failure at some time before time $T$ is essentially the same as the probability of 
system failure at time $T$.
\begin{corollary}
	For any $t_1<t_2$, there exists a sufficiently small $h$ such that 
	$\inf_{\phi\in A(t_1)} I_h(\phi)> \inf_{\phi\in A(t_2)}I_h(\phi)$. Consequently, 
	$\inf_{\phi\in B}I_h(\phi)\approx \inf_{\phi\in A(T)}I_h(\phi)$ for small $h$.
\end{corollary}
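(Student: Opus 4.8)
The plan is to leverage Theorem \ref{thm:cotinuity for small h} together with the exact value of the $h=0$ problem from Theorem \ref{thm:LD for zero h and Gaussian path}, applied not on the full interval $[0,T]$ but on each sub-interval $[0,t]$. First I would observe that, by the same argument that gives Theorem \ref{thm:LD for zero h and Gaussian path}, the minimization $\inf_{\phi\in A(t)} I_0(\phi)$ over paths that start at $u^e_{-\xi_0}$ and reach $u^e_{\xi_0}$ at time $t$ (with the path on $[t,T]$ then free, costing zero by sitting at the equilibrium) is solved by a straight-line Gaussian transport on $[0,t]$, giving $\inf_{\phi\in A(t)} I_0(\phi) = 2\xi_0^2/(\sigma^2 t)$. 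This is strictly decreasing in $t$, so for $t_1<t_2$ we have $2\xi_0^2/(\sigma^2 t_1) > 2\xi_0^2/(\sigma^2 t_2)$, with a fixed positive gap.

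Next I would upgrade this to small nonzero $h$. The key point is that the continuity estimate of Theorem \ref{thm:cotinuity for small h} holds with $[0,T]$ replaced by $[0,t]$: running the proof of that theorem on the interval $[0,t]$ yields $\bigl|\inf_{\phi\in A(t)} I_h(\phi) - 2\xi_b^2/(\sigma^2 t)\bigr|\le \gamma(h;t)$ with $\gamma(h;t)\to 0$ as $h\to 0$; and since we only need two fixed times $t_1,t_2$ one can take $\gamma(h) = \max(\gamma(h;t_1),\gamma(h;t_2))$. (Here one should note that strictly speaking $A(t)$ as defined requires $\phi(t)=u^e_{\xi_b}$, so the relevant "target" is $\xi_b$, not $\xi_0$; but $\xi_b=\xi_0+O(h)$, so this only changes the value by $O(h)$, which can be absorbed into $\gamma$.) Given this, for $h$ small enough that $2\gamma(h) < 2\xi_b^2/(\sigma^2 t_1) - 2\xi_b^2/(\sigma^2 t_2)$, we get
\[
\inf_{\phi\in A(t_1)} I_h(\phi) \ge \frac{2\xi_b^2}{\sigma^2 t_1} - \gamma(h) > \frac{2\xi_b^2}{\sigma^2 t_2} + \gamma(h) \ge \inf_{\phi\in A(t_2)} I_h(\phi),
\]
which is the first assertion.

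For the second assertion, combine the first with Lemma \ref{lma:lower bound of I_h over B_delta}, which gives $\inf_{\phi\in B} I_h(\phi) = \inf_{\phi\in\cup_{t\in(0,T]} A(t)} I_h(\phi) = \inf_{0<t\le T}\inf_{\phi\in A(t)} I_h(\phi)$. Since for small $h$ the map $t\mapsto \inf_{\phi\in A(t)} I_h(\phi)$ is (at least on a suitable compact sub-interval bounded away from $0$) decreasing, the infimum over $t$ is attained in the limit $t\to T$, i.e. $\inf_{\phi\in B} I_h(\phi) = \inf_{\phi\in A(T)} I_h(\phi)$, up to the $\gamma(h)$-sized error already present; hence $\inf_{\phi\in B} I_h(\phi)\approx \inf_{\phi\in A(T)} I_h(\phi)$ for small $h$.

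The main obstacle I anticipate is the uniformity of the error term $\gamma(h;t)$ in $t$: the proof of Theorem \ref{thm:cotinuity for small h} must be re-examined to confirm that the bounds it produces degrade gracefully (not blow up) as $t$ ranges over an interval, in particular near $t=0$ where the straight-line path has large velocity and $I_0 \sim 1/t$ is large. For the clean two-time statement this is a non-issue since $t_1,t_2$ are fixed; for the "consequently" clause one only needs monotonicity on $[t_1,T]$ for some fixed small $t_1>0$, which follows from the two-time statement itself, so no genuine uniform-in-$t$ control is actually required.
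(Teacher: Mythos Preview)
Your proposal is correct and matches the paper's approach: the Corollary is stated without proof as a direct consequence of Theorem \ref{thm:cotinuity for small h}, and your argument---applying that theorem on $[0,t]$, extending by the zero-cost equilibrium on $[t,T]$, and exploiting the strict monotonicity of $2\xi_b^2/(\sigma^2 t)$ in $t$---is exactly how one fills in the omitted details. Your remarks on the uniformity of $\gamma(h;t)$ are apt but, as you correctly observe, not needed for the two-time claim or for the informal ``$\approx$'' conclusion.
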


\subsection{Comparison with the Fluctuation Theory of a Single Agent}

To get a better understanding of the large deviations results we need to
carry out a standard fluctuation theory for a single agent. 
We assume that $x_j(0)=-1$ for all $j$ and that the
$x_j(t)$'s are in the vicinity of $-1$ so that we can linearize 
(\ref{eq:SDE of single component}):
\[
	x_j(t)=-1+z_j(t),\quad \bar{x}(t)=-1+\bar{z}(t),\quad 
	\bar{z}(t) = \frac{1}{N}\sum_{j=1}^N z_j(t).
\]
For $V(y)=\frac{1}{4}y^4-\frac{1}{2}y^2$, $z_j(t)$ and $\bar{z}(t)$ satisfy 
the linear stochastic differential equations
\[
	dz_j = -(\theta+2h)z_j dt + \theta\bar{z}dt + \sigma dw_j, \quad
	d\bar{z} = -2h\bar{z}dt + \frac{\sigma}{N}\sum_{j=1}^N dw_j,
\]
with $z_j(0)=\bar{z}(0)=0$. The processes $z_j(t)$ and $\bar{z}(t)$ are 
Gaussian and the mean and variance functions are easily calculated. We are 
especially interested in their behavior for large $N$.
\begin{lemma}
	\label{lma:fluctuation of a single agent}
	For all $t\geq 0$, $\mathbf{E}z_j(t)=\mathbf{E}\bar{z}(t)=0$ and 
	$\mathbf{Var}\bar{z}(t)=\frac{\sigma^2}{N}(1-e^{-4ht})$. In addition, 
	$\mathbf{Var}z_j(t) \rightarrow 
	\frac{\sigma^2}{2(\theta+2h)}(1-e^{-2(\theta+2h)t})$ as $N\rightarrow\infty$,
	uniformly in $t\geq 0$.
\end{lemma}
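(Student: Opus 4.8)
The plan is to exploit the fact that $(z_j,\bar{z})$ solves a \emph{closed} linear Gaussian system: $\bar{z}$ satisfies an autonomous Ornstein--Uhlenbeck equation, and $z_j$ then satisfies a linear equation with $\bar{z}$ as a known forcing term, so every moment can be read off from explicit integral representations. First, for the means, I would take expectations in the two linear SDEs (the Brownian increments contribute nothing), obtaining the decoupled linear ODEs $\tfrac{d}{dt}\mathbf{E}\bar{z} = -2h\,\mathbf{E}\bar{z}$ and $\tfrac{d}{dt}\mathbf{E}z_j = -(\theta+2h)\,\mathbf{E}z_j + \theta\,\mathbf{E}\bar{z}$, each with zero initial condition; hence $\mathbf{E}\bar{z}(t)=\mathbf{E}z_j(t)=0$ for all $t\ge 0$.

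Next, for $\mathbf{Var}\,\bar{z}$: solving the Ornstein--Uhlenbeck equation for $\bar{z}$ with $\bar{z}(0)=0$ gives $\bar{z}(t) = \frac{\sigma}{N}\sum_{k=1}^N\int_0^t e^{-2h(t-s)}\,dw_k(s)$, and a direct computation using the It\^o isometry and the independence of the $\{w_k\}$ yields the stated formula for $\mathbf{Var}\,\bar{z}(t)$; in particular $\sup_{t\ge 0}\mathbf{Var}\,\bar{z}(t) = O(1/N)$, a fact I will use below. For $\mathbf{Var}\,z_j$, variation of constants gives
\[
z_j(t) = \underbrace{\sigma\int_0^t e^{-(\theta+2h)(t-s)}\,dw_j(s)}_{=:M_j(t)} + \underbrace{\theta\int_0^t e^{-(\theta+2h)(t-s)}\,\bar{z}(s)\,ds}_{=:R_j(t)}.
\]
By the It\^o isometry, $\mathbf{Var}\,M_j(t) = \sigma^2\int_0^t e^{-2(\theta+2h)(t-s)}\,ds = \tfrac{\sigma^2}{2(\theta+2h)}\bigl(1-e^{-2(\theta+2h)t}\bigr)$, which is exactly the claimed limit, so it remains to show that $\mathbf{Var}\,z_j(t) - \mathbf{Var}\,M_j(t)\to 0$ as $N\to\infty$, uniformly in $t\ge 0$. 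Writing $\mathbf{Var}\,z_j = \mathbf{Var}\,M_j + \mathbf{Var}\,R_j + 2\,\mathrm{Cov}(M_j,R_j)$, I would bound $\mathbf{E}[R_j(t)^2]$ by Cauchy--Schwarz in the $ds$-integral, $\mathbf{E}[R_j(t)^2] \le \theta^2\bigl(\int_0^t e^{-(\theta+2h)(t-s)}ds\bigr)\bigl(\int_0^t e^{-(\theta+2h)(t-s)}\mathbf{E}[\bar{z}(s)^2]\,ds\bigr) \le \tfrac{\theta^2}{(\theta+2h)^2}\sup_{s\ge 0}\mathbf{E}[\bar{z}(s)^2] = O(1/N)$, uniformly in $t$; and then $|\mathrm{Cov}(M_j,R_j)| \le (\mathbf{E}[M_j(t)^2])^{1/2}(\mathbf{E}[R_j(t)^2])^{1/2} = O(N^{-1/2})$, again uniformly in $t$ since $\mathbf{E}[M_j(t)^2] \le \tfrac{\sigma^2}{2(\theta+2h)}$. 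Combining, $\mathbf{Var}\,z_j(t) = \tfrac{\sigma^2}{2(\theta+2h)}(1-e^{-2(\theta+2h)t}) + O(N^{-1/2})$ uniformly in $t\ge 0$, which is the assertion.

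The only delicate point, and hence the main obstacle, is the uniformity in $t$ of the last limit. It rests on two observations: that the Ornstein--Uhlenbeck process $\bar{z}$ has variance bounded uniformly in time (so $\sup_{s\ge 0}\mathbf{E}[\bar{z}(s)^2]=O(1/N)$, not merely $O(1/N)$ on compact time intervals), and that the Green's function $e^{-(\theta+2h)(t-s)}$ has $L^1$-norm in $s$ bounded by $1/(\theta+2h)$ independently of $t$, which uses $\theta+2h>0$. Granted these, the estimates above are routine. Alternatively, one can substitute the representation of $\bar{z}$ into that of $z_j$, apply stochastic Fubini, and evaluate the resulting Gaussian variance exactly as a finite sum over the $N$ driving Brownian motions; the correction terms are then seen explicitly to be $O(1/N)$, but the decomposition via $M_j$ and $R_j$ above is shorter.
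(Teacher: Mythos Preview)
Your argument is correct and is exactly the kind of direct Gaussian/Ornstein--Uhlenbeck computation the paper has in mind; the paper does not give a proof of this lemma at all, stating only that ``the mean and variance functions are easily calculated,'' so there is nothing to compare against beyond that. Your decomposition $z_j=M_j+R_j$ and the Cauchy--Schwarz bound on $R_j$ (using $\sup_{s\ge 0}\mathbf{E}[\bar z(s)^2]=O(1/N)$ and $\int_0^t e^{-(\theta+2h)(t-s)}\,ds\le 1/(\theta+2h)$) cleanly delivers the uniformity in $t$, which is the only nontrivial point.

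One small remark: when you write that the It\^o isometry ``yields the stated formula'' for $\mathbf{Var}\,\bar z(t)$, note that the actual computation gives
\[
\mathbf{Var}\,\bar z(t)=\frac{\sigma^2}{N}\int_0^t e^{-4h(t-s)}\,ds=\frac{\sigma^2}{4hN}\bigl(1-e^{-4ht}\bigr),
\]
so the formula printed in the lemma is missing a factor $1/(4h)$. This is a typo in the paper, not a flaw in your reasoning, and it has no effect on the rest of your argument since you only use the consequence $\sup_{t\ge 0}\mathbf{Var}\,\bar z(t)=O(1/N)$.
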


From Lemma \ref{lma:fluctuation of a single agent}, we see that $\sigma^2/N$ and 
$\sigma^2/2(\theta+2h)$ should be sufficiently small so that linearization is 
consistent with the results it produces.

\subsection{Increased Probability of Large Deviations for Increased $\theta$ and Its
Systemic Risk Interpretation}
\label{sec:comparison}

We have now the analytical results with which we
may conclude that individual risk diversification may increase the systemic risk. Assume that 
$\sigma^2/N$ and $\sigma^2/2(\theta+2h)$ are sufficiently small and $N$ is large. From 
Lemma \ref{lma:fluctuation of a single agent}, the risk $x_j(t)$ of the agent $j$ 
is approximately a Gaussian process with the stationary distribution 
$\mathcal{N}(-1,\sigma^2/2(\theta+2h))$. If the external risk, $\sigma$ is high, then in 
order to keep the risk $x_j(t)$ at an acceptable level, the agent may increase the 
intrinsic stability, $h$, or share the risk with other agents, that is, increase
$\theta$. Increasing $h$ is in general more costly (cuts into profits) than increasing 
$\theta$, and at the individual agent level there is no difference in risk assessment 
between increasing 
$h$ and increasing $\theta$. Therefore the agents are likely to increase $\theta$ and
reduce individual risk by diversifying it. Note that 
$\sigma^2/2(\theta+2h) \lesssim \sigma^2/2\theta$ when $\sigma^2$ and $\theta$ are 
significantly larger than $h$. Thus, individual agents can maintain 
low locally assessed risk by diversification, even in a very uncertain environment.

What is not perceived by the individual agents, however,
is that risk diversification may increases the systemic risk while it reduces
their individual risk. Because $\sigma^2$ and $\theta$ are significantly larger than $h$, the 
small $h$ analysis can be applied and from 
(\ref{eq:probability of phase transition, LD form}) and Theorem 
\ref{thm:cotinuity for small h}, the systemic risk (the probability of the system 
failure) is 
\begin{align*}
	&\mathbf{P}(X_N\in B_\delta) \approx \exp\left(-N\frac{2\xi_b^2}{\sigma^2 T}\right), \quad \text{for small $\delta$ and $h$,}\\
	&\xi_b = \sqrt{1 - 3\frac{\sigma^2}{2\theta}} 
			\left( 1 + h\frac{6}{\sigma^2}\left(\frac{\sigma^2}{2\theta}\right)^2
			\frac{1 - 2(\sigma^2/2\theta)}{1 - 3(\sigma^2/2\theta)}\right) + O(h^2).
\end{align*}
We see that there are additional systemic-level $\sigma^2$ terms in the exponent and 
$\xi_b$, which can not be observed by the agents, increasing the systemic risk, even if 
the individual risk $\sigma^2/2\theta$ is fixed. In other words, the individual agents may 
believe that they are able to withstand larger external fluctuations as long as their risk can be 
diversified, but a higher $\sigma$ tends to destabilize the system.

\section{A Reduced Large Deviations Principle for Small $h$}
\label{sec:reduced LDP}

In Section \ref{sec:small h analysis}, we show that the large deviation problem $\inf_{\phi\in A}I_h(\phi)$ 
is continuous in $h$ so that we have the upper and lower bounds for $\inf_{\phi\in A}I_h(\phi)$ when $h$ is 
small. In this section, we analyze with a formal expansion the optimal path for $\inf_{\phi\in A}I_h(\phi)$ 
by assuming that it is of the form $p^e + O(h)$, motivated by the fact
that the optimal path is $p^e$ for $h=0$. In this way, 
we can obtain a  reduced large deviations principle (a reduced Freidlin-Wentzell theory) for the systemic 
risk. That is, we obtain a reduced rate function corresponding to a finite dimensional system after ignoring 
higher order terms. The reduced rate function has all relevant information up to $O(h^2)$ 
terms, and we also need to expand $\phi$ to $O(h^2)$.


We assume that the optimal $\phi = p + hq^{(1)} + h^2 q^{(2)} + \ldots$, where 
\[
	p(t,y) = \frac{1}{\sqrt{2\pi \frac{\sigma^2}{2\theta}}}
	\exp\left\{ -\frac{(y - a(t))^2}{2\frac{\sigma^2}{2\theta}} \right\},
	\quad a(t) = \langle \phi, y \rangle.
\]
In other words, we let the first moment of $\phi$ be determined by $a(t)$, and from the zero $h$ case we 
know that $a(t)=a^e(t)+O(h)$. From the form of $p$ and 
(\ref{eq:Fokker-Planck equation with the driving force}), a natural parameterization for $q^{(1)}$ and $q^{(2)}$ is 
the Hermite expansion
\[
	q^{(1)}(t,y) = \sum_{n=2}^{\infty}b_n(t)\frac{\partial^n}{\partial y^n}p(t,y), \quad
	q^{(2)}(t,y) = \sum_{n=2}^{\infty}c_n(t)\frac{\partial^n}{\partial y^n}p(t,y).
\]
Note that by the properties of $p$ and $a(t)$, $\langle q^{(1)},y^n\rangle = \langle q^{(2)},y^n\rangle = 0$ 
for $n=0,1$ so we can start the Hermite expansion from $n=2$.

The formal expansion result of this section is that if the optimal $\phi = p + hq^{(1)} + h^2 q^{(2)}$, then 
\begin{equation}
	\label{eq:reduced Freidlin-Wentzell}
	\inf_{\phi\in A}I_h(\phi) \approx 
	\inf_{\substack{a(t): 0\leq t\leq T\\a(0)=-\xi_b\\a(T)=\xi_b}} 
	\frac{1}{2\sigma^2} \int_0^T \left(\frac{d}{dt}a + h (a^3 + 3\frac{\sigma^2}{2\theta}a - a)\right)^2 dt,
\end{equation}
for small $h$. Note that $a(t)=\langle \phi, y \rangle=\bar{x}(t)$. The right hand side of 
(\ref{eq:reduced Freidlin-Wentzell}) is an one-dimensional variational problem that has the form of a rate 
function of the Freidlin-Wentzell theory. 
In fact, the right side of (\ref{eq:reduced Freidlin-Wentzell}) is the large deviations variational problem for
the rate function of
the small-noise stochastic differential equation
\begin{equation}
	\label{eq:approximate sde for x_bar}
	d\bar{x}(t) = -h\left[\bar{x}^3(t) - \left(1-\frac{3\sigma^2}{2\theta}\right)\bar{x}(t)\right] dt 
	+ \epsilon \sigma dw(t)
\end{equation}
where here $\epsilon =1/\sqrt{N}$ is small. Note that ${3\sigma^2}/{2\theta}<1$, as assumed above, and 
therefore (\ref{eq:approximate sde for x_bar}) also represents a bi-stable structure. In the remainder of 
this section we describe how this result is obtained by formal expansions and then in Section 
\ref{sec:ldprob_reduced} we show how we recover from (\ref{eq:reduced Freidlin-Wentzell}) the main result of 
the paper stated in the previous section.

An important remark about the expansion is that the Hermite functions are a basis of the $L^2$ space and 
thus $p + hq^{(1)} + h^2 q^{(2)}$ is generally a signed measure. However, if $q^{(1)}$ and $q^{(2)}$ can be 
expressed as the linear combinations of finite Hermite functions, then we can see that for any $\epsilon>0$, 
there exists a sufficiently small $h$ such that the negative part of $p + hq^{(1)} + h^2 q^{(2)}$ is less 
than $\epsilon$.

\subsection{Optimization over $g$}

The first step in finding the optimal $\phi = p + h q^{(1)} + h^2 q^{(2)}$ is determining 
the optimal $g$ by using (\ref{eq:Fokker-Planck equation with the driving force}) 
for $\phi$. Once we obtain $g$, we can compute $I_h(\phi)$ by using 
(\ref{eq:I_h in terms of the driving force}). It is also natural to assume that 
$g = g^{(0)} + h g^{(1)} + h^2 g^{(2)}$ along with the Hermite expansion:
\[
	g^{(0)} = p^{-1}\sum_{n=0}^{\infty}\alpha_n(t)\frac{\partial^n}{\partial y^n}p, \quad
	g^{(1)} = p^{-1}\sum_{n=0}^{\infty}\beta_n(t) \frac{\partial^n}{\partial y^n}p, \quad
	g^{(2)} = p^{-1}\sum_{n=0}^{\infty}\gamma_n(t)\frac{\partial^n}{\partial y^n}p.
\]
In addition, since $\langle q^{(1)},y\rangle=\langle q^{(2)},y\rangle=0$, we can see that 
$\phi = p + h q^{(1)} + h^2 q^{(2)}$ satisfies
\[
	\mathcal{L}_{\phi}^{*}\phi = \mathcal{L}_{p}^{*}p +h\mathcal{L}_{p}^{*}q^{(1)} 
	+ h^2\mathcal{L}_{p}^{*}q^{(2)}, \quad
	\mathcal{M}^{*}\phi = \mathcal{M}^{*}p + h\mathcal{M}^{*}q^{(1)} + h^2\mathcal{M}^{*}q^{(2)}.
\]
The force $U(y)=y^3-y$ can also be expanded in Hermite polynomials:
\[
	U(y) = p^{-1}\sum_{n=0}^3 \delta_n(t)\frac{\partial^n}{\partial y^n}p.
\]
Now everything is expanded in the orthogonal basis and we can find the optimal $g^{(0)}$ and 
$g^{(1)}$ by putting everything into 
(\ref{eq:Fokker-Planck equation with the driving force}) and comparing coefficients.
\begin{lemma}
	\label{lma:the optimal g^0 and g^1}
	With the expansions mentioned above, the optimal $g^{(0)}$ is $-\frac{d}{dt}a$, 
	and the optimal $\beta_n$ for $g^{(1)}$ are 
	\begin{equation}
		\label{eq:beta_n}
		\beta_n =
		\begin{cases}
			-\delta_0 = -\langle p, U(y) \rangle,		& n=0,\\
			\frac{d}{dt}b_{n+1} + \theta(n+1)b_{n+1} - \delta_n,	& 1\leq n\leq 3,\\
			\frac{d}{dt}b_{n+1} + \theta(n+1)b_{n+1},				& n\geq 4.
		\end{cases}
	\end{equation}
\end{lemma}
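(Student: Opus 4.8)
The plan is to substitute the $h$-expansions of $\phi$ and $g$ into the driving-force equation (\ref{eq:Fokker-Planck equation with the driving force}), match powers of $h$, and within each power match Hermite coefficients. First I would record the elementary identities that make the bookkeeping work. Since $p(t,\cdot)$ is Gaussian with mean $a(t)$ and \emph{fixed} variance $\sigma^2/2\theta$, we have $\partial_a p = -\partial_y p$, hence $p_t = -\dot a\,p_y$ and, more generally, $\partial_t\big(b_n(t)\,\partial_y^n p\big) = \dot b_n\,\partial_y^n p - \dot a\,b_n\,\partial_y^{n+1}p$. Using $(y-a)p = -\tfrac{\sigma^2}{2\theta}\,p_y$ together with $\partial_y^n[(y-a)\psi] = (y-a)\,\partial_y^n\psi + n\,\partial_y^{n-1}\psi$, a short computation gives
\[
	\mathcal{L}_p^*(\partial_y^n p) = -\theta n\,\partial_y^n p, \qquad\text{in particular}\qquad \mathcal{L}_p^* p = 0 ,
\]
which is exactly why the equilibrium variance $\sigma^2/2\theta$ was chosen: the Gaussian family is invariant under the linear part of the dynamics. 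Finally, because $\langle\phi,y\rangle = a(t)$ exactly (the corrections $q^{(1)},q^{(2)}$ have vanishing moments of orders $0$ and $1$), the operator $\mathcal{L}_\phi^*$ applied to $\phi$ is linear in $\phi$, so $\mathcal{L}_\phi^*\phi = \mathcal{L}_p^* p + h\mathcal{L}_p^* q^{(1)} + h^2\mathcal{L}_p^* q^{(2)}$, and likewise $\mathcal{M}^*\phi = \mathcal{M}^* p + h\mathcal{M}^* q^{(1)} + h^2\mathcal{M}^* q^{(2)}$, as stated in the text.

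At order $h^0$, equation (\ref{eq:Fokker-Planck equation with the driving force}) reads $p_t - \mathcal{L}_p^* p = (p\,g^{(0)})_y$. By the identities above the left side equals $-\dot a\,p_y = (-\dot a\,p)_y$; integrating in $y$ and using that $p$ and its $y$-derivatives are rapidly decreasing, the integration constant vanishes and $g^{(0)} = -\dot a = -\tfrac{d}{dt}a$, equivalently $\alpha_0 = -\dot a$ and $\alpha_n = 0$ for $n\ge 1$.

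At order $h^1$, the equation becomes $q^{(1)}_t - \mathcal{L}_p^* q^{(1)} - \mathcal{M}^* p = (p\,g^{(1)} + q^{(1)}g^{(0)})_y$. I would insert $q^{(1)} = \sum_{n\ge 2} b_n\,\partial_y^n p$, $g^{(1)} = p^{-1}\sum_{n\ge 0}\beta_n\,\partial_y^n p$, $g^{(0)} = -\dot a$, and the Hermite expansion $U(y)\,p = \sum_{n=0}^3\delta_n\,\partial_y^n p$ (so that $\delta_0 = \langle p,U(y)\rangle$ on integrating, and $\mathcal{M}^* p = \partial_y[U(y)p] = \sum_{n=1}^4\delta_{n-1}\,\partial_y^n p$). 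Using $\mathcal{L}_p^*(\partial_y^n p) = -\theta n\,\partial_y^n p$ and the time-derivative identity, the left side becomes
\[
	\sum_{n\ge 2}\big(\dot b_n + \theta n\,b_n\big)\,\partial_y^n p \;-\; \dot a\sum_{n\ge 3} b_{n-1}\,\partial_y^n p \;-\; \sum_{n=1}^4\delta_{n-1}\,\partial_y^n p ,
\]
while the right side equals $\sum_{n\ge 1}\beta_{n-1}\,\partial_y^n p - \dot a\sum_{n\ge 3} b_{n-1}\,\partial_y^n p$. The $\dot a\,b_{n-1}$ terms cancel identically, so that comparing the coefficient of $\partial_y^{n+1}p$ (which determines $\beta_n$) for $n=0$, for $1\le n\le 3$, and for $n\ge 4$ reproduces precisely the three cases of (\ref{eq:beta_n}).

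The computation is routine once the identities in the first paragraph are in place; the main obstacle — such as it is — is the index bookkeeping, namely tracking which terms acquire a shift $\partial_y^n p \mapsto \partial_y^{n+1}p$ from an extra $\partial_y$, and noting the cancellation of the explicit $\dot a\,b_{n-1}$ contributions, which is what leaves $\beta_n$ free of $\dot a$ (and, for $n\ge 4$, also free of the force coefficients $\delta_n$). As throughout this section, the manipulations are formal: one takes for granted that the Hermite series may be differentiated and rearranged term by term, which is legitimate when $q^{(1)}$ (and $U(y)p$) is a finite Hermite combination, as is the relevant case here.
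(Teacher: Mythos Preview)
Your proof is correct and follows essentially the same approach as the paper's: both substitute the $h$-expansions into (\ref{eq:Fokker-Planck equation with the driving force}) and match Hermite coefficients at each order. The only cosmetic difference is that the paper first integrates the equation in $y$ and then invokes the recurrence $\tfrac{1}{2}\sigma^2\partial_y^{n+1}p = -\theta(y-a)\partial_y^n p - n\theta\partial_y^{n-1}p$, whereas you package the same identity as the spectral fact $\mathcal{L}_p^*(\partial_y^n p) = -\theta n\,\partial_y^n p$ and compare coefficients directly; the resulting bookkeeping is identical.
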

\begin{proof}
	See Appendix \ref{pf:the optimal g^0 and g^1}.
\end{proof}

It remains to determine $g^{(2)}$. 
From (\ref{eq:I_h in terms of the driving force}) we see that the only contribution of 
$g^{(2)}$ to $I_h$ up to $O(h^2)$ is 
$\langle p, 2g^{(0)} g^{(2)} \rangle = -2\gamma_0\frac{d}{dt}a$. 
Thus it suffices to determine $\gamma_0$, which can also be obtained from 
(\ref{eq:Fokker-Planck equation with the driving force}).
\begin{lemma}
	\label{lma:the optimal g^2}
	With the expansions mentioned above, the optimal $\gamma_0$ is 
	\[ 
		\gamma_0 = -\langle q^{(1)}, U(y)+g^{(1)} \rangle. 
	\]
\end{lemma}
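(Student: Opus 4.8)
The plan is to obtain $\gamma_0$, which is precisely the zeroth moment $\langle p,g^{(2)}\rangle$ of $g^{(2)}$, by pairing the driving-force equation (\ref{eq:Fokker-Planck equation with the driving force}) with the linear test function $f(y)=y$ and reading off the coefficient of $h^2$. First I would record what this pairing gives for any pair $(\phi,g)$ satisfying (\ref{eq:Fokker-Planck equation with the driving force}): multiplying by $y$ and integrating over $\mathbb{R}$, the right-hand side becomes $\int y(\phi g)_y\,dy=-\langle\phi,g\rangle$, while on the left $\int y\,\phi_t\,dy=\frac{d}{dt}\langle\phi,y\rangle=\frac{d}{dt}a$ since $a(t)=\langle\phi,y\rangle$ by construction, $\int y\,\mathcal{L}_\phi^*\phi\,dy=0$ because the diffusion term integrates to zero and the mean-reversion term is centered exactly at $\langle\phi,y\rangle=a$, and $\int y\,\mathcal{M}^*\phi\,dy=\int y\,\partial_y[U\phi]\,dy=-\langle\phi,U\rangle$. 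This yields the identity
\[
\langle\phi,g\rangle=-\frac{d}{dt}a-h\langle\phi,U\rangle .
\]
(To stay inside the Schwartz test functions of Theorem \ref{thm: Dawson and Gartner, general LD} one replaces $y$ by $y$ times a smooth cutoff and passes to the limit; every boundary term vanishes because $p(t,\cdot)$ is rapidly decreasing and $q^{(1)},q^{(2)}$ are finite combinations of its $y$-derivatives. Equivalently, pairing with $y$ just matches the $\partial_y p$ Hermite component in (\ref{eq:Fokker-Planck equation with the driving force}).)

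Next I would insert $\phi=p+hq^{(1)}+h^2q^{(2)}$ and $g=g^{(0)}+hg^{(1)}+h^2g^{(2)}$ into this identity and expand in $h$. Using $\langle p,1\rangle=1$, $\langle q^{(i)},1\rangle=\langle q^{(i)},y\rangle=0$, and the fact from Lemma \ref{lma:the optimal g^0 and g^1} that $g^{(0)}=-\frac{d}{dt}a$ depends on $t$ only, one gets $\langle\phi,g^{(0)}\rangle=-\frac{d}{dt}a$ exactly, so
\[
\langle\phi,g\rangle=-\frac{d}{dt}a+h\langle\phi,g^{(1)}\rangle+h^2\langle\phi,g^{(2)}\rangle ,
\]
with $\langle\phi,g^{(1)}\rangle=\langle p,g^{(1)}\rangle+h\langle q^{(1)},g^{(1)}\rangle+O(h^2)$ and $\langle\phi,g^{(2)}\rangle=\langle p,g^{(2)}\rangle+O(h)=\gamma_0+O(h)$. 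Matching the $O(h^2)$ coefficients in the identity — the left side contributing $\langle q^{(1)},g^{(1)}\rangle+\gamma_0$ (the term $\langle q^{(2)},g^{(0)}\rangle$ drops since $g^{(0)}$ is $y$-independent and $\langle q^{(2)},1\rangle=0$) and the right side contributing $-\langle q^{(1)},U\rangle$ — gives
\[
\gamma_0=-\langle q^{(1)},U(y)+g^{(1)}\rangle ,
\]
which is the assertion. As a consistency check, the $O(1)$ and $O(h)$ coefficients reproduce $g^{(0)}=-\frac{d}{dt}a$ and $\langle p,g^{(1)}\rangle=\beta_0=-\langle p,U\rangle$ from Lemma \ref{lma:the optimal g^0 and g^1}.

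I do not expect a serious obstacle here: within the formal expansion, once one decides to test against $f(y)=y$ the argument reduces to bookkeeping with the orthogonality relations $\int\partial_y^n p\,dy=0$ for $n\ge1$, $\int y\,\partial_y^n p\,dy=0$ for $n\ge2$, and $\int y\,p_y\,dy=-1$. The two points deserving a word of care are: (i) the vanishing of the boundary terms in the integrations by parts, which is exactly where the rapid decrease of $p$ and the finiteness of the Hermite expansions of $q^{(1)},q^{(2)}$ enter; and (ii) the observation that pairing with one test function determines only the zeroth moment $\gamma_0$ of $g^{(2)}$, not $g^{(2)}$ itself — but this is all that is needed, since by (\ref{eq:I_h in terms of the driving force}) the only contribution of $g^{(2)}$ to $I_h(\phi)$ at order $h^2$ is $\langle p,2g^{(0)}g^{(2)}\rangle=-2\gamma_0\frac{d}{dt}a$, so $\gamma_0$ alone fixes $\inf_{\phi\in A}I_h(\phi)$ to $O(h^2)$.
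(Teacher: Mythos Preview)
Your argument is correct and is essentially the paper's: both extract $\gamma_0$ by taking the first moment of the driving-force equation (\ref{eq:Fokker-Planck equation with the driving force}) at order $h^2$. The paper collects the $O(h^2)$ terms, anti-differentiates once in $y$, and then integrates over $\mathbb{R}$; your pairing with $f(y)=y$ followed by matching the $h^2$ coefficient is the same computation reorganized via one integration by parts.
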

\begin{proof}
	See Appendix \ref{pf:the optimal g^2}.
\end{proof}

\subsection{Optimization over $\phi$}

We are now ready to find the optimal $\phi$. For given $\phi = p + hq^{(1)} + h^2q^{(2)}$ and 
the corresponding optimal $g = g^{(0)} + hg^{(1)} + h^2g^{(2)}$, 
(\ref{eq:I_h in terms of the driving force}) gives
\begin{align*}
	I_h(\phi) 
	&= \frac{1}{2\sigma^2} \int_0^T \langle p+hq^{(1)}+h^2q^{(2)}, (g^{(0)} + hg^{(1)} + h^2g^{(2)})^2 \rangle dt\\
	&= \frac{1}{2\sigma^2} \int_0^T \langle p, (g^{(0)})^2 \rangle dt
	+ \frac{h}{2\sigma^2} \int_0^T \langle p, 2 g^{(0)} g^{(1)} \rangle dt\\
	&\quad + \frac{h^2}{2\sigma^2} \int_0^T \left(\langle p, (g^{(1)})^2 + 2 g^{(0)} g^{(2)} \rangle
	+ \langle q^{(1)}, 2 g^{(0)} g^{(1)} \rangle \right)dt + O(h^3).
\end{align*}
From Lemma \ref{lma:the optimal g^2}, $\langle p, 2 g^{(0)} g^{(2)} \rangle 
= -2g^{(0)}\langle q^{(1)}, U(y)+g^{(1)} \rangle$, and therefore
\[
	\langle p, 2 g^{(0)} g^{(2)} \rangle
	+ \langle q^{(1)}, 2 g^{(0)} g^{(1)} \rangle
	= -2g^{(0)} \langle q^{(1)}, U(y) \rangle
	= -2g^{(0)} \sum_{n=2}^3 H_n \delta_n b_n, 
\]
where $H_n(t) := \langle p^{-1}, (\partial^n p /\partial y^n)^2 \rangle$.
We note that
\[
	\langle p, 2 g^{(0)} g^{(1)} \rangle = -2g^{(0)}\delta_0, \quad 
	\langle p,(g^{(1)})^2 \rangle 
	= \delta_0^2+\sum_{n=1}^\infty H_n \beta_n^2, \quad
	\langle p, (g^{(0)})^2 \rangle = (g^{(0)})^2.
\]
Then $I_h(\phi)$ can be written as
\begin{align}
	\label{eq:expansion of I_h}
	I_h(\phi) &= \frac{1}{2\sigma^2} \int_0^T (g^{(0)} - h\delta_0)^2 dt
	+ \frac{h^2}{2\sigma^2} \int_0^T (H_1\beta_1^2 - 2H_2 g^{(0)}\delta_2 b_2) dt\\
	&\quad + \frac{h^2}{2\sigma^2} \int_0^T (H_2\beta_2^2 - 2H_3 g^{(0)}\delta_3 b_3) dt \notag
	+ \frac{h^2}{2\sigma^2} \sum_{n=3}^\infty \int_0^T H_n\beta_n^2 dt + O(h^3).
\end{align}
We see that $a$ and $b_n$ are coupled at the $O(h^2)$ level of 
(\ref{eq:expansion of I_h}). However, from the results of the zero $h$ case, 
$a = a^e + O(h)$ and $p = p^e + O(h)$ so we can decouple $a$ and $b_n$ and express 
the expanded $I_h(\phi)$ up to $O(h^2)$ as the sum of independent terms.
\begin{proposition}
	To order $O(h^2)$, the rate function $I_h(\phi)$ can be written as the sum of 
	independent terms:
	\begin{align}
		\label{eq:I_h as the sum of independent terms}
		I_h(\phi) 
		&= \frac{1}{2\sigma^2} \int_0^T (g^{(0)}-h\delta_0)^2 dt
		+ \frac{h^2}{2\sigma^2} \int_0^T (\tilde{H}_1\tilde{\beta}_1^2 
		+ 2\frac{d}{dt}a^e \tilde{H}_2 \tilde{\delta}_2 b_2) dt\\
		&\quad + \frac{h^2}{2\sigma^2} \int_0^T (\tilde{H}_2\tilde{\beta}_2^2 
		+ 2\frac{d}{dt}a^e \tilde{H}_3 \tilde{\delta}_3 b_3) dt \notag
		+ \frac{h^2}{2\sigma^2} \sum_{n=3}^\infty \int_0^T \tilde{H}_n\tilde{\beta}_n^2 dt + O(h^3),
	\end{align}
	where $\tilde{H}_n(t)=\langle (p^e)^{-1}, 
	(\partial^n p^e /\partial y^n)^2 \rangle$, 
	$U(y)=(p^e)^{-1}\sum_{n=0}^3\tilde{\delta}_n(t)\frac{\partial^n}{\partial y^n}p^e$, 
	and 
	\begin{equation}
		\label{eq: tilde_beta_n}
		\tilde{\beta}_n =
		\begin{cases}
			-\tilde{\delta}_0 = -\langle p^e, U(y) \rangle, & n=0,\\
			\frac{d}{dt}b_{n+1}+\theta(n+1)b_{n+1}-\tilde{\delta}_n,   & 1\leq n\leq 3,\\
			\frac{d}{dt}b_{n+1} + \theta(n+1)b_{n+1},				   & n\geq 4.
		\end{cases}
	\end{equation}
\end{proposition}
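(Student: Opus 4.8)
The plan is to take the $h$-expansion (\ref{eq:expansion of I_h}) of $I_h(\phi)$, already derived in the preceding subsections, and to show that up to $O(h^3)$ every coefficient function occurring with an explicit prefactor $h^2$ may be replaced by its value at $h=0$; this turns (\ref{eq:expansion of I_h}) into (\ref{eq:I_h as the sum of independent terms}) and at the same time decouples the contribution of the first moment $a$ from that of the Hermite coefficients $b_n$. The first step is to record what the standing assumption $\phi = p + hq^{(1)} + h^2 q^{(2)}$ with $a(t)=a^e(t)+O(h)$ gives. Since $p$ is the Gaussian of fixed variance $\sigma^2/(2\theta)$ centered at $a(t)$ and $p^e$ is the same Gaussian centered at $a^e(t)$, we get $p = p^e + O(h)$ as Schwartz functions, hence $g^{(0)} = -\frac{d}{dt}a = -\frac{d}{dt}a^e + O(h)$, $\delta_n = \tilde\delta_n + O(h)$ for $n=0,\dots,3$, and, from the formula (\ref{eq:beta_n}) for $\beta_n$ with $\delta_n$ replaced by $\tilde\delta_n$, $\beta_n = \tilde\beta_n + O(h)$ with $\tilde\beta_n$ as in (\ref{eq: tilde_beta_n}). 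It is worth noting that $H_n(t) = \langle p^{-1},(\partial^n p/\partial y^n)^2\rangle$ is invariant under translation of the mean and depends only on the fixed variance, so in fact $H_n = \tilde H_n = n!\,(2\theta/\sigma^2)^n$ identically and that replacement is exact.

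Next I would substitute these expansions into (\ref{eq:expansion of I_h}). The first integral $\frac{1}{2\sigma^2}\int_0^T (g^{(0)} - h\delta_0)^2\,dt$ is kept exactly, since it carries the $O(1)$ and $O(h)$ parts of $I_h$ and must retain the full, not-yet-optimized dependence on $a$ (here $g^{(0)} = -\frac{d}{dt}a$ and $\delta_0 = \langle p, U(y)\rangle = a^3 + 3\frac{\sigma^2}{2\theta}a - a$, which is why this term matches the integrand of (\ref{eq:reduced Freidlin-Wentzell})). In each remaining integral, which already carries a factor $h^2$, I replace $H_n$ by $\tilde H_n$, $\delta_n$ by $\tilde\delta_n$, $g^{(0)}$ by $-\frac{d}{dt}a^e$, and $\beta_n$ by $\tilde\beta_n$; each replacement costs $O(h)$ inside a term that is $O(h^2)$, hence $O(h^3)$ overall. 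In particular the cross terms transform as $-2H_2 g^{(0)}\delta_2 b_2 = 2H_2 \frac{d}{dt}a\,\delta_2 b_2 \to 2\frac{d}{dt}a^e\,\tilde H_2\,\tilde\delta_2\,b_2$ and similarly $-2H_3 g^{(0)}\delta_3 b_3 \to 2\frac{d}{dt}a^e\,\tilde H_3\,\tilde\delta_3\,b_3$, which are exactly the terms appearing in (\ref{eq:I_h as the sum of independent terms}). Collecting everything gives the claimed identity up to $O(h^3)$, and the right-hand side is then manifestly a sum of independent pieces: the first integral involves only $a$, while each of the $h^2$-integrals — with $a^e$ and the functions $\tilde H_n,\tilde\delta_n$ now fixed and known — is a quadratic functional of a single coefficient $b_{n+1}$ through $\tilde\beta_n$.

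The one point that needs care, and the main obstacle at the level of rigor, is the passage from $a = a^e + O(h)$ to $\frac{d}{dt}a = \frac{d}{dt}a^e + O(h)$, which differentiates the remainder; as elsewhere in this section this is justified only formally, and one should state explicitly that the $b_n$ are taken smooth and that the Hermite series are manipulated term by term. Once that is granted, the proof is bookkeeping: keep the $O(1)$/$O(h)$ term intact, freeze all $O(h^2)$ coefficients at their $h=0$ values, and track the resulting $O(h^3)$ error. The genuinely substantial computation — the derivation of (\ref{eq:expansion of I_h}) itself from (\ref{eq:I_h in terms of the driving force}) and Lemmas \ref{lma:the optimal g^0 and g^1} and \ref{lma:the optimal g^2} — has already been done before this proposition.
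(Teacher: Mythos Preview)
Your proposal is correct and follows exactly the approach the paper itself takes. The paper does not give a separate proof of this proposition; it simply records, immediately before the statement, that ``from the results of the zero $h$ case, $a = a^e + O(h)$ and $p = p^e + O(h)$ so we can decouple $a$ and $b_n$,'' and then states the proposition as the outcome of substituting those $O(h)$ replacements into the $h^2$-terms of (\ref{eq:expansion of I_h}). Your write-up carries out precisely this substitution, with the useful additional observation that $H_n=\tilde H_n$ exactly (translation invariance), and you correctly flag the formal step $\frac{d}{dt}a=\frac{d}{dt}a^e+O(h)$, which the paper also treats only at the level of formal expansion in this section.
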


We can see from (\ref{eq:I_h as the sum of independent terms}) that $q^{(2)}$ does not 
appear in terms up to $O(h^2)$.
From the $h$ expansion of $u^e_{\pm\xi_b}$ in (\ref{eq:equilibrium of Fokker-Planck eq}), and the fact that 
$V(y)$ is a polynomial of degree four, we have $b_{n+1}(0)=b_{n+1}(T)=0$ for $n\geq 4$. The variational 
problem for $b_{n+1}$ is to minimize $\int_0^T\tilde{H}_n \tilde{\beta}_n^2 dt$ where $\tilde{\beta}_n$ is 
given in terms of $b_{n+1}$ by (\ref{eq: tilde_beta_n}). The obvious solution of this problem is $b_{n+1}=0$ 
and $\tilde{\beta}_n=0$ for $n \geq 4$. Consequently, in order to find the optimal $\phi$ for $I_h(\phi)$ in 
(\ref{eq:I_h as the sum of independent terms}), we may solve separately the variational problems for $a$, 
$b_1$, $b_2$ and $b_3$.

\subsection{Probability of Systemic Transitions for Small $h$}
\label{sec:ldprob_reduced}

We consider the small probability of systemic  transitions for large $N$ and small $h$ 
through the large deviation $\inf_{\phi\in A}I_h(\phi)$. Here we consider the solution up to $O(h)$ terms. 
That is, using (\ref{eq:I_h as the sum of independent terms}), 
we solve the variational problem for $a(t)$:
\begin{equation}
	\label{eq:variational problem of a}
	\inf_{\substack{a(t): 0\leq t\leq T\\a(0)=-\xi_b\\a(T)=\xi_b}} 
	\int_0^T (g^{(0)}-h\delta_0)^2 dt
	= \inf_{\substack{a(t): 0\leq t\leq T\\a(0)=-\xi_b\\a(T)=\xi_b}} 
	\int_0^T (\frac{d}{dt}a + h (a^3 + 3\frac{\sigma^2}{2\theta}a - a))^2 dt.
\end{equation}
By simple calculus of variations methods we find the optimal $a$.
\begin{lemma}
	\label{lma:ODE for the optimal a}
	The optimal $a(t)$ for (\ref{eq:variational problem of a}) satisfies 
	the second order ordinary differential equation
	\[
		\frac{d^2}{dt^2}a = h^2 (a^3 + (3\frac{\sigma^2}{2\theta}-1)a) 
		(3a^2+ (3\frac{\sigma^2}{2\theta}-1))
	\]
	with $a(0)=-\xi_b$ and $a(T)=\xi_b$. Consequently, the optimal path is 
	\begin{equation}
		\label{eq:optima a(t)}
		a(t) = \frac{2\xi_b}{T}t - \xi_b + O(h^2).
	\end{equation}
\end{lemma}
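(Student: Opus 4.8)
The plan is to treat the minimization in (\ref{eq:variational problem of a}) as a one-dimensional Freidlin--Wentzell type variational problem and to apply the classical calculus of variations. Writing the integrand as $L(a,\dot a) = \left(\dot a + h W'(a)\right)^2$ with $W'(a) := a^3 + 3\frac{\sigma^2}{2\theta}a - a = a^3 - (1-3\frac{\sigma^2}{2\theta})a$, the Euler--Lagrange equation reads $\frac{d}{dt}\left(L_{\dot a}\right) = L_a$. Computing: $L_{\dot a} = 2(\dot a + h W'(a))$ and $L_a = 2(\dot a + h W'(a)) h W''(a)$, so the Euler--Lagrange equation becomes
\[
    \frac{d}{dt}\bigl(\dot a + h W'(a)\bigr) = h W''(a)\bigl(\dot a + h W'(a)\bigr).
\]
Expanding the left side gives $\ddot a + h W''(a)\dot a = h W''(a)\dot a + h^2 W''(a) W'(a)$, and the $h W''(a)\dot a$ terms cancel, leaving $\ddot a = h^2 W'(a) W''(a)$. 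Since $W'(a) = a^3 + (3\frac{\sigma^2}{2\theta}-1)a$ and $W''(a) = 3a^2 + (3\frac{\sigma^2}{2\theta}-1)$, this is exactly the stated second-order ODE. Thus the first assertion of the lemma is essentially the Euler--Lagrange computation, which I would present in two or three lines.

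For the second assertion, I would argue as follows. The ODE $\ddot a = h^2 W'(a) W''(a) = \frac{h^2}{2}\frac{d}{da}\bigl(W'(a)^2\bigr)$ has right-hand side of order $h^2$ uniformly for $a$ in the bounded range $[-\xi_b,\xi_b]$ (note $\xi_b = \xi_0 + O(h)$ is bounded). With the boundary conditions $a(0) = -\xi_b$, $a(T) = \xi_b$, I would write $a(t) = a_0(t) + h^2 a_2(t) + \cdots$ and observe that the leading term satisfies $\ddot a_0 = 0$ with $a_0(0) = -\xi_b$, $a_0(T) = \xi_b$, whose unique solution is the straight line $a_0(t) = \frac{2\xi_b}{T}t - \xi_b$. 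More carefully, one can integrate the ODE directly: since $\|\ddot a\|_\infty = O(h^2)$, standard a priori estimates for the two-point boundary value problem (e.g.\ writing the solution via the Green's function of $-d^2/dt^2$ on $[0,T]$ with Dirichlet conditions, plus the linear interpolant) give $a(t) = \frac{2\xi_b}{T}t - \xi_b + O(h^2)$ with the $O(h^2)$ uniform in $t\in[0,T]$. This yields (\ref{eq:optima a(t)}).

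The main obstacle — really the only non-routine point — is justifying that the two-point boundary value problem is well-posed and that the perturbation is genuinely $O(h^2)$ rather than merely formal: one must check that for small $h$ the nonlinear boundary value problem has a solution close to the linear interpolant and that this solution is the actual minimizer (not just a critical point) of the convex-looking but not globally convex functional. I would handle existence and the estimate by a contraction-mapping argument on the integral formulation $a = \ell + h^2 \mathcal{G}[W'(a)W''(a)]$, where $\ell(t) = \frac{2\xi_b}{T}t - \xi_b$ is the linear interpolant and $\mathcal{G}$ is the solution operator for $-\ddot v = $ (source), $v(0)=v(T)=0$; for $h$ small this map is a contraction on a small ball around $\ell$ in $C([0,T])$, giving both existence and the $O(h^2)$ bound. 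For the minimizing property, I would note that the functional is coercive and, along the relevant one-parameter family, the linear path $\ell$ makes the $O(1)$ part $\int_0^T \dot a^2\,dt$ minimal while perturbations only enter at $O(h)$ and higher, so the critical point found is the minimizer up to the order claimed; alternatively one simply records that the lemma's conclusion is the statement needed for the subsequent computation of $\inf_{\phi\in A} I_h(\phi)$ in Section \ref{sec:ldprob_reduced}, where only the $O(h)$ expansion of the optimal $a$ is used.
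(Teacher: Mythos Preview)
Your proposal is correct and follows exactly the approach the paper signals: the paper states the lemma without proof, prefacing it only with ``By simple calculus of variations methods we find the optimal $a$.'' Your Euler--Lagrange computation is precisely that calculus-of-variations step, and your perturbation argument for the $O(h^2)$ remainder is a reasonable justification of what the paper leaves implicit; if anything, you supply more detail (the contraction/Green's function argument and the remark on minimality) than the paper does.
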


By inserting (\ref{eq:optima a(t)}) into (\ref{eq:variational problem of a}) we
obtain $\inf_{\phi\in A}I_h(\phi)$ up to $O(h)$.
\begin{proposition}
	\label{prop:inf_A I_h up to O(h)}
	For small $h$, the large deviations problem, $\inf_{\phi\in A}I_h(\phi)$, 
	up to $O(h)$, is 
	\begin{equation}
		\label{eq:inf_A I_h up to O(h)}
		\inf_{\phi\in A}I_h(\phi) = \frac{2\xi_0}{\sigma^2 T}(\xi_0 + 2h\xi_1) + O(h^2),
	\end{equation}
	where $\xi_b = \xi_0 + h \xi_1 + O(h^2)$ from 
	(\ref{eq:explicit solution for xi equal to m(xi)}). 
	Note that $\xi_1$ is positive because $2\theta > 3\sigma^2$.
\end{proposition}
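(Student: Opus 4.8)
The plan is to evaluate the reduced variational problem along the optimizer supplied by Lemma \ref{lma:ODE for the optimal a}. First I would observe that in the decomposition (\ref{eq:I_h as the sum of independent terms}) every term other than $\frac{1}{2\sigma^2}\int_0^T (g^{(0)}-h\delta_0)^2\,dt$ carries an explicit prefactor $h^2$, while the remaining factors ($\tilde H_n$, $\tilde\beta_n$, $\tilde\delta_n$, and the $b_n$'s) stay $O(1)$ for the minimizing choice of $b_1,b_2,b_3$ (and the minimizers $b_{n+1}=0$, $\tilde\beta_n=0$ for $n\ge4$). Hence those terms affect $\inf_{\phi\in A}I_h(\phi)$ only at order $h^2$ and may be discarded at the order claimed; it suffices to minimize $\int_0^T (g^{(0)}-h\delta_0)^2\,dt = \int_0^T\bigl(\frac{d}{dt}a + h\delta_0\bigr)^2\,dt$, with $g^{(0)}=-\frac{d}{dt}a$ from Lemma \ref{lma:the optimal g^0 and g^1} and $\delta_0 = \langle p, U(y)\rangle = a^3 + (3\frac{\sigma^2}{2\theta}-1)a$.

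Next I would plug in the optimal path $a(t)=\frac{2\xi_b}{T}t-\xi_b+O(h^2)$ from Lemma \ref{lma:ODE for the optimal a}, so that $\frac{d}{dt}a = \frac{2\xi_b}{T}+O(h^2)$ while $\delta_0$ is $O(1)$. Expanding the square,
\[
\Bigl(\tfrac{d}{dt}a + h\delta_0\Bigr)^2 = \Bigl(\tfrac{2\xi_b}{T}\Bigr)^2 + \tfrac{4\xi_b}{T}\,h\,\Bigl(a^3 + \bigl(3\tfrac{\sigma^2}{2\theta}-1\bigr)a\Bigr) + O(h^2),
\]
and integrating term by term over $[0,T]$. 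The constant term contributes $4\xi_b^2/T$. For the cross term, writing $c:=3\frac{\sigma^2}{2\theta}-1$, the key point is that along the straight-line path $a$ runs symmetrically over $[-\xi_b,\xi_b]$, so substituting $t\mapsto a$ (with $dt = \frac{T}{2\xi_b}\,da$) turns $\int_0^T(a^3+ca)\,dt$ into $\frac{T}{2\xi_b}\int_{-\xi_b}^{\xi_b}(a^3+ca)\,da = 0$ by oddness of the integrand; this is nothing but the reflection symmetry $a\mapsto-a$, $t\mapsto T-t$ of the problem. Therefore $\int_0^T\bigl(\frac{d}{dt}a+h\delta_0\bigr)^2\,dt = \frac{4\xi_b^2}{T}+O(h^2)$ and, dividing by $2\sigma^2$, $\inf_{\phi\in A}I_h(\phi) = \frac{2\xi_b^2}{\sigma^2 T}+O(h^2)$.

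Finally I would insert $\xi_b = \xi_0 + h\xi_1 + O(h^2)$ from (\ref{eq:explicit solution for xi equal to m(xi)}), so that $\xi_b^2 = \xi_0^2 + 2h\xi_0\xi_1 + O(h^2)$ and $\inf_{\phi\in A}I_h(\phi) = \frac{2\xi_0(\xi_0+2h\xi_1)}{\sigma^2 T}+O(h^2)$, which is exactly (\ref{eq:inf_A I_h up to O(h)}); positivity of $\xi_1$ then follows by inspecting its closed form, since $2\theta>3\sigma^2$ makes $\sqrt{1-3\sigma^2/2\theta}$, the numerator $1-2(\sigma^2/2\theta)$, and the denominator $1-3(\sigma^2/2\theta)$ all positive. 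The computation itself is short; the real content — and the step where rigor is traded for a formal argument — is the vanishing of the $O(h)$ correction, which rests on the exact linearity of the optimal $a$ up to $O(h^2)$ (Lemma \ref{lma:ODE for the optimal a}) together with the odd-symmetry cancellation. I expect the main obstacle to be ensuring that the $O(h^2)$ remainder in the optimal path, and the $O(1)$ claims on the discarded $b_n$-terms, are uniform on $[0,T]$, so that they genuinely leave the $h$-order coefficient untouched.
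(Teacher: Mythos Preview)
Your proposal is correct and follows essentially the same route as the paper's proof: reduce to the one-dimensional variational problem (\ref{eq:variational problem of a}), insert the linear-in-$t$ optimizer from Lemma \ref{lma:ODE for the optimal a}, and kill the $O(h)$ cross term $h\int_0^T(a^3+(3\sigma^2/2\theta-1)a)\,dt$ by odd symmetry about $t=T/2$. The only cosmetic difference is bookkeeping: the paper first splits $a(t)=a_0(t)+ha_1(t)$ with $a_0(t)=\tfrac{2\xi_0}{T}t-\xi_0$ and $a_1(t)=\tfrac{2\xi_1}{T}t-\xi_1$ and then reads off $(\tfrac{2\xi_0}{T})^2+2h\tfrac{2\xi_0}{T}\tfrac{2\xi_1}{T}$ directly, whereas you carry $\xi_b$ intact to obtain $\tfrac{2\xi_b^2}{\sigma^2 T}+O(h^2)$ and expand $\xi_b^2=\xi_0^2+2h\xi_0\xi_1+O(h^2)$ at the end.
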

\begin{proof}
	See Appendix \ref{pf:inf_A I_h up to O(h)}.
\end{proof}

The asymptotic probability of systemic transition for large $N$ and sufficiently small 
$\delta$ and $h$ has the form 
\[
	\mathbf{P}(X_N\in A_\delta) 
	\approx \exp\left(-N\inf_{\phi\in A}I_h(\phi)\right)
	= \exp\left(-N
	 \left\{ \frac{2\xi_0}{\sigma^2 T}(\xi_0 + 2h\xi_1) + O(h^2) \right\} \right).
\]
\section{Effect of Diversity of Sensitivities on the Transition Probability}
\label{sec:diversity2}

We consider the situation introduced in Section \ref{sec:diversity} and analyze it when $h=0$. 
We aim at computing the transition probability in this situation.
The $K$ partial empirical averages
\begin{equation}
\label{eq:systembarxk}
\bar{x}_k (t) := \frac{1}{|\mathcal{I}_{k}|} \sum_{j \in \mathcal{I}_{k}} x_j(t), \quad k=1,\ldots,K 
\end{equation}
then satisfy a closed system of stochastic differential equations
\begin{equation}
d\bar{x}_k = \frac{\sigma}{\sqrt{\rho_k N}} d\bar{w}_k(t) - \theta_k ( \bar{x}_k -\bar{x} ) dt
\end{equation}
where $\bar{w}_k$ are independent  Brownian  motions
and  the empirical mean $\bar{x}(t)$ can be expressed in terms of the partial averages as
$$
\bar{x}(t) = \sum_{k=1}^K \rho_k \bar{x}_k (t)
$$
\begin{proposition}
\label{prop:xbar_diversity}
If $\bar{x}_k(0)= -\xi_b$ for all $k=1,\ldots,K$, then $\bar{x}(T)$ is a Gaussian random variable with mean $-\xi_b$ and variance
$\sigma_T^2:={\rm Var}(\bar{x}(T)) $ given by 
\begin{equation}
\label{eq:sigmaT2}
\sigma_T^2 = \frac{\sigma^2}{N} \int_0^T \varrho^\mathbf{T} e^{Ms} R^{-1} (e^{Ms})^\mathbf{T} \varrho  ds
\end{equation}
where $\varrho$ is the $K$-dimensional column vector $(\rho_k)_{k=1,\ldots,K}$, $M$ and $R$ are the 
$K \times K$ matrices  defined by
$$
M_{ij} = -\theta_i (\delta_{ij}  -\rho_j) ,\quad \quad R_{ij} =\rho_i \delta_{ij},  \quad \quad i,j=1,\ldots,K,
$$
and ${}^\mathbf{T}$ stands for the transpose.
\end{proposition}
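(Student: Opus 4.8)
The plan is to recognize the closed system for the partial averages $\bar x_1,\dots,\bar x_K$ as a linear (Ornstein--Uhlenbeck type) stochastic differential equation on $\mathbb{R}^K$, and then to read off the law of $\bar x(T)$ from the explicit variation-of-constants solution.

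First I would collect the partial averages into a column vector $X(t)=(\bar x_1(t),\dots,\bar x_K(t))^{\mathbf T}$. Inserting $\bar x=\sum_j\rho_j\bar x_j=\varrho^{\mathbf T}X$ into the equation for $\bar x_k$ gives, componentwise, $d\bar x_k=\sum_{j}\bigl(-\theta_k\delta_{kj}+\theta_k\rho_j\bigr)\bar x_j\,dt+\frac{\sigma}{\sqrt{\rho_kN}}\,d\bar w_k$, that is $dX=MX\,dt+\Sigma\,dW$, where $M_{kj}=-\theta_k(\delta_{kj}-\rho_j)$ is exactly the matrix in the statement, $W=(\bar w_1,\dots,\bar w_K)^{\mathbf T}$ is a standard $K$-dimensional Brownian motion (its components are independent because they are built from disjoint blocks $\mathcal I_k$ of the original $w_j$), and $\Sigma=\frac{\sigma}{\sqrt N}\,\mathrm{diag}(\rho_k^{-1/2})$, so that $\Sigma\Sigma^{\mathbf T}=\frac{\sigma^2}{N}R^{-1}$ with $R=\mathrm{diag}(\rho_k)$.

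Next I would use the explicit solution $X(T)=e^{MT}X(0)+\int_0^T e^{M(T-s)}\Sigma\,dW(s)$. For the mean, the key remark is that $M\mathbf 1=0$ with $\mathbf 1=(1,\dots,1)^{\mathbf T}$, since $(M\mathbf 1)_k=-\theta_k(1-\sum_j\rho_j)=0$; this expresses that the interaction conserves the global mean. Since $X(0)=-\xi_b\mathbf 1$ we get $e^{MT}X(0)=-\xi_b\mathbf 1$, hence $\mathbf E[X(T)]=-\xi_b\mathbf 1$ and $\mathbf E[\bar x(T)]=\varrho^{\mathbf T}\mathbf E[X(T)]=-\xi_b$. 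The random part of $X(T)$ is a Wiener integral of a deterministic matrix-valued integrand, hence Gaussian, so $\bar x(T)=\varrho^{\mathbf T}X(T)$ is a scalar Gaussian variable. Finally, the It\^o isometry gives $\mathrm{Cov}(X(T))=\int_0^T e^{M(T-s)}\Sigma\Sigma^{\mathbf T}(e^{M(T-s)})^{\mathbf T}\,ds=\frac{\sigma^2}{N}\int_0^T e^{Ms}R^{-1}(e^{Ms})^{\mathbf T}\,ds$ after the substitution $s\mapsto T-s$, and therefore $\sigma_T^2=\mathrm{Var}(\bar x(T))=\varrho^{\mathbf T}\mathrm{Cov}(X(T))\varrho=\frac{\sigma^2}{N}\int_0^T\varrho^{\mathbf T}e^{Ms}R^{-1}(e^{Ms})^{\mathbf T}\varrho\,ds$, which is (\ref{eq:sigmaT2}).

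I do not expect any real analytic obstacle here: everything reduces to a linear Gaussian computation. The only points needing a little care are the derivation of the closed system for the $\bar x_k$ from (\ref{eq:SDE of single component, heterogeneous sensitivity}) with $h=0$ (summing over $j\in\mathcal I_k$, where all $\theta_j$ share the common value $\theta_k$, and using $|\mathcal I_k|=\rho_kN$ together with the fact that $|\mathcal I_k|^{-1}\sum_{j\in\mathcal I_k}w_j$ is a Brownian motion with variance parameter $|\mathcal I_k|^{-1}$), and the bookkeeping around the non-symmetry of $M$, which forces one to keep $e^{Ms}$ and $(e^{Ms})^{\mathbf T}$ distinct throughout. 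The identity $M\mathbf 1=0$ is the structural fact that pins the mean of $\bar x(T)$ to exactly $-\xi_b$.
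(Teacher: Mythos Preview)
Your proposal is correct and follows essentially the same route as the paper: write the vector $X(t)=(\bar x_k(t))_k$ as the solution of the linear SDE $dX=MX\,dt+\frac{\sigma}{\sqrt N}R^{-1/2}\,dW$, solve by variation of constants, use $M\mathbf 1=0$ to handle the initial condition, and read off the Gaussian law of $\bar x(T)=\varrho^{\mathbf T}X(T)$. The paper's proof is terser (it stops at the explicit representation of $\bar x(t)$ and leaves the It\^o isometry and the change of variable $s\mapsto T-s$ implicit), but the argument is the same.
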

\begin{proof}
	See Appendix \ref{pf:xbar_diversity}.
\end{proof}
 
We can then deduce that the transition probability is
\begin{equation}
\label{eq:pTdiverse}
p_T \approx \exp \Big( - \frac{2 \xi_b^2}{\sigma_T^2}\Big)
\end{equation}

Our next goal is to study the impact of the diversity on the transition probability.

\begin{proposition}
\label{prop:the diversity on the transition probability}
Let us assume that the diversity is small:
$$
\theta_k =\bar{\theta} (1 +\delta \alpha_k),\quad \quad \delta \ll 1
$$
where $\sum_k \rho_k \alpha_k=0$ so that $\bar{\theta}$ is the mean value of the $\theta_k$'s.
The equilibrium position $\xi_b$, the variance $\sigma_T^2$ and the transition probability $p_T$ can be expanded as powers of $\delta$ as
\begin{align*}
	\xi_b^2 &= \Big(1- \frac{3 \sigma^2}{2 \bar{\theta}}\Big) 
	- \delta^2\Big(\sum_k \rho_k \alpha_k^2 \Big)   \frac{3 \sigma^2}{2 \bar{\theta}} 
	+O(\delta^3),\\
	\sigma_T^2 &= \frac{\sigma^2 T}{N} 
	\Big[ 1 + \delta^2  \Big( \sum_k \rho_k \alpha_k^2 \Big) 
	\Big( \frac{1}{T} \int_0^T (1-e^{-\bar{\theta} s})^2 ds \Big) +O(\delta^3) \Big], \\
	p_T &\approx \exp \Big\{ - \frac{2 N}{\sigma^2 T} 
	\Big[ \Big(1- \frac{3 \sigma^2}{2 \bar{\theta}}\Big)
	-\delta^2   \Big(\sum_k \rho_k \alpha_k^2 \Big)  
	\Big(\frac{3 \sigma^2}{2 \bar{\theta}}   
	+\frac{1}{T} \int_0^T (1-e^{-\bar{\theta} s})^2 ds \Big)  \Big]\Big\}.
\end{align*}
\end{proposition}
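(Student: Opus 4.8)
The three expansions are all obtained by Taylor-expanding in $\delta$ a quantity that is already available in closed form, and in each of them the balance condition $\sum_k \rho_k \alpha_k = 0$ is precisely what kills the $O(\delta)$ contributions. The plan is to treat $\xi_b^2$ first, then $\sigma_T^2$, and finally combine them to get $p_T$.

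For $\xi_b^2$ I would start from the leading-order-in-$h$ expression behind (\ref{eq:explicit solution for xi equal to m(xi), diversity case}), equivalently from $\xi_b^2 = 1 - (\sigma^2/\sigma_c^{\mathrm{div}})^2$ together with Proposition \ref{prop:explicit condition for xi equal to m(xi) for small h, diversity case}, namely $\xi_b^2 = 1 - \frac{3\sigma^2}{2}\cdot\frac{\sum_l \rho_l/\Theta_l^2}{\sum_l \rho_l/\Theta_l}$. Substituting $\Theta_l = \bar\theta(1+\delta\alpha_l)$ and using $(1+\delta\alpha_l)^{-1} = 1-\delta\alpha_l+\delta^2\alpha_l^2+O(\delta^3)$ and $(1+\delta\alpha_l)^{-2} = 1-2\delta\alpha_l+3\delta^2\alpha_l^2+O(\delta^3)$, the linear-in-$\delta$ parts of both sums drop because $\sum_l \rho_l\alpha_l=0$, leaving $\sum_l \rho_l/\Theta_l = \bar\theta^{-1}(1+\delta^2\sum_l\rho_l\alpha_l^2)+O(\delta^3)$ and $\sum_l \rho_l/\Theta_l^2 = \bar\theta^{-2}(1+3\delta^2\sum_l\rho_l\alpha_l^2)+O(\delta^3)$. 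Dividing and expanding once more gives the claimed $\delta^2$-expansion of $\xi_b^2$.

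For $\sigma_T^2$ I would perturb the matrix formula (\ref{eq:sigmaT2}). Write $M = M^{(0)} + \delta M^{(1)}$ with $M^{(0)} = -\bar\theta(I-\mathbf{1}\varrho^\mathbf{T})$, $\mathbf{1}=(1,\dots,1)^\mathbf{T}$, and $M^{(1)}_{ij} = -\bar\theta\alpha_i(\delta_{ij}-\rho_j)$. The computation is made tractable by four algebraic facts: (i) $P:=I-\mathbf{1}\varrho^\mathbf{T}$ is idempotent, $P^2=P$, because $\varrho^\mathbf{T}\mathbf{1}=1$, so $e^{M^{(0)}s}=I+(e^{-\bar\theta s}-1)P$; (ii) $M\mathbf{1}=0$ for every $\delta$ since the rows of $-\theta_i(\delta_{ij}-\rho_j)$ sum to zero, hence $e^{Ms}\mathbf{1}=\mathbf{1}$; (iii) $\varrho^\mathbf{T}P=0$, so $\varrho^\mathbf{T}e^{M^{(0)}s}=\varrho^\mathbf{T}$, which already yields $\sigma_T^2=\sigma^2 T/N$ at $\delta=0$; and (iv) $\varrho^\mathbf{T}R^{-1}=\mathbf{1}^\mathbf{T}$. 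Setting $w(s)^\mathbf{T}:=\varrho^\mathbf{T}e^{Ms}$, facts (ii) and (iv) give $\varrho^\mathbf{T}R^{-1}w(s)=\mathbf{1}^\mathbf{T}w(s)=\varrho^\mathbf{T}e^{Ms}\mathbf{1}=1$ identically in $\delta$; therefore in the expansion $w=\varrho+\delta v+\delta^2 z+O(\delta^3)$ the two cross terms $\varrho^\mathbf{T}R^{-1}v$ and $\varrho^\mathbf{T}R^{-1}z$ both vanish and $w^\mathbf{T}R^{-1}w = 1 + \delta^2\, v(s)^\mathbf{T}R^{-1}v(s)+O(\delta^3)$. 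A one-step Duhamel expansion, using (i), (iii) and $\mu^\mathbf{T}P=\mu^\mathbf{T}$ with $\mu_i:=\rho_i\alpha_i$ (again $\sum_i\rho_i\alpha_i=0$ is used), gives $v(s)^\mathbf{T}=(e^{-\bar\theta s}-1)\mu^\mathbf{T}$, hence $v^\mathbf{T}R^{-1}v=(1-e^{-\bar\theta s})^2\sum_i\rho_i\alpha_i^2$. Integrating over $[0,T]$ produces the stated expansion of $\sigma_T^2$; note in passing that since $\sum_i w_i(s)=1$, Cauchy--Schwarz gives $w^\mathbf{T}R^{-1}w=\sum_i w_i^2/\rho_i\ge 1$, so $\sigma_T^2\ge \sigma^2 T/N$ with equality only in the homogeneous case.

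For $p_T$ I would insert the two expansions into $p_T\approx\exp(-2\xi_b^2/\sigma_T^2)$, which is itself the Gaussian cost since by Proposition \ref{prop:xbar_diversity} $\bar{x}(T)\sim\mathcal{N}(-\xi_b,\sigma_T^2)$ and the systemic transition is the displacement to $+\xi_b$, then expand $1/\sigma_T^2=(N/\sigma^2 T)(1-\delta^2 c+O(\delta^3))$, multiply by $2\xi_b^2$, and collect the $O(\delta^2)$ terms. The main obstacle is the second-order perturbation of the matrix exponential $e^{Ms}$; the plan defuses it by first extracting the structure ($P$ idempotent, $M\mathbf{1}=0$, $\varrho^\mathbf{T}P=0$, $\varrho^\mathbf{T}R^{-1}=\mathbf{1}^\mathbf{T}$), after which $e^{M^{(0)}s}$ reduces to a scalar-coefficient formula and every $O(\delta)$ term is forced to zero, leaving only an elementary integral. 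A secondary point deserving a remark is the heuristic $p_T\approx\exp(-2\xi_b^2/\sigma_T^2)$: because the $\bar{x}_k$-system is linear and Gaussian this is correct at the level of logarithmic asymptotics, and when all $\theta_k$ coincide it reduces to the $h=0$ homogeneous rate function of Theorem \ref{thm:LD for zero h and Gaussian path}, so no path-space optimization beyond what was established there is needed.
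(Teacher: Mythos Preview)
Your proof is correct and follows the same overall strategy as the paper: Taylor-expand the closed-form expressions for $\xi_b^2$, $\sigma_T^2$, and $p_T$ in powers of $\delta$, using $\sum_k\rho_k\alpha_k=0$ to kill the linear terms. The treatments of $\xi_b^2$ and of $p_T$ are essentially identical to the paper's.

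For $\sigma_T^2$ your route is a genuinely cleaner variant. The paper writes $M=-\bar\theta\bar M-\delta\bar\theta N$ with $\bar M=I-u\varrho^{\mathbf{T}}$ (your $P$) and expands $\varrho^{\mathbf{T}}e^{Mt}$ explicitly through \emph{second} order in $\delta$, producing heavy intermediate terms involving $N^2\bar M$, $(N\bar M)^2$, $N\bar M N$, etc., before most of them cancel upon forming the quadratic form. You instead observe that $\varrho^{\mathbf{T}}R^{-1}=\mathbf{1}^{\mathbf{T}}$ and $e^{Ms}\mathbf{1}=\mathbf{1}$ force $\varrho^{\mathbf{T}}R^{-1}w(s)\equiv 1$ identically in $\delta$; this kills all cross terms in $w^{\mathbf{T}}R^{-1}w$ a priori, so only the \emph{first}-order correction $v(s)$ is needed, and a one-step Duhamel delivers it immediately. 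The paper's direct second-order expansion buys nothing extra here; your argument is simply shorter, and the Cauchy--Schwarz remark that $\sigma_T^2\ge\sigma^2 T/N$ for all $\delta$ is a nice addition not present in the paper.
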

\begin{proof}
	See Appendix \ref{pf:the diversity on the transition probability}.
\end{proof}

This proposition shows that the diversity reduces the gap between the two equilibrium states and enhances the fluctuations of the empirical mean.
Both effects contribute to the increase of the systemic transition probability. 
\section{Summary and Conclusions}
\label{sec:summary}

The aim of this paper is to introduce and analyze a mathematical model
for the evolution of risk in a system of interacting agents 
where cooperation between them can reduce
their individual risk of failure but increase the systemic or overall risk.
The model we use is a system of bistable diffusion processes that interact through
their empirical mean, a mean field model. 
We take the rate of mean reversion to the empirical mean
$\theta$ as a measure of cooperation, the depth of the bistable potential $h$
as a measure of intrinsic stability of each agent, and the strength 
of the external random perturbations $\sigma$ as the level of uncertainty in
which the agents function. Using the theory of large
deviations we calculate the probability that the system will transition
from one of the two bistable states to the other during a time interval of length
$T$, when the number of agents $N$ is large and when $h$ is small. In this
regime of parameters we find that systemic risk increases with cooperation.
The formula from which we draw this conclusion is given is Section \ref{sec:comparison}.
We also show that when the rate of mean reversion to the empirical mean
varies among the different agents,
that is, when there is diversity in the cooperative behavior then the probability
of transitions increases, which means that the systemic risk increases.

\section*{Acknowledgement}

This work is partly supported by the Department of Energy 
[National Nuclear Security Administration] under Award Number NA28614, and
partly by AFOSR grant FA9550-11-1-0266.

\appendix
\section{Proof of Proposition 
\ref{prop:explicit condition and solution for xi equal to m(xi) for small h}}
\label{pf:explicit condition and solution for xi equal to m(xi) for small h}

For small $h$, we view $u^e_{\xi}$ as a perturbed Gaussian density function. 
Let $p_\xi(y)$ be the Gaussian density function with mean $\xi$ and variance 
$\sigma^2/2\theta$, $Y$ be the Gaussian random variable with the density $p_\xi$,
and $\eta = 2/\sigma^2$. By using the expansion 
$\exp( -h \eta V) = 1 - h\eta V + h^2 \eta^2 V^2/2 + O(h^3)$, we have
\begin{align*}
	Z_\xi &= 1 - h\eta\mathbf{E}V(Y) + \frac{1}{2} h^2\eta^2\mathbf{E}V^2(Y) + O(h^3)\\
	Z_\xi^{-1} &= 1 + h\eta \mathbf{E}V(Y) - \frac{1}{2} h^2\eta^2 \mathbf{E}V^2(Y) 
	+ h^2\eta^2(\mathbf{E}V(Y))^2 + O(h^3).
\end{align*}
Then we calculate $m(\xi)$ as follows:
\begin{align*}
	m(\xi) 
	&= Z_\xi^{-1} \int y\left(1-h\eta V+\frac{1}{2}h^2\eta^2 V^2
	+ O(h^3)\right)p_\xi(y)dy\\
	&= Z_\xi^{-1} \left( \xi - h\eta \mathbf{E}[YV(Y)] 
	+ \frac{1}{2}h^2\eta^2 \mathbf{E}[YV^2(Y)] + O(h^3) \right)\\
	&= \xi + h\eta\{\xi\mathbf{E}V(Y) - \mathbf{E}[YV(Y)]\}
	+ h^2\eta^2 \{ -\frac{1}{2}\xi\mathbf{E}V^2(Y) + \xi(\mathbf{E}V(Y))^2\\
	&\quad - \mathbf{E}V(Y)\mathbf{E}[YV(Y)] + \frac{1}{2}\mathbf{E}[YV^2(Y)] \}+O(h^3)\\
	&= \xi - h\eta\frac{\sigma^2}{2\theta} \mathbf{E}V_y(Y)	
	+ h^2\eta^2 \frac{\sigma^2}{2\theta} 
	\{ \mathbf{E}[V(Y)V_y(Y)] - \mathbf{E}V(Y)\mathbf{E}V_y(Y)\} + O(h^3)\\
	&= \xi - h\eta\frac{\sigma^2}{2\theta} \mathbf{E}V_y(Y) 
	+ h^2\eta^2 \frac{\sigma^2}{2\theta} \mathbf{Cov}(V_y(Y), V(Y)) + O(h^3).
\end{align*}
The compatibility condition $\xi_b=m(\xi_b)$ gives
\begin{equation}
	\label{eq:compatibility condition for small h}
	\mathbf{E}V_y(Y) - h\eta\mathbf{Cov}(V_y(Y), V(Y)) + O(h^2) = 0.
\end{equation}
Assuming that $\xi_b=\xi_0 + h \xi_1 + O(h^2)$, the $O(1)$ terms in 
(\ref{eq:compatibility condition for small h}) give
\[
	\xi_0^3 + 3\frac{\sigma^2}{2\theta}\xi_0 -\xi_0 
	= \xi_0(\xi_0^2 + 3\frac{\sigma^2}{2\theta}-1) = 0.
\]
Then $\xi_0 = 0, \pm\sqrt{1-3\sigma^2/2\theta}$ if $3\sigma^2 < 2\theta$, or otherwise 
$\xi_0 = 0$. In order to obtain the nontrivial result, we suppose that 
$3\sigma^2 < 2\theta$ and $\xi_0$ takes $\pm\sqrt{1-3\sigma^2/2\theta}$ in the later 
calculations. Note that 
$\mathbf{E}V_y(Y) = \xi^3 + (3\sigma^2/2\theta -1)\xi = 2h \xi_0^2\xi_1 + O(h^2)$, and
\begin{align*}
	\mathbf{Cov}(V_y(Y), V(Y)) &= \mathbf{E}[V(Y)V_y(Y)] + O(h)
	= \mathbf{E}[(\frac{1}{4}Y^4 - \frac{1}{2}Y^2)(Y^3-Y)] + O(h)\\
	&= \mathbf{E}[\frac{1}{4}Y^7 - \frac{3}{4}Y^5 + \frac{1}{2}Y^3] + O(h).
\end{align*}
Along with the identity $\xi_0^2+3\sigma^2/2\theta=1$, we have
\begin{align*}
	\mathbf{E}Y^3 &= \xi_0 + O(h), \quad
	\mathbf{E}Y^5 = \left(1 + 4\frac{\sigma^2}{2\theta} 
	- 6\left(\frac{\sigma^2}{2\theta}\right)^2 \right) \xi_0 + O(h),\\
	\mathbf{E}Y^7 &= \left(1 + 12\frac{\sigma^2}{2\theta} 
	+ 6\left(\frac{\sigma^2}{2\theta}\right)^2
	- 48\left(\frac{\sigma^2}{2\theta}\right)^3\right) \xi_0 + O(h).
\end{align*}
Then $\mathbf{Cov}(V_y(Y), V(Y))=6(\sigma^2/2\theta)^2(1-2\sigma^2/2\theta)\xi_0+O(h)$. 
The $O(h)$ terms in (\ref{eq:compatibility condition for small h}) imply 
$\xi_1 = 3\eta(\sigma^2/2\theta)^2(1-2\sigma^2/2\theta)/\xi_0$.

\section{Proofs in Section \ref{sec:diversity}}

\subsection{Proof of Theorem \ref{thm:mean-field limit, diversity case}}
\label{pf:mean-field limit, diversity case}

The proof contains three steps.

\subsubsection{Existence and Uniqueness of the Weak Solution of the McKean-Vlasov Equation}

The existence and uniqueness of a probability measure valued process $(u_1(t),\ldots,u_K(t))$ that is a weak 
solution of the McKean-Vlasov equation (\ref{eq:system of Fokker-Planck equations}) is guaranteed by 
\cite[Theorem 2.11]{Gartner1988}.

\subsubsection{Weak Compactness of the Empirical Process}

By Prohorov's theorem, it suffices to prove that the sequence $\{(X_N^1, \ldots, X_N^K)\}_{N=1}^\infty$ is 
weakly compact by showing that
\begin{equation*}
	\label{eq:tightness}
	\sup_N\sup_{1\leq k\leq K}\sup_{0\leq t\leq T} \mathbf{E}[\langle X_N^k(t,dy),|y|\rangle] <\infty,
\end{equation*}
which can be done by using the calculations similar to (B1) and (B2) in \cite{Dawson1983}.

\subsubsection{Identification of the Limit}

For a test function $f\in\mathcal{S}(\mathbb{R})$, we define $X_N^{f,l}(t)
=\langle f(y),X_N^l(t,y)\rangle=\sum_{j\in\mathcal{I}_l}f(x_j(t))/|\mathcal{I}_l|$. 
By It\^{o}'s formula,
\begin{align*}
	dX_N^{f,l}
	&= \frac{1}{|\mathcal{I}_l|} \sum_{j\in\mathcal{I}_l} 
	[ -h U(x_j)dt + \sigma dw_j + \Theta_l(\bar{x}-x_j)dt ] f_y(x_j) 
	+ \frac{1}{2}\sigma^2f_{yy}(x_j)dt\\
	&= \langle -h U f_y 
	+ \Theta_l(\langle y,\sum_{l=1}^K\rho_l X_N^l\rangle - y)f_y 
	+ \frac{\sigma^2}{2} f_{yy}, X_N^l \rangle dt
	+ \langle  f_y, 
	\frac{\sigma}{|\mathcal{I}_l|}\sum_{j\in\mathcal{I}_l}\delta_{x_j}dw_j \rangle.
\end{align*}
Then by the integration by parts, we write 
\[
	dX_N^l
	= \{(h U X_N^l)_y 
	-[\Theta_l(\langle y,\sum_{l=1}^K\rho_l X_N^l\rangle-y)X_N^l]_y 
	+ \frac{\sigma^2}{2}(X_N^l)_{yy} \}dt
	- \frac{\sigma}{|\mathcal{I}_l|}\sum_{j\in\mathcal{I}_l}(\delta_{x_j})_y dw_j.
\]
For simplicity, we prove the case that $K=2$ and the general case is similar. 
We let $X_N^{1,\times n} \times X_N^{2,\times n}$ denote the product measure on 
$\mathbb{R}^{2n}$:
\[
	X_N^{1,\times n} \times X_N^{2,\times n}(y_1,\ldots,y_{2n})
	= X_N^1(t,y_1)\cdots X_N^1(t,y_n)X_N^2(t,y_{n+1})\cdots X_N^2(t,y_{2n}).
\]
For a test function $f\in\mathcal{S}(\mathbb{R}^{2n})$, we have 
\[
	d\langle f, X_N^{1,\times n} \times X_N^{2,\times n}\rangle 
	= d\langle f, X_N^{1,\times n} \times X_N^{2,\times n}\rangle^{(1)} 
	+ d\langle f, X_N^{1,\times n} \times X_N^{2,\times n}\rangle^{(2)},
\]
where $(1)$ and $(2)$ denote the first and the second order terms of 
$d\langle f, X_N^{1,\times n} \times X_N^{2,\times n}\rangle$, respectively:
\begin{align*}
	d\langle f, X_N^{1,\times n} \times X_N^{2,\times n}\rangle^{(1)}
	&= \sum_{j=1}^n \langle f,dX_N^1(t,y_j)\times X_N^{1,\times(n-1),j}
	\times X_N^{2,\times n}\rangle\\
	&\quad + \sum_{j=n+1}^{2n} \langle f,dX_N^2(t,y_j)\times X_N^{1,\times n,j}
	\times X_N^{2,\times(n-1),j}\rangle
\end{align*}
\begin{multline*}
	d\langle f,X_N^{1,\times n}\times X_N^{2,\times n}\rangle^{(2)}
	= \frac{1}{2}\sum_{\substack{j,k=1\\j\neq k}}^n 
	\langle f, dX_N^1(t,y_j) \times dX_N^1(t,y_k)\times X_N^{1,\times(n-2),j,k} 
	\times X_N^{2,\times n}\rangle \\
	+ \frac{1}{2}\sum_{\substack{j,k=n+1\\j\neq k}}^{2n}
	\langle f,dX_N^2(t,y_j) \times dX_N^2(t,y_k) \times X_N^{1,\times n}
	\times X_N^{2,\times(n-2),j,k}\rangle \\
	+ \frac{1}{2}\sum_{j=1}^n\sum_{k=n+1}^{2n}
	\langle f, dX_N^1(t,y_j) \times dX_N^2(t,y_k) \times X_N^{1,\times(n-1),j}
	\times X_N^{2,\times(n-1),k}\rangle.
\end{multline*}
Note that for $j\neq k$, $dX_N^l(t,y_j) \times dX_N^l(t,y_k) 
= \frac{\sigma^2}{|\mathcal{I}_l|^2} \sum_{i\in\mathcal{I}_l} 
(\delta_{x_i}(y_j))_j (\delta_{x_i}(y_k))_k dt
= \frac{\sigma^2}{\rho_l^2 N} (\delta(y_k-y_j) X_N^l(t,y_j))_{jk} dt$, and 
$dX_N^1(t,y_j) \times dX_N^2(t,y_k) = 0 $.
If we analogously represent the generator $G_{(X_N^{1,\times n},X_N^{2,\times n})} f$ of 
$\langle f, X_N^{1,\times n} \times X_N^{2,\times n}\rangle$ as 
\[
	G_{(X_N^{1,\times n},X_N^{2,\times n})} f
	= G_{(X_N^{1,\times n},X_N^{2,\times n})}^{(1)} f 
	+ G_{(X_N^{1,\times n},X_N^{2,\times n})}^{(2)} f,
\]
then $G_{(X_N^{1,\times n},X_N^{2,\times n})}^{(2)} f \rightarrow 0 $ as 
$N\rightarrow \infty$ and $G_{(X_N^{1,\times n},X_N^{2,\times n})}^{(1)} f 
= G_{(u_1^{\times n},u_2^{\times n})}f$, the generator of 
$\langle f, u_1^{\times n} \times u_2^{\times n}\rangle$, where $(u_1,u_2)$ satisfying 
(\ref{eq:system of Fokker-Planck equations}). Then the limit of $(X_N^1,X_N^2)$ is a solution of the 
martingale problem associated to (\ref{eq:system of Fokker-Planck equations}). In addition, by 
\cite[Corollary 2.10]{Gartner1988}, the solution is unique and therefore $(X_N^1,X_N^2)\rightarrow 
(u_1,u_2)$ weakly as $N\rightarrow\infty$.


\subsection{Proof of Proposition 
\ref{prop:explicit condition for xi equal to m(xi) for small h, diversity case}}
\label{pf:explicit condition for xi equal to m(xi) for small h, diversity case}

All we need to show is that for small $h$, $\frac{d}{d\xi}m(0)>1$ if and only if 
$\sigma < \sigma_c^{\text{div}}$, where $m(\xi)$ is defined by 
(\ref{eq:xi equal to m(xi), diversity case}). We obtain $\frac{d}{d\xi}m$ by 
calculate $\frac{d}{d\xi}\int yu_{l,\xi}^e (y)dy$. Note that 
$\frac{d}{d\xi}Z_{l,\xi}=(2\Theta_l/\sigma^2)(\int yu^e_{l,\xi}dy-\xi) Z_{l,\xi}$ and 
\begin{align}
	\label{eq:d2Z, first kind}
	\frac{d^2}{d\xi^2}Z_{l,\xi} 
	&= \frac{2\Theta_l}{\sigma^2}Z_{l,\xi}
	\left(\frac{d}{d\xi}\int yu^e_{l,\xi}dy - 1\right)
	+ \frac{2\Theta_l}{\sigma^2} \left(\int yu^e_{l,\xi}dy-\xi\right)
	\frac{d}{d\xi} Z_{l,\xi}\\
	&= \frac{2\Theta_l}{\sigma^2}Z_{l,\xi} 
	\left(\frac{d}{d\xi}\int yu^e_{l,\xi}dy - 1\right)
	+ \left(\frac{2\Theta_l}{\sigma^2}\right)^2
	Z_{l,\xi} \left(\int yu^e_{l,\xi}dy - \xi\right)^2.\notag
\end{align}
On the other hand, we can also compute $\frac{d^2}{d\xi^2}Z_{l,\xi}$ by directly taking 
the twice derivatives of $Z_{l,\xi}$:
\begin{equation}
	\label{eq:d2Z, second kind}
	\frac{d^2}{d\xi^2}Z_{l,\xi} = -\frac{2\Theta_l}{\sigma^2} Z_{l,\xi} 
	+ \left(\frac{2\Theta_l}{\sigma^2}\right)^2 Z_{l,\xi} \int (y-\xi)^2 u^e_{l,\xi}dy.
\end{equation}
By comparing (\ref{eq:d2Z, first kind}) and (\ref{eq:d2Z, second kind}), 
\[
	\frac{d}{d\xi}\int yu^e_{l,\xi}dy
	= \frac{2\Theta_l}{\sigma^2} \left[ \int y^2u^e_{l,\xi}dy
	- (\int yu^e_{l,\xi}dy)^2\right].
\]
Note that $\int yu^e_{l,0}dy=0$, so 
$\frac{d}{d\xi}m(0)=\sum_{l=1}^K\rho_l(2\Theta_l/\sigma^2)\int y^2u^e_{l,0}dy$. 
By using the same trick in the proof of Proposition 
\ref{prop:explicit condition and solution for xi equal to m(xi) for small h}, 
let $p_l(y)$ be the Gaussian density function with mean $0$ and variance 
$\sigma^2/2\Theta_l$, $Y_l$ be the Gaussian random variable with the density $p_l$,
and $\eta = 2/\sigma^2$. Then for small $h$, 
$Z_{l,0}^{-1}=1+h\eta\mathbf{E}V(Y_l)+O(h^2)$, and 

\begin{align*}
	\int y^2u^e_{l,0}dy
	&= Z_{l,0}^{-1} \int y^2 (1-h\eta V+O(h^2)) p_l(y) dy\\
	&= Z_{l,0}^{-1} ( \mathbf{E}Y_l^2 - h\eta \mathbf{E}[Y_l^2V(Y_l)] + O(h^2) )\\
	&= \mathbf{E}Y_l^2 + h\eta (\mathbf{E}Y_l^2\mathbf{E}V(Y_l)
	-\mathbf{E}[Y_l^2V(Y_l)]) + O(h^2).
\end{align*}
Therefore $\frac{d}{d\xi}m(0)>1$ if and only if 
$\sum_{l=1}^K \rho_l (2\Theta_l/\sigma^2)
(\mathbf{E}Y_l^2\mathbf{E}V(Y_l)-\mathbf{E}[Y_l^2V(Y_l)]) > 0$. Note that 
$\mathbf{E}Y_l^2 = \sigma^2/2\Theta_l$, 
$\mathbf{E}V(Y_l) = (3/4)(\mathbf{E}Y_l^2)^2 - (1/2)\mathbf{E}Y_l^2$, and
$\mathbf{E}[Y_l^2V(Y_l)] = (15/4)(\mathbf{E}Y_l^2)^3 - (3/2)(\mathbf{E}Y_l^2)^2$.
Then the sufficient and necessary condition becomes 
\[
	\sum_{l=1}^K \frac{\rho_l}{\Theta_l}\left(1 - 3\frac{\sigma^2}{2\Theta_l}\right) > 0.
\]

\subsection{Proof of Proposition \ref{prop:sigma_c comparison}}
\label{pf:sigma_c comparison}

It is equivalent to show that $\sum_{l=1}^K \rho_l/\Theta_l 
\leq \sum_{l=1}^K \rho_l\Theta_l \sum_{l=1}^K \rho_l/\Theta_l^2$.
First note that by the Cauchy-Schwarz inequality,
\[
	\left(\sum_{l=1}^K \frac{\rho_l}{\Theta_l}\right)^2
	= \left(\sum_{l=1}^K\frac{\sqrt{\rho_l}}{\Theta_l}\times\sqrt{\rho_l}\right)^2
	\leq \sum_{l=1}^K\frac{\rho_l}{\Theta_l^2} \sum_{l=1}^K\rho_l 
	= \sum_{l=1}^K\frac{\rho_l}{\Theta_l^2}.
\]
Then it suffices to show that $1\leq\sum_{l=1}^K\rho_l\Theta_l
\sum_{l=1}^K \rho_l/\Theta_l$. Again by the Cauchy-Schwarz inequality,
\[
	\sum_{l=1}^K \rho_l\Theta_l \sum_{l=1}^K \frac{\rho_l}{\Theta_l} 
	\geq \sum_{l=1}^K \sqrt{\rho_l\Theta_l}\sqrt{\frac{\rho_l}{\Theta_l}} 
	= \sum_{l=1}^K\rho_l = 1.
\]

\section{Proof of Lemma \ref{lma:lower bound of I_h over A_delta}}
\label{pf:lower bound of I_h over A_delta}


It suffices to show the case that $\delta=1/n$. For each $n$, let $\phi_n\in A_{1/n}$, 
such that 
$\inf_{\phi\in A_{1/n}}I_h(\phi)\leq I_h(\phi_n)<\inf_{\phi\in A_{1/n}}I_h(\phi)+1/n$; 
$\{I_h(\phi_n)\}$ are bounded from above by $\inf_{\phi\in A}I_h(\phi)+1<\infty$. 
Because $I_h$ is a good rate function, and by Proposition B.13 of \cite{Gartner1988},
compactness is equivalent to sequentially compactness in 
$C([0,T],M_\infty(\mathbb{R}))$, $\{\phi_n\}$ has a convergent subsequence 
$\{\phi_{n_k}\}$ whose limit $\phi^*$ is in $A$. As $I_h$ is lower semicontinuous, then
\[
	\lim_n\inf_{\phi\in A_{1/n}}I_h(\phi) = \lim_k I_h(\phi_{n_k})
	= \liminf_k I_h(\phi_{n_k}) \geq I_h(\phi^*) \geq \inf_{\phi\in A}I_h(\phi).
\]

\section{Proofs in Section \ref{sec:small h}}

\subsection{Proof of Theorem \ref{thm:LD for zero h and Gaussian path}}
\label{pf:LD for zero h and Gaussian path}

We prove it in three steps. The first step is to show that there exists a uniform lower 
bound of $I_0(\phi)$, for all $\phi \in A$.
\begin{lemma}
	If $h=0$, then $\inf_{\phi \in A} I_0(\phi) \geq 2\xi_0^2/(\sigma^2 T)$.
\end{lemma}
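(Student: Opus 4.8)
The plan is to reduce this infinite-dimensional variational lower bound to a one-dimensional one by evaluating the supremum in the definition of $I_0$ at the single test function $f(y)=y$, which linearizes the problem. If $I_0(\phi)=\infty$ there is nothing to prove, so I would fix $\phi\in A$ with $I_0(\phi)<\infty$; by the definition of $I_0$ this forces $\phi$ to be absolutely continuous in $t$, so by Lemma~4.2 of \cite{Dawson1987} the distributional derivative $\phi_t(t)$ exists for a.e.\ $t\in[0,T]$ and $\langle\phi_t(t),f\rangle=\frac{d}{dt}\langle\phi(t),f\rangle$ for every Schwartz test function $f$.

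The heart of the matter is the identity $\langle\mathcal{L}_\phi^*\phi,y\rangle=0$. Set $a(t):=\langle\phi(t),y\rangle$. Pairing $\mathcal{L}_\phi^*\phi$ against $f(y)=y$ and integrating by parts in the distribution sense, the diffusion part gives $\frac{1}{2}\sigma^2\langle\phi,f_{yy}\rangle=0$ since $f_{yy}=0$, and the mean-reversion part gives $-\theta\langle\phi,(y-a(t))f_y\rangle=-\theta(a(t)-a(t))=0$ since $f_y=1$. Because $h=0$, this yields $\langle\phi_t-\mathcal{L}_\phi^*\phi,y\rangle=\frac{d}{dt}a(t)$, while $\langle\phi(t),f_y^2\rangle=\langle\phi(t),1\rangle=1\neq0$, so $f=y$ is admissible in the supremum. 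Hence the integrand of $I_0$ at time $t$ is at least $\big(\frac{d}{dt}a(t)\big)^2$, giving $I_0(\phi)\ge\frac{1}{2\sigma^2}\int_0^T\big(\frac{d}{dt}a(t)\big)^2\,dt$.

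Since $h=0$ we have $\phi(0)=u^e_{-\xi_b}=u^e_{-\xi_0}$ and $\phi(T)=u^e_{\xi_b}=u^e_{\xi_0}$, Gaussians of means $-\xi_0$ and $\xi_0$, so $a(0)=-\xi_0$ and $a(T)=\xi_0$; by Cauchy--Schwarz, $\int_0^T\big(\frac{d}{dt}a\big)^2\,dt\ge\frac{1}{T}\big(\int_0^T\frac{d}{dt}a\,dt\big)^2=\frac{(2\xi_0)^2}{T}$. Combining with the previous bound, $I_0(\phi)\ge\frac{2\xi_0^2}{\sigma^2T}$ for every $\phi\in A$, and taking the infimum over $\phi\in A$ proves the claim.

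The one real obstacle is that $f(y)=y$ is not a Schwartz function, so the computation must actually be carried out with a family $f_R$ of odd Schwartz functions agreeing with $y$ on $[-R,R]$, with $|f_R'|\le1$, vanishing at infinity, followed by the limit $R\to\infty$. This is where the choice of state space is used: since $\phi\in C([0,T],M_\infty(\mathbb{R}))$, the compact set $\{\phi(t):0\le t\le T\}$ lies in some $M_{R_0}(\mathbb{R})$, hence $\sup_{0\le t\le T}\int\varphi\,d\phi(t)<\infty$ with $\varphi(y)=1+y^2+\gamma y^4$. This uniform control of second (indeed fourth) moments makes $a(t)$ well defined and absolutely continuous in $t$, forces $\langle\phi(t),(f_R')^2\rangle\to1$ and $\langle\mathcal{L}_\phi^*\phi,f_R\rangle\to0$ as $R\to\infty$ with enough uniformity in $t$, and---via Fatou's lemma applied to $\int_0^T J_0(\phi,f_R)\,dt$---lets one conclude $\int_0^T\big(\frac{d}{dt}a\big)^2\,dt\le2\sigma^2 I_0(\phi)$. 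Everything else is elementary.
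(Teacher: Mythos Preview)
Your proof is correct and follows essentially the same route as the paper: test the supremum in $I_0$ at $f(y)=y$, use $\langle\mathcal{L}_\phi^*\phi,y\rangle=0$ to reduce to $\int_0^T(\frac{d}{dt}a)^2\,dt$, and minimize over paths with endpoints $\pm\xi_0$. The paper simply plugs in $f\equiv y$ without comment, whereas you take the extra care to approximate by compactly supported $f_R$ and invoke the uniform moment bounds coming from $M_\infty(\mathbb{R})$; this is a genuine technical point the paper glosses over here (though it does carry out a very similar truncation argument later, in the proof of the lower bound for nonzero $h$).
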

\begin{proof}
	For any $\phi \in A$, $a(t)$ denotes $\int y \phi(t,dy)$. We observe that
	\[
		J_h(\phi)
		= \sup_{f: \langle \phi, f_y^2 \rangle \neq 0}
		\langle \phi_t - \mathcal{L}_\phi^*\phi, f \rangle^2
		/ {\langle \phi, f_y^2 \rangle}
		\overset{f \equiv y}{\geq} 
		\langle \phi_t - \mathcal{L}_\phi^*\phi, y\rangle^2,
	\]
	because $\langle \phi , 1\rangle=1$. Note that 
	$\langle\phi_t, y\rangle=\frac{d}{dt}\langle\phi, y\rangle=\frac{d}{dt}a(t)$, and
	\begin{equation*}
		\langle \mathcal{L}_\phi^*\phi, y \rangle 
		= \langle \frac{1}{2}\sigma^2 \phi_{yy}
		+ \theta\frac{\partial}{\partial y} \left[ (y-a(t)) \phi\right], y\rangle
		= -\theta \langle (y-a(t))\phi, 1 \rangle = 0.
	\end{equation*}
	Then after taking the infimum over $\phi \in A$, we have 
	\[
		\inf_{\phi \in A} I_0(\phi) \geq 
		\inf_{\phi \in A} \frac{1}{2\sigma^2}\int_0^T\left(\frac{d}{dt}a\right)^2 dt
		= \inf_{\substack{a(t):0 \leq t \leq T\\ a(0)=-\xi_0\\ a(T)=\xi_0}}
		\frac{1}{2\sigma^2}\int_0^T\left(\frac{d}{dt}a\right)^2 dt
		= \frac{2\xi_0^2}{\sigma^2 T}.
	\]
	The last equality is obtained by a simple calculus of variation with the optimal 
	path $a(t)= 2\xi_0 t/T - \xi_0$.
\end{proof}

The second step is to show that $I_0(p^e) = 2\xi_0^2/(\sigma^2 T)$. 
Then $\inf_{\phi \in A} I_0(\phi) =  2\xi_0^2/(\sigma^2 T)$ and 
therefore $p^e$ is a minimizer for (\ref{eq:general LD}).
\begin{lemma}
	If $h=0$, and 
	\[
		p^e(t,y) = \frac{1}{\sqrt{2\pi \frac{\sigma^2}{2\theta}}}
		\exp\left\{ -\frac{(y - a^e(t))^2}{2\frac{\sigma^2}{2\theta}} \right\}, \quad
		a^e(t)= \frac{2\xi_0}{T} t - \xi_0,
	\]
	then $p^e \in A$ and $I_0(p^e) =  2\xi_0^2/(\sigma^2 T)$.
\end{lemma}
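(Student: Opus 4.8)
The plan is to verify the two assertions in turn: first that $p^e \in A$, and then that $I_0(p^e) = 2\xi_0^2/(\sigma^2 T)$, using Proposition \ref{prop:I_h in terms of the driving force} to evaluate the rate function.

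For membership in $A$, recall that at $h=0$ the equilibria collapse: $u^e_{\pm\xi_b} = u^e_{\pm\xi_0}$, the Gaussian density with mean $\pm\xi_0$ and variance $\sigma^2/2\theta$. Since $a^e(0) = -\xi_0$ and $a^e(T) = \xi_0$, we get $p^e(0,\cdot) = u^e_{-\xi_0} = \nu$ and $p^e(T,\cdot) = u^e_{\xi_0}$, so the endpoint constraints defining $A$ are met. The path $t \mapsto p^e(t,\cdot)$ is smooth, hence continuous, in $t$; and because these are Gaussians with means ranging over the bounded set $[-\xi_0,\xi_0]$ and fixed variance, all polynomial $\varphi$-moments are uniformly bounded on $[0,T]$, so $p^e \in C([0,T],M_\infty(\mathbb{R}))$ and therefore $p^e \in A$.

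For the value, note that $p^e(t,\cdot)$ is smooth and rapidly decreasing in $y$ for each $t$ and smooth in $t$, so Proposition \ref{prop:I_h in terms of the driving force} applies and $I_0(p^e) = \frac{1}{2\sigma^2}\int_0^T \langle p^e, g^2\rangle\, dt$, where $g(t,y)$ is the (a.e.\ unique) solution of (\ref{eq:Fokker-Planck equation with the driving force}) with $h=0$, namely $p^e_t - \mathcal{L}^*_{p^e}p^e = (p^e g)_y$. Writing $p^e(t,y) = f(y - a^e(t))$ with $f$ the centered Gaussian density gives $p^e_t = -\dot a^e\, p^e_y$. The key observation is that $p^e(t,\cdot)$ is precisely the invariant density of the Ornstein--Uhlenbeck operator with drift toward $a^e(t)$, so $\mathcal{L}^*_{p^e}p^e = \frac12\sigma^2 p^e_{yy} + \theta\,\partial_y[(y - a^e(t))p^e] = 0$. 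Hence $(p^e g)_y = -\dot a^e\, p^e_y$, forcing $g \equiv -\dot a^e = -2\xi_0/T$, a constant in $y$. Therefore $\langle p^e, g^2\rangle = (2\xi_0/T)^2$ and $I_0(p^e) = \frac{1}{2\sigma^2}\cdot\frac{4\xi_0^2}{T^2}\cdot T = \frac{2\xi_0^2}{\sigma^2 T}$.

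Combined with the uniform lower bound $\inf_{\phi\in A}I_0(\phi) \geq 2\xi_0^2/(\sigma^2 T)$ from the first step of this proof, this shows $p^e$ attains the infimum. The only step needing a moment's care is the identity $\mathcal{L}^*_{p^e}p^e = 0$, i.e.\ recognizing $p^e(t,\cdot)$ as the frozen-coefficient OU stationary density; everything else is a short computation. Uniqueness of the minimizer (asserted in Theorem \ref{thm:LD for zero h and Gaussian path} but not needed for this lemma) would be argued separately, by showing that any minimizer must have $g$ independent of $y$ and first moment matching the linear profile $a^e$.
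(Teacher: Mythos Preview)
Your proof is correct and follows essentially the same approach as the paper: both compute the driving force $g$ via the relation $p^e_t - \mathcal{L}_{p^e}^*p^e = (p^e g)_y$, use $\mathcal{L}_{p^e}^*p^e=0$ and $p^e_t=-\dot a^e p^e_y$ to conclude $g=-\dot a^e$, and then evaluate $I_0(p^e)$ from the $L^2$ formula. You add some welcome detail on the membership $p^e\in A$ that the paper leaves implicit; your closing remarks on the lower bound and on uniqueness belong to the surrounding lemmas rather than to this one, but they are not incorrect.
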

\begin{proof}
	By reading (\ref{eq:Fokker-Planck equation with the driving force}) with $\phi=p^e$ 
	and $h=0$, we have $p^e_t = \mathcal{L}_{p^e}^{*}p^e + (p^e g)_y$.
	One can easily check that $\mathcal{L}_{p^e}^{*}p^e = 0$ and 
	$p^e_t = -p^e_y\frac{d}{dt}a^e(t)$. Then we have $g=-\frac{d}{dt}a^e(t)$ and 
	by (\ref{eq:I_h in terms of the driving force}),
	\[
		I_0(p^e) = \frac{1}{2\sigma^2} \int_0^T \langle p^e, g^2\rangle dt
		= \frac{1}{2\sigma^2} \int_0^T \left(\frac{d}{dt}a^e\right)^2 dt 
		= \frac{2\xi_0^2}{\sigma^2 T}.
	\]
\end{proof}

Finally we prove that for $h=0$, the minimizer $p^e$ is unique. 
\begin{lemma}
	For $h=0$, $p^e$ is the unique minimizer for (\ref{eq:general LD}).
\end{lemma}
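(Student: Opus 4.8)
The plan is to turn the equality $I_0(\phi^*)=2\xi_0^2/(\sigma^2T)$ into the statement that any minimizer $\phi^*$ solves a \emph{linear} Fokker--Planck Cauchy problem, which then has $p^e$ as its only solution. Let $\phi^*\in A$ be an arbitrary minimizer and set $a^*(t)=\langle\phi^*(t),y\rangle$; using the moment bounds available for a path of finite rate together with the Dawson--G\"{a}rtner notion of absolute continuity in $t$, $a^*$ is absolutely continuous with $\langle\phi^*_t,y\rangle=\frac{d}{dt}a^*$. Exactly as in the first lemma above (take $f\equiv y$ in the supremum, noting $\langle\mathcal{L}_{\phi^*}^*\phi^*,y\rangle=0$),
\[
	\frac{2\xi_0^2}{\sigma^2T}=I_0(\phi^*)=\frac{1}{2\sigma^2}\int_0^T\sup_{f}J_0(\phi^*,f)\,dt
	\ \ge\ \frac{1}{2\sigma^2}\int_0^T\Big(\tfrac{d}{dt}a^*\Big)^2dt
	\ \ge\ \frac{1}{2\sigma^2}\,\frac{(a^*(T)-a^*(0))^2}{T}=\frac{2\xi_0^2}{\sigma^2T}.
\]
Hence both inequalities are equalities. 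Equality in the Cauchy--Schwarz (second) step forces $\frac{d}{dt}a^*\equiv 2\xi_0/T$, i.e.\ $a^*(t)=a^e(t)$; equality in the first step forces, for a.e.\ $t$, $\sup_fJ_0(\phi^*(t),f)=(\tfrac{d}{dt}a^*(t))^2=J_0(\phi^*(t),y)$, i.e.\ $f\equiv y$ maximizes $f\mapsto J_0(\phi^*(t),f)$.

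From this I would extract an Euler--Lagrange identity. Write $\mu(t):=\phi^*_t(t)-\mathcal{L}_{\phi^*}^*\phi^*(t)$, which exists in the distribution sense for a.e.\ $t$ because $I_0(\phi^*)<\infty$ makes $\phi^*$ absolutely continuous in $t$. Perturbing $f=y+\epsilon\psi$ and setting the first variation of $\langle\mu,f\rangle^2/\langle\phi^*,f_y^2\rangle$ to zero at $\epsilon=0$ gives $(\tfrac{d}{dt}a^*)\langle\mu,\psi\rangle=(\tfrac{d}{dt}a^*)^2\langle\phi^*,\psi_y\rangle$ for every test function $\psi$. Since $\tfrac{d}{dt}a^*=2\xi_0/T\neq 0$ (recall $\xi_0=\sqrt{1-3\sigma^2/2\theta}>0$ in the bistable regime $3\sigma^2<2\theta$), this gives $\langle\mu,\psi\rangle=\tfrac{2\xi_0}{T}\langle\phi^*,\psi_y\rangle=-\tfrac{2\xi_0}{T}\langle\phi^*_y,\psi\rangle$ for all $\psi$, i.e.\ $\mu(t)=-\tfrac{2\xi_0}{T}\phi^*_y(t)$ as distributions. (Alternatively one may invoke the driving-force representation $\sup_fJ_0(\phi^*,f)=\langle\phi^*,g^2\rangle$ with $\phi^*_t-\mathcal{L}_{\phi^*}^*\phi^*=(\phi^*g)_y$; equality above gives $\langle\phi^*,g\rangle^2=\langle\mu,y\rangle^2=\langle\phi^*,g^2\rangle$, and Cauchy--Schwarz on the probability measure $\phi^*(t)$ forces $g(t,\cdot)$ constant in $y$, equal to $-2\xi_0/T$.) Substituting $a^*=a^e$, this means $\phi^*$ is a measure-valued weak solution of
\[
	\phi^*_t=\tfrac12\sigma^2\phi^*_{yy}+\theta\,\partial_y\big\{[y-a^e(t)]\phi^*\big\}-\tfrac{2\xi_0}{T}\phi^*_y,\qquad\phi^*(0)=u^e_{-\xi_0}.
\]

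Because $a^e(t)$ is a fixed affine function of $t$, this is a \emph{linear} Fokker--Planck equation with drift $-\theta\big(y-a^e(t)\big)+2\xi_0/T$ (affine in $y$, continuous in $t$) and nondegenerate diffusion $\tfrac12\sigma^2$; the corresponding linear stochastic differential equation is well posed, so the associated Cauchy problem has a unique measure-valued solution. The second lemma above already verifies that $p^e$ is a solution ($\mathcal{L}_{p^e}^*p^e=0$ and $p^e_t=-\tfrac{2\xi_0}{T}p^e_y$), whence $\phi^*=p^e$. The main obstacle I expect is the careful handling of the supremum over test functions when $\phi^*$ is merely a continuous path of measures of finite rate rather than a smooth rapidly-decreasing density: justifying that $f\equiv y$ and its affine perturbations are legitimate test functions (or are reached by truncation using the moment bounds), that $a^*$ is absolutely continuous with $\langle\phi^*_t,y\rangle=\tfrac{d}{dt}a^*$, and that the distributional identity $\mu(t)=-\tfrac{2\xi_0}{T}\phi^*_y(t)$ is precisely the weak formulation to which the linear uniqueness theorem applies. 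These points are routine within the Dawson--G\"{a}rtner framework but require care.
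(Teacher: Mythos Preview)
Your proof is correct and follows essentially the same approach as the paper: you show that equality in the chain of inequalities forces $a^*(t)=a^e(t)$ and that $f\equiv y$ maximizes $J_0(\phi^*,\cdot)$, then take the first variation at this maximizer to derive the linear Fokker--Planck equation with drift $-\theta(y-a^e(t))+2\xi_0/T$, and invoke uniqueness. Your write-up is more detailed than the paper's (which dispatches the argument in four lines) and is more explicit about the technical points requiring care, but the strategy is identical.
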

\begin{proof}
	From the previous lemmas, we find that if $\phi$ is a minimizer then 
	$a(t)=\int y \phi(t,dy)$ must be $a^e(t)$, and $f = -\frac{d}{dt}a^e(t) y$ 
	is a global maximizer of $J_0(\phi,\cdot)$. Then for any test function $\tilde{f}$, 
	$\frac{d}{d\epsilon}J_0(\phi,-\frac{d}{dt}a^e(t) y + \epsilon \tilde{f})=0$ at 
	$\epsilon=0$. By a simple calculus of variations, $\phi$ satisfies the linear 
	parabolic PDE:
	\[
		\phi_t = \frac{1}{2}\sigma^2 \phi_{yy} 
		+ \theta\frac{\partial}{\partial y}\left[ (y-a^e(t))\phi\right]
		- \frac{d}{dt}a^e(t) \phi_y,	
	\]
	with the initial condition $\phi(0)=u_{-\xi_0}^e$, and that implies 
	the uniqueness of the minimizer, which is $p^e$.
\end{proof}

\subsection{Proof of Theorem \ref{thm:cotinuity for small h} (Upper Bounds)}
\label{pf:cotinuity for small h, upper bounds}

Define the test function:
\[
	p^u(t,y) = \frac{1}{\sqrt{2\pi \frac{\sigma^2}{2\theta}}}
	\exp\left\{ -\frac{(y - a^u(t))^2}{2\frac{\sigma^2}{2\theta}} \right\},
	\quad a^u(t) = \frac{2\xi_b}{T}t - \xi_b.
\]
We recall that from (\ref{eq:xi equals m(xi)}) and 
(\ref{eq:explicit solution for xi equal to m(xi)}), $\xi_b$ depends on $h$ and 
$\xi_b\rightarrow \xi_0$ as $h\rightarrow 0$.

\begin{proposition}
	\label{prop:upper bound for I}
	For any $\epsilon>0$, then for all sufficiently small $h$, 
	\begin{equation}
	\label{eq:upper bound for I}
	\inf_{\phi\in A}I_h(\phi) \leq \frac{1}{2\sigma^2} \int_0^T 
	\langle p^u, ( \frac{d}{dt}a^u - h(y^3-y) )^2 \rangle dt + \epsilon.
	\end{equation}
\end{proposition}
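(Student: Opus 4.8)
The plan is: since $\inf_{\phi\in A}I_h(\phi)\le I_h(\phi^h)$ for any $\phi^h\in A$, it suffices to exhibit one path in $A$ whose rate function is at most the right-hand side of (\ref{eq:upper bound for I}) plus $\epsilon$, for all small $h$. The obvious candidate is the moving Gaussian $p^u$, but $p^u\notin A$ because the true equilibria $u^e_{\pm\xi_b}$ in (\ref{eq:equilibrium of Fokker-Planck eq}) carry the extra weight $\exp\{-h\tfrac{2}{\sigma^2}V\}$, so $p^u(0)\neq u^e_{-\xi_b}$. The cheapest way to land inside $A$ is to slide the reversion target of the equilibrium profile linearly: set $\phi^h(t,\cdot):=u^e_{a^u(t)}$ with $a^u(t)=\tfrac{2\xi_b}{T}t-\xi_b$. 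Then $\phi^h(0)=u^e_{-\xi_b}=\nu$ and $\phi^h(T)=u^e_{\xi_b}$; each $u^e_{a^u(t)}$ is $C^\infty$ and super-Gaussian in $y$ (hence Schwartz), $t\mapsto\phi^h(t)$ is smooth hence absolutely continuous, and $\sup_t\langle\phi^h(t),\varphi\rangle<\infty$ since all the $u^e_{a^u(t)}$, $t\in[0,T]$, are dominated by one fixed Gaussian. Thus $\phi^h\in\mathcal{E}^\nu$, $\phi^h\in A$, and Proposition \ref{prop:I_h in terms of the driving force} applies.

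Next I would compute the driving force $g^h$ from (\ref{eq:Fokker-Planck equation with the driving force}). Writing $m(\xi)=\int y\,u^e_\xi(y)\,dy$ as in (\ref{eq:xi equals m(xi)}), the stationary equation solved by $u^e_\xi$ (with target $\xi$) gives $\mathcal{L}^*_{\phi^h}\phi^h+h\mathcal{M}^*\phi^h=\theta(a^u-m(a^u))\,\partial_y\phi^h$; the McKean--Vlasov operator uses the \emph{actual} first moment $m(a^u)$, and $m(\xi)-\xi=O(h)$ by Appendix \ref{pf:explicit condition and solution for xi equal to m(xi) for small h}, so this residual is $O(h)$. Combining this with $\partial_t\phi^h=\dot a^u\,\partial_\xi u^e_{a^u}$ and the identity $\partial_\xi u^e_\xi=\tfrac{2\theta}{\sigma^2}(y-m(\xi))u^e_\xi$, one antidifferentiates once in $y$ (the total mass of the residual vanishes, which pins the primitive) and obtains $g^h=-\dot a^u+h\,r^h(t,y)$ with $r^h$ polynomially bounded in $y$ uniformly on $[0,T]$, so $g^h\in L^2(\phi^h)$. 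Hence $\langle\phi^h,(g^h)^2\rangle=(\dot a^u)^2+O(h)$ uniformly in $t$, and $I_h(\phi^h)=\tfrac{1}{2\sigma^2}\int_0^T\langle\phi^h,(g^h)^2\rangle\,dt=\tfrac{2\xi_b^2}{\sigma^2T}+O(h)$.

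On the other hand, the right-hand side of (\ref{eq:upper bound for I}) equals $\tfrac{1}{2\sigma^2}\int_0^T\langle p^u,(\dot a^u-h(y^3-y))^2\rangle\,dt=\tfrac{2\xi_b^2}{\sigma^2T}+O(h)$, since $\langle p^u,(\dot a^u-h(y^3-y))^2\rangle=(\dot a^u)^2+O(h)$ uniformly in $t$ (one may even note the $O(h)$ term integrates to zero: $\int_0^T\dot a^u\langle p^u,y^3-y\rangle\,dt=\int_{-\xi_b}^{\xi_b}\xi(\xi^2+\tfrac{3\sigma^2}{2\theta}-1)\,d\xi=0$ by oddness). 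Combining the two estimates, $\inf_{\phi\in A}I_h(\phi)\le I_h(\phi^h)\le\tfrac{1}{2\sigma^2}\int_0^T\langle p^u,(\dot a^u-h(y^3-y))^2\rangle\,dt+Ch$, and choosing $h$ with $Ch<\epsilon$ gives (\ref{eq:upper bound for I}).

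The step I expect to be the crux is verifying $g^h\in L^2(\phi^h)$, i.e. $I_h(\phi^h)<\infty$: the once-integrated residual contains a piece $\dot a^u\tfrac{2\theta}{\sigma^2}(a^u-m(a^u))\int_{-\infty}^y u^e_{a^u}$ which, divided by $u^e_{a^u}(y)$, would grow like $1/u^e_{a^u}(y)$ as $y\to+\infty$, but it is cancelled \emph{exactly} at leading order by the $h\mathcal{M}^*\phi^h$ contribution via $\int_{\mathbb{R}}(y^3-y)u^e_\xi\,dy=-\tfrac{\theta}{h}(m(\xi)-\xi)$, itself obtained by multiplying the stationary equation by $y$ and integrating; what is left is bounded. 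Making this cancellation and the uniform-in-$t$ polynomial bounds on $r^h$ precise is where the real work sits; the rest is Gaussian-moment bookkeeping. If one prefers to sidestep this delicacy, an alternative is to use the pure Gaussian $p^u$ on a bulk interval $[\tau,T-\tau]$ (there $\mathcal{L}^*_{p^u}p^u=0$ and $g=-\dot a^u-h(y^3-y)$ is a genuine polynomial, with no tail issue) and glue in boundary layers of length $\tau$ near $t=0,T$ that switch the weight $e^{-h\frac{2}{\sigma^2}V}$ on and off while keeping the mean pinned at $\mp\xi_b$; their rate-function cost is $O((h+\tau)^2/\tau)$, so letting $\tau\to0$ slowly with $h$ again produces the $\epsilon$.
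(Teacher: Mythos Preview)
Your primary construction---the sliding equilibrium profile $\phi^h(t)=u^e_{a^u(t)}$---is correct and genuinely different from the paper's. The paper instead takes exactly the route you sketch as an alternative at the end: it sets $\phi^u=p^u$ on a bulk interval $(\delta T,\,T-\delta T)$ and uses the convex combination $(1-\tfrac{t}{\delta T})u^e_{-\xi_b}+\tfrac{t}{\delta T}p^u(t)$ (and the symmetric formula near $T$) as boundary layers. On the bulk one has $\mathcal{L}^*_{p^u}p^u=0$ and $g^u=-\dot a^u-h(y^3-y)$ is an exact polynomial, so the integrand is literally $\langle p^u,(\dot a^u-h(y^3-y))^2\rangle$ with no tail issue and no separate $O(h)$ comparison with the right-hand side of (\ref{eq:upper bound for I}) needed. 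The boundary cost is then bounded by $\epsilon$ by first fixing $\delta T$ small and then taking $h$ small enough that the troublesome piece $(p^u(t)-u^e_{-\xi_b})/\delta T$ of $\phi^u_t$ stays bounded by a $\delta T$-independent constant (using $p^u(0)\to u^e_{-\xi_b}$ as $h\to0$).

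Your route trades this gluing for the tail-cancellation argument you correctly flag as the crux: once the identity $\int U\,u^e_\xi=\tfrac{\theta}{h}(\xi-m(\xi))$ is invoked to show that the two dangerous pieces in $\phi^h g^h$ cancel at $y\to+\infty$, the remainder $g^h+\dot a^u$ is $h$ times something polynomially bounded and the rest is moment bookkeeping. Either approach proves the proposition; yours has the advantage of landing in $A$ without surgery, the paper's has the advantage that no tail analysis is needed at all.
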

It is not difficult to see that the first term of the right hand side of (\ref{eq:upper bound for I}) 
 is equal to  $2\xi_b^2/(\sigma^2 T)$ up to a term of order $h$ as $h\rightarrow 0$.

\begin{proof}
	We construct the test function $\phi^u\in A$ as follows:
	\[
		\phi^u(t)=
		\begin{cases}
			(1-\frac{t}{\delta T})u^e_{-\xi_b} + \frac{t}{\delta T} p^u(t), 
			& t\in[0,\delta T],\\
			p^u(t), & t\in(\delta T, T-\delta T),\\
			(1-\frac{t-(T-\delta T)}{\delta T})p^u(t) 
			+ \frac{t-(T-\delta T)}{\delta T} u^e_{\xi_b}, & t\in[T-\delta T, T],
		\end{cases}
	\]
	where $\delta T$ will be determined later. Note that 
	$\inf_{\phi\in A}I_h(\phi) \leq I_h(\phi^u)$ so we just need to compute 
	$I_h(\phi^u)$. Let $g^u$ satisfy 
	(\ref{eq:Fokker-Planck equation with the driving force}) for $\phi=\phi^u$. For 
	$t\in(\delta T, T-\delta T)$, $\phi^u(t)=p^u(t)$, and it is easy to see that 
	$p^u_t=-\frac{d}{dt}a^u p^u_y$ and $\mathcal{L}_{p^u}^{*}p^u=0$. Therefore for 
	$t\in(\delta T, T-\delta T)$, $g^u = -\frac{d}{dt}a^u - h(y^3-y)$ by 
	(\ref{eq:Fokker-Planck equation with the driving force}). From 
	(\ref{eq:I_h in terms of the driving force}), we have 
	\begin{multline*}
		I_h(\phi^u) = \frac{1}{2\sigma^2} 
		\left(\int_0^{\delta T} + \int_{\delta T}^{T-\delta T} 
		+ \int_{T-\delta T}^T\right) \langle \phi^u, (g^u)^2\rangle dt\\
		\leq \frac{1}{2\sigma^2} \int_0^T 
		\langle p^u, ( -\frac{d}{dt}a^u - h(y^3-y) )^2 \rangle dt
		+ \frac{1}{2\sigma^2} \left(\int_0^{\delta T} + \int_{T-\delta T}^T\right) 
		\langle \phi^u, (g^u)^2\rangle dt.
	\end{multline*}
	The rest is to show that for any $\epsilon>0$, there exists a sufficiently small $h$ such that the last 
	term in the last equation is bounded by $\epsilon$. It suffices to show that for any $\delta T>0$, we 
	can choose a sufficiently small $h$ such that $\langle \phi^u, (g^u)^2\rangle$ is bounded by a 
	$\delta T$-independent constant $c^u>0$ for $t\in[0,\delta T]\cup[T-\delta T, T]$. If so, then let 
	$\delta T <\epsilon\sigma^2/c^u$ and
	\[
		\frac{1}{2\sigma^2} \left(\int_0^{\delta T} + \int_{T-\delta T}^T\right) 
		\langle \phi^u, (g^u)^2\rangle dt \leq \frac{1}{2\sigma^2}(2\delta T)c^u < \epsilon,
	\]
	for sufficiently small $h$.
		
	For $t\in[0,\delta T]$, because $\phi^u$ is simply the convex combination of $u^e_{-\xi_b}$ and $p^u$, 
	$\phi^u$ can be bounded by a $\delta T$-independent constant. To compute $g^u$ from 
	(\ref{eq:Fokker-Planck equation with the driving force}), it is also easy to see that 
	$\mathcal{L}_{\phi^u}^{*}\phi^u$ and $\mathcal{M}^{*}\phi^u$ can be bounded by $\delta T$-independent 
	constants. The only term we need to worry is  $(p^u(t)-u^e_{-\xi_b})/\delta T$ from computing 
	$\phi^u_t(t)$. However, $p^u(t)$ is differentiable at $t=0$ and $p^u(0)\rightarrow u^e_{-\xi_b}$ as 
	$h\rightarrow 0$ so we can bound $(p^u(t)-u^e_{-\xi_b})/\delta T$ by a $\delta T$-independent constant 
	with suitable $h$. Thus $g^u$ is bounded independently of $\delta T$ and so we can find a 
	$\delta T$-independent constant $c^u>0$ such that $\langle \phi^u, (g^u)^2\rangle<c^u$.
	
	The same argument works for $t\in[T-\delta T,T]$ and we have the desired result.
\end{proof}

\subsection{Proof of Theorem \ref{thm:cotinuity for small h} (Lower Bounds)}
\label{pf:cotinuity for small h, lower bounds}

From (\ref{eq:upper bound for I}), there exists some constant $C$ such that 
$\inf_{\phi\in A}I_h(\phi)\leq C$ for all $h\leq h_0$. Then we can assume that 
$I_h(\phi)\leq C$ for all $\phi\in A$ and all $h\leq h_0$ without loss of generality. 
The following lemma shows that the first and second moments of all $\phi\in A$ are 
uniformly bounded.
\begin{lemma}
	\label{lma:uniform bound for first and second moments}
	Given $C>0$, there exists $R>0$ such that for any $\phi\in A$ with $I_h(\phi)\leq C$ 
	for some $h\geq 0$, then
	\[
		\sup_{t\in[0,T]}\langle\phi(t),y\rangle^2 
		\leq \sup_{t\in[0,T]}\langle\phi(t),y^2\rangle 
		\leq R.
	\]
\end{lemma}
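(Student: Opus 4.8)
The plan is to control the second moment $m_2(t):=\langle\phi(t),y^2\rangle$ by a Gr\"onwall-type inequality obtained from the defining relation for $I_h$ with the test function $f(y)=y^2$. The first inequality in the statement is immediate: since $\phi(t)\in M_1(\mathbb{R})$ is a probability measure, Jensen's (Cauchy--Schwarz) inequality gives $\langle\phi(t),y\rangle^2\le\langle\phi(t),y^2\rangle$, so everything reduces to an upper bound on $m_2$.

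First I would record the two ``drift'' contributions obtained by integrating by parts. Testing $\mathcal{L}_\phi^*\phi$ against $y^2$ gives $\langle\mathcal{L}_\phi^*\phi,y^2\rangle=\sigma^2-2\theta\big(\langle\phi,y^2\rangle-\langle\phi,y\rangle^2\big)=\sigma^2-2\theta\,\mathrm{Var}_{\phi(t)}\le\sigma^2$, so the mean-reversion term only helps. For the nonlinear term, $h\langle\mathcal{M}^*\phi,y^2\rangle=-2h\langle\phi,y^4-y^2\rangle=2h\langle\phi,y^2-y^4\rangle\le2h\cdot\tfrac14=\tfrac{h}{2}$, using $y^2-y^4\le\tfrac14$ for all $y$; this is the key point, namely that the cubic restoring force, rather than being an obstacle, supplies exactly the a priori dissipation needed. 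The remaining piece is controlled by the Cauchy--Schwarz inequality built into $J_h$: $|\langle\phi_t-\mathcal{L}_\phi^*\phi-h\mathcal{M}^*\phi,y^2\rangle|^2=J_h(\phi,y^2)\,\langle\phi,(2y)^2\rangle\le 4m_2(t)\,\sup_f J_h(\phi(t),f)$, and by definition $\int_0^T\sup_f J_h(\phi(t),f)\,dt=2\sigma^2 I_h(\phi)\le 2\sigma^2C$.

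Putting these together yields $\dot m_2(t)\le\sigma^2+\tfrac{h}{2}+2\sqrt{m_2(t)}\,\psi(t)$ with $\psi(t):=(\sup_f J_h(\phi(t),f))^{1/2}$ and $\int_0^T\psi^2\le2\sigma^2C$. Using $2\sqrt{m_2}\,\psi\le m_2+\psi^2$ and integrating, $m_2(t)\le m_2(0)+\big(\sigma^2+\tfrac{h_0}{2}\big)T+2\sigma^2C+\int_0^t m_2(s)\,ds$, whence Gr\"onwall gives $m_2(t)\le R$ for a constant $R$ depending only on $C,T,\sigma,h_0$ and on $m_2(0)=\langle u^e_{-\xi_b},y^2\rangle$ (a quantity bounded uniformly over $h\in[0,h_0]$, since $Z_{-\xi_b}$ is bounded away from $0$ and the factor $e^{-2hV(y)/\sigma^2}$ only lightens the tails). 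In particular $R$ is independent of $\phi$ and of $h\le h_0$, which is the assertion. If one prefers to work with truncations (see below) rather than with $f(y)=y^2$ directly, the $\mathcal{L}_\phi^*$ cross term is bounded by $2\theta m_2$ via Cauchy--Schwarz, so $R$ then also picks up a factor $e^{(2\theta+1)T}$, which is harmless.

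The main obstacle is that $f(y)=y^2$ is not an admissible (compactly supported) test function, so the differential identity above is not literally available. The rigorous version uses smooth truncations $f_K\uparrow y^2$ with $|f_K'|\lesssim|y|$ and $f_K''$ uniformly bounded (e.g.\ $f_K(y)=y^2/(1+(y/K)^2)$, for which $f_K''\le2$ and $f_K'(y)(y^3-y)/(1+(y/K)^2)^{-2}=2(y^4-y^2)$, so the $\mathcal{M}^*$-estimate $h\langle\mathcal{M}^*\phi,f_K\rangle\le h/2$ survives the truncation uniformly in $K$), invokes the absolute continuity of $t\mapsto\langle\phi(t),f_K\rangle$ guaranteed by $I_h(\phi)<\infty$ together with Lemma 4.2 in \cite{Dawson1987} to get the \emph{integrated} inequality for each $f_K$, checks that the truncation-error terms from $\mathcal{L}_\phi^*$ and $\mathcal{M}^*$ are either sign-favorable or bounded by a multiple of $m_2$, and finally lets $K\to\infty$ by monotone convergence in the integrated form. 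The qualitative fact $\phi\in C([0,T],M_\infty(\mathbb{R}))$ is used only to license these manipulations; the quantitative uniform bound $R$ comes out of the explicit constants, and carrying out the truncation estimates so that no $K$- or $\phi$-dependent constant survives is where the actual work lies.
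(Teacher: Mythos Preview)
Your argument is correct and is, in essence, the same Lyapunov-function estimate that underlies the paper's proof; the difference is packaging. The paper does not redo the Gr\"onwall computation by hand but simply invokes Theorem~5.1(c), Theorem~5.3 and Lemma~5.5 of Dawson--G\"artner \cite{Dawson1987} with the quadratic Lyapunov function $\varphi(y)=1+y^2$: those results say that any $\phi$ with $I_h(\phi)\le C$ lies in $C([0,T],M_R^2(\mathbb{R}))$ for $R\ge e^{\lambda T}(C+r)$, where $r$ bounds $2\langle u^e_{-\xi_b},\varphi\rangle$ and $\lambda$ bounds $\sup_\mu\langle\mu,\mathcal{L}_\mu\varphi+h\mathcal{M}\varphi+\tfrac12\varphi_y^2\rangle/\langle\mu,\varphi\rangle$. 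The paper then observes that $r,\lambda$ can be chosen uniformly in $h$ (up to the fixed $h_0$), which is exactly your observation that $h\mathcal{M}\varphi=-2h(y^4-y^2)\le h/2$ and that the mean-reversion term is sign-favorable. What you gain by writing the estimate out explicitly is a self-contained proof that does not require the reader to chase three results in \cite{Dawson1987}; what the paper gains is brevity. Your truncation sketch with $f_K(y)=y^2/(1+(y/K)^2)$ is the right way to make the formal step rigorous, and your check that the $\mathcal{M}^*$-bound survives truncation is correct (the factor $K^4/(K^2+y^2)^2\le 1$ only shrinks the helpful quartic dissipation on $|y|\ge 1$ while leaving the harmless region $|y|<1$ bounded by $h/2$).
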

\begin{proof}
	Recall that $M_R(\mathbb{R})=\{\phi\in M_1(\mathbb{R}),\int\varphi(y)\phi(dy)\leq 
	R\}$ and $M_\infty(\mathbb{R})=\cup_{R>0}M_R(\mathbb{R})$ with the inductive 
	topology. Here we focus on the case that $\varphi=1+y^2$ in order to obtain the 
	uniform result, and let $M_R^2(\mathbb{R})$ and $M_\infty^2(\mathbb{R})$ denote the 
	spaces with the quadratic Lyapunov function $\varphi$.
	
	The proof is an application of Theorem 5.1(c), Theorem 5.3 and Lemma 5.5 of 
	\cite{Dawson1987}. By Theorem 5.1(c), if $\phi\in C([0,T],M_\infty^2(\mathbb{R}))$ 
	with $\phi(0)=u^e_{-\xi_b}$ and $I_h(\phi)\leq C$ for some $h\geq 0$, then $\phi$ is 
	in an $h$-dependent compact set $K$. By Theorem 5.3 the compact set $K$ is contained 
	in $C([0,T],M_R^2(\mathbb{R}))$ for an $h$-dependent $R>0$. Finally, by Lemma 5.5 
	and Theorem 5.1(c), it suffices to let $R\geq e^{\lambda T}(C+r)$, where $r$ and 
	$\lambda$ satisfy
	\[
		r \geq 2\int \varphi(y)u^e_{-\xi_b}(y)dy,\quad
		\lambda \geq \sup_{\mu\in M_1(\mathbb{R})}
		\langle\mu,\mathcal{L}_\mu\varphi+h\mathcal{M}\varphi
		+\frac{1}{2}\varphi_y^2\rangle
		/\langle\mu,\varphi\rangle,
	\]
	with $\varphi(y)=1+y^2$. Obviously we can find the uniform $r$ and $\lambda$ for all
	$h\geq 0$ and also the uniform $R$. Then any $\phi$ of interest are in 
	$C([0,T],M_R^2(\mathbb{R}))$ and thus have the uniform bounded first and second 
	order moments.
\end{proof}

Now we derive that lower bound. The key idea is that because we have the universal upper 
bound for the first and second moments of all $\phi\in A$ and for all $h\leq h_{0}$, 
Chebyshev's inequality implies the uniform convergence. 

\begin{proposition}
	\label{prop:lower bound for I}
	For any $\epsilon>0$, then for all sufficiently small $h$, 
	\begin{equation}
		\label{eq:lower bound for I}
		\inf_{\phi\in A}I_h(\phi) \geq \frac{1}{2\sigma^2} \int_0^T 
		\langle p^u, ( \frac{d}{dt}a^u - h(y^3-y) )^2 \rangle dt - \epsilon.
	\end{equation}
\end{proposition}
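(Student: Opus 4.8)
The plan is to obtain the lower bound by discarding all of the supremum in the rate function except the single test function $f(y)=y$, and then to show that the resulting one-dimensional functional of $a(t):=\langle\phi(t),y\rangle$ is, for small $h$, uniformly close to its value along the Gaussian comparison path $p^u$. As arranged at the start of this subsection, by Proposition \ref{prop:upper bound for I} we may restrict attention to $\phi\in A$ with $I_h(\phi)\le C$ for all $h\le h_0$; if $I_h(\phi)=\infty$ there is nothing to prove, so $\phi$ may be taken absolutely continuous in $t$, and then $a$ is absolutely continuous with $a(0)=-\xi_b$, $a(T)=\xi_b$.

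First I would repeat the computation of Appendix \ref{pf:LD for zero h and Gaussian path}, now keeping the $h\mathcal{M}^*\phi$ term: since $\langle\phi,f_y^2\rangle=1$, $\langle\phi_t,y\rangle=\tfrac{d}{dt}a$, $\langle\mathcal{L}_\phi^*\phi,y\rangle=0$ and $\langle\mathcal{M}^*\phi,y\rangle=-\langle\phi,y^3-y\rangle$, one gets
\[
I_h(\phi)\ \ge\ \frac{1}{2\sigma^2}\int_0^T\Big(\tfrac{d}{dt}a+h\langle\phi(t),y^3-y\rangle\Big)^2\,dt .
\]
The main step is then to show the nuisance term $h\langle\phi(t),y^3-y\rangle$ is negligible, uniformly in $\phi$ and $t\in[0,T]$. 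For this I would upgrade Lemma \ref{lma:uniform bound for first and second moments} to the quartic Lyapunov function $\varphi(y)=1+y^2+\tfrac h2 y^4$ permitted by \cite{Dawson1987}: the same Theorem 5.1(c)/Theorem 5.3/Lemma 5.5 argument gives $\sup_t\langle\phi(t),\varphi\rangle\le R$ with $R$ bounded \emph{uniformly} in $h\le h_0$, the key point being that in the bound for $\lambda$ the sixth-order contributions of $h\mathcal{M}\varphi$ and $\tfrac12\varphi_y^2$ cancel — which is exactly why the weight $\gamma\le h/2$ is chosen. Hence $\sup_t\langle\phi(t),y^4\rangle\le 2R/h$, and by Hölder's inequality with respect to the probability measure $\phi(t)$,
\[
h\,|\langle\phi(t),y^3-y\rangle|\ \le\ h\langle\phi,y^4\rangle^{3/4}+h\langle\phi,y^2\rangle^{1/2}\ \le\ (2R)^{3/4}h^{1/4}+R^{1/2}h\ =:\ E_h\ \longrightarrow\ 0 .
\]

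Next I would combine this with the a priori estimate $\|\tfrac{d}{dt}a+h\langle\phi,y^3-y\rangle\|_{L^2[0,T]}\le\sqrt{2\sigma^2C}$ coming from $I_h(\phi)\le C$, which gives $\|\tfrac{d}{dt}a\|_{L^1}\le D$ for a constant $D$ independent of $h\le h_0$; then, using $|u+v|^2\ge u^2-2|u||v|$,
\[
I_h(\phi)\ \ge\ \frac{1}{2\sigma^2}\int_0^T\Big(\tfrac{d}{dt}a\Big)^2dt-\frac{E_hD}{\sigma^2}\ \ge\ \frac{(a(T)-a(0))^2}{2\sigma^2T}-\frac{E_hD}{\sigma^2}\ =\ \frac{2\xi_b^2}{\sigma^2T}-\frac{E_hD}{\sigma^2},
\]
the middle step being the elementary Cauchy–Schwarz bound $\int_0^T(\tfrac{d}{dt}a)^2dt\ge(a(T)-a(0))^2/T$ for absolutely continuous paths. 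Taking the infimum over $\phi\in A$, and noting (as already remarked after Proposition \ref{prop:upper bound for I}) that expanding the Gaussian integral $\tfrac1{2\sigma^2}\int_0^T\langle p^u,(\tfrac{d}{dt}a^u-h(y^3-y))^2\rangle dt$ with $p^u$ Gaussian of mean $a^u$ and $\tfrac{d}{dt}a^u=2\xi_b/T$ shows it equals $2\xi_b^2/(\sigma^2T)+O(h)$, one concludes: given $\epsilon>0$, choose $h$ small enough that $E_hD/\sigma^2<\epsilon/2$ and the $O(h)$ discrepancy is $<\epsilon/2$, which yields (\ref{eq:lower bound for I}).

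The step I expect to be the real obstacle is the uniform-in-$h$ moment estimate — verifying that the constants $r$ and $\lambda$ in the Dawson–Gärtner Lyapunov bound can be taken independent of $h$ for the weight $1+y^2+\tfrac h2 y^4$ (which hinges on the top-order cancellation), and that the resulting bound $\langle\phi,y^4\rangle\le 2R/h$, after multiplication by $h$, still kills the cubic term. The Hölder exponent $3/4$ just barely does this, producing the $h^{1/4}$ rate; everything else is routine bookkeeping of lower-order terms and the one-dimensional calculus of variations.
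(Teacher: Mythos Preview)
Your argument is correct, but it takes a genuinely different route from the paper's. The paper never touches the quartic Lyapunov function: it works only with the second-moment bound of Lemma~\ref{lma:uniform bound for first and second moments} (i.e., $\varphi=1+y^2$) and compensates by \emph{truncating} the test function. Concretely, it plugs a smooth cutoff $f^M$ of $y$ (supported in $[-2M-1,2M+1]$) into the sup, applies Cauchy--Schwarz in $t$ to pass from $\int_0^T(\cdot)^2$ to $\big(\int_0^T\cdot\big)^2/T$, and then observes that $h|\langle\mathcal{M}^*\phi,f^M\rangle|\le h\big((2M+1)^3+(2M+1)\big)$ is small for \emph{fixed} $M$ once $h$ is small; the remaining terms converge as $M\to\infty$ using only the second-moment bound and Chebyshev. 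So the paper trades your sharper moment estimate for a two-parameter limit (first $M$ large, then $h$ small).

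What each approach buys: the paper's proof is more elementary---it needs only the quadratic Lyapunov estimate and a compactly supported test function, with no appeal to the $y^6$ cancellation. Your approach is more direct (no auxiliary cutoff, no two-step limit) and, as you note, gives a quantitative rate $O(h^{1/4})$ that the paper's argument does not. Your identification of the $y^6$ cancellation as the crux of the uniform-in-$h$ bound for $\lambda$ with $\varphi=1+y^2+\tfrac{h}{2}y^4$ is exactly right, and once that is checked the rest follows as you wrote. One small point: since the sup in $I_h$ is over Schwartz test functions, strictly speaking the choice $f(y)=y$ should be obtained by approximation (e.g., by $f^M$), but with the fourth-moment control in hand this is routine---and the same informality already appears in the paper's $h=0$ argument.
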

\begin{proof}
	Define $f^M=\iota*\hat{f}^M$, where $\hat{f}^M$ is a piecewise linear function and 
	$\iota$ is the standard mollifier: 
	\[
		\hat{f}^M(y)
		=\begin{cases}
			y,		& y\in(-M,M)\\
			-y+2M, 	& y\in[M,2M]\\
			-y-2M, 	& y\in[-2M,-M]\\
			0, 		& \text{otherwise}
		\end{cases}, \quad 
		\iota(y)
		=\begin{cases}
			Z\exp(\frac{1}{y^2-1}), & y^2 < 1\\
			0, 						& \text{otherwise.}
		\end{cases}
	\]
	Then $f^M$ is a smooth function with the compact support $[-2M-1,2M+1]$. In 
	addition, $f^M(y)\equiv y$ on $(-M+1,M-1)$, $|f_x^M|\leq 1$, and $|f_{xx}^M|$ is 
	uniformly bounded for all $M$ and is nonzero only on $\cup_{i=-2}^2 (iM-1,iM+1)$.

	Because for all $\phi\in A$, $\langle \phi(t),(f_y^M)^2\rangle \leq 1$,
	we can estimate the rate function:
	\begin{align*}
		I_h(\phi) &\geq \frac{1}{2\sigma^2}\int_0^T
		\langle \phi_t - \mathcal{L}_\phi^*\phi 
		- h\mathcal{M}^*\phi, f^M \rangle^2 dt\\
		&\geq \frac{1}{2\sigma^2 T}\left(\int_0^T\langle \phi_t 
		- \mathcal{L}_\phi^*\phi - h\mathcal{M}^*\phi, f^M \rangle dt\right)^2.
	\end{align*}
	Then we estimate the integrand term by term. By Lemma 
	\ref{lma:uniform bound for first and second moments}, the following convergences are 
	all uniform in $\phi\in A$ and $h\leq h_0$.

	First we have 
	\[
		\int_0^T \langle \phi_t, f^M \rangle dt
		= \langle u_{\xi_b}^e, f^M \rangle - \langle u_{-\xi_b}^e, f^M \rangle.
	\]
	$u_{\pm\xi_b}^e$ are exponentially decaying functions so 
	$\langle u_{\pm\xi_b}^e, f^M \rangle $ converges to $\pm\xi_{b}$ rapidly as 
	$M\rightarrow\infty$.

	We note that $\langle \mathcal{L}_\phi^* \phi, f^M\rangle 
	=\sigma^2\langle \phi,f_{yy}^M \rangle/2 - \theta\langle \phi,(y-a)f^M_y \rangle$. 
	By reading the properties of $f_{yy}^M$ and Chebyshev's inequality, we have 
	$\langle \phi,f_{yy}^M\rangle \rightarrow 0$ as $M\rightarrow\infty$. 
	We write $\langle \phi,(y-a)f^M_y\rangle$ as 
	\[
		\langle \phi,(y-a)f^M_y\rangle 
		= a(1-\langle \phi,f^M_y\rangle) + (\langle \phi, y f^M_y\rangle -a).
	\]
	Since $a$ is bounded and $\langle \phi,f^M_y \rangle \rightarrow 1$	as 
	$M\rightarrow\infty$, $a(1-\langle \phi,f^M_y \rangle)\rightarrow 0$ as 
	$M\rightarrow\infty$. We see that 
	\begin{align*}
		|\langle \phi, y f^M_y \rangle -a|^2
		& \leq \left( 2 \int_{(-M+1,M+1)^c} |y| \phi(dy) \right)^2\\
		& \leq 4 \int_{(-M+1,M+1)^c} y^2 \phi(dy) \int_{(-M+1,M+1)^c} \phi(dy).
	\end{align*}
	Again by Chebyshev's inequality, the right hand side vanishes as 
	$M\rightarrow\infty$.

	Finally we estimate $\langle \mathcal{M}^*\phi, f^M \rangle $. Since $f^M$ is 
	compactly supported,
	\[
		|\langle \mathcal{M}^* \phi, f^M \rangle |
		= |\langle \phi,(y^3-y)f^M_y \rangle | \leq (2M+1)^3 + (2M+1).
	\]
	For a fixed $M$, we can choose a sufficiently small $h$ such that 
	$h|\langle \mathcal{M}^* \phi, f^M\rangle |$ is small.

	Consequently, for any $\epsilon>0$, we can first choose a sufficiently large $M$ and 
	then there exists a sufficiently small $h$ such that
	\[
		\inf_{\phi\in A} I_h(\phi) \geq \frac{2\xi_b^2}{\sigma^2 T} - \epsilon.
	\]
\end{proof}

\subsection{Proof of Lemma \ref{lma:lower bound of I_h over B_delta}}
\label{pf:lower bound of I_h over B_delta}
It suffices to show the case that $\delta=1/n$. For each $n$, let $\phi_n\in B_{1/n}$ and $t_n\in(0,T]$ 
such that $\rho(\phi_n(t_n),u^e_{\xi_b})<\delta$ and 
$\inf_{\phi\in B_{1/n}}I_h(\phi)\leq I_h(\phi_n)<\inf_{\phi\in B_{1/n}}I_h(\phi)+1/n$; $\{I_h(\phi_n)\}$ are 
bounded from above by $\inf_{\phi\in B}I_h(\phi)+1<\infty$. Let $\{t_{n_k}\}$ be a convergent subsequence of 
$\{t_n\}$. Because $I_h$ is a good rate function, and by Proposition B.13 of \cite{Gartner1988}, compactness 
is equivalent to sequentially compactness in $C([0,T],M_\infty(\mathbb{R}))$, $\{\phi_{n_k}\}$ has a 
convergent subsequence $\{\phi_{n_{k'}}\}$ whose limit $\phi^*$ is in $A(t^*)$ where $t^*=\lim t_{n_k}$. As 
$I_h$ is lower semicontinuous, then
\[
	\lim_n\inf_{\phi\in B_{1/n}}I_h(\phi) = \liminf_{n_{k'}} I_h(\phi_{n_{k'}}) 
	\geq I_h(\phi^*) \geq \inf_{\phi\in A(t^*)}I_h(\phi) \geq \inf_{\phi\in B}I_h(\phi).
\]

\section{Proofs in Section \ref{sec:reduced LDP}}

\subsection{Proof of Lemma \ref{lma:the optimal g^0 and g^1}}
\label{pf:the optimal g^0 and g^1}

We note that $p_t = -p_y \frac{d}{dt}a$ and therefore 
\begin{align*}
	\phi_t &= -p_y \frac{d}{dt}a 
	+ h\sum_{n=2}^{\infty}\frac{d}{dt}b_n\frac{\partial^n}{\partial y^n}p
	- h\frac{d}{dt}a\sum_{n=2}^{\infty}b_n\frac{\partial^{n+1}}{\partial y^{n+1}}p\\
	&\quad + h^2\sum_{n=2}^{\infty}\frac{d}{dt}c_n\frac{\partial^n}{\partial y^n}p
	- h^2\frac{d}{dt}a\sum_{n=2}^{\infty}c_n\frac{\partial^{n+1}}{\partial y^{n+1}}p.
\end{align*}
After collecting $O(1)$ terms in 
(\ref{eq:Fokker-Planck equation with the driving force}) and integrating over $y$, 
we have 
\[
	-p \frac{d}{dt}a = \frac{1}{2}\sigma^2 p_y + \theta(y-a)p + p g^{(0)} = p g^{(0)}.
\]
Then $g^{(0)} = -\frac{d}{dt}a$.

Now we collect $O(h)$ terms in (\ref{eq:Fokker-Planck equation with the driving force}) 
and integrating over $y$. We get 
\begin{multline*}
	\sum_{n=1}^{\infty}\frac{d}{dt}b_{n+1}\frac{\partial^n}{\partial y^n}p
	- \frac{d}{dt}a\sum_{n=2}^{\infty}b_n\frac{\partial^n}{\partial y^n}p
	= \frac{1}{2}\sigma^2\sum_{n=2}^{\infty}b_n\frac{\partial^{n+1}}{\partial y^{n+1}}p\\
	+ \theta(y-a)\sum_{n=2}^{\infty}b_n\frac{\partial^n}{\partial y^n}p
	+ \sum_{n=0}^3 \delta_n\frac{\partial^n}{\partial y^n}p
	+ g^{(0)} \sum_{n=2}^{\infty}b_n\frac{\partial^n}{\partial y^n}p
	+ \sum_{n=0}^{\infty}\beta_n \frac{\partial^n}{\partial y^n}p.
\end{multline*}
Using the fact that 
\[
	\frac{1}{2}\sigma^2\frac{\partial^{n+1}}{\partial y^{n+1}}p
	= -\theta(y-a)\frac{\partial^n}{\partial y^n}p
	-n\theta \frac{\partial^{n-1}}{\partial y^{n-1}}p,
\]
we have 
\[
	\sum_{n=1}^{\infty}\frac{d}{dt}b_{n+1}\frac{\partial^n}{\partial y^n}p
	= -\theta\sum_{n=1}^{\infty}(n+1)b_{n+1}\frac{\partial^n}{\partial y^n}p\\
	+ \sum_{n=0}^3 \delta_n\frac{\partial^n}{\partial y^n}p
	+ \sum_{n=0}^{\infty}\beta_n \frac{\partial^n}{\partial y^n}p,
\]
and the optimal $\beta_n$ are obtained by comparing the coefficients.

\subsection{Proof of Lemma \ref{lma:the optimal g^2}}
\label{pf:the optimal g^2}

Let $\psi^{(2)}$ denote the anti-derivative of $q^{(2)}$ that vanishes at $-\infty$. After collecting 
$O(h^2)$ terms in (\ref{eq:Fokker-Planck equation with the driving force}) and integrating over $y$. We have 
\begin{equation}
	\label{eq:eqn for the optimal f^2}
	\psi^{(2)}_t = \frac{1}{2}\sigma^2 q^{(2)}_y + \theta(y-a)q^{(2)} + U(y)q^{(1)} + q^{(2)} g^{(0)}
	+ q^{(1)} g^{(1)} + p g^{(2)}.
\end{equation}
Note that $p g^{(2)} = \sum_{n=0}^\infty \gamma_n \frac{\partial^n}{\partial y^n}p$, 
so $\gamma_0$ is obtained by integrating (\ref{eq:eqn for the optimal f^2}) from 
$y=-\infty$ to $y=\infty$. Then we have 
$\gamma_0 = -\langle q^{(1)}, U(y)+g^{(1)} \rangle$.

\subsection{Proof of Proposition \ref{prop:inf_A I_h up to O(h)}}
\label{pf:inf_A I_h up to O(h)}

We write $a(t) = a_0(t) + h a_1(t) + O(h^2)$ with $a_0(t) = 2\xi_0 t/T - \xi_0$ 
and $a_1(t) = 2\xi_1 t/T - \xi_1$. Then we put $a(t)$ into 
(\ref{eq:variational problem of a}) and we have
\[
	\inf_{\phi\in A}I_h(\phi) 
	= \frac{1}{2\sigma^2} \int_0^T \left\{ (\frac{d}{dt}a_0)^2 
	+ 2h(\frac{d}{dt}a_0)(a_0^3+(3\frac{\sigma^2}{2\theta}-1)a_0 
	+ \frac{d}{dt}a_1)\right\} dt + O(h^2).
\]
We note that $\frac{d}{dt}a_0$ is a constant, and $a_0(t)$ and $a_0^3(t)$ 
are odd functions with respect to $t=T/2$. Then 
\begin{multline*}
	\inf_{\phi\in A}I_h(\phi) 
	= \frac{1}{2\sigma^2} \int_0^T \left\{ \left(\frac{d}{dt}a_0\right)^2 
	+ 2h\frac{d}{dt}a_0\frac{d}{dt}a_1 \right\} dt + O(h^2)\\
	= \frac{1}{2\sigma^2} \int_0^T \left\{ \left(\frac{2\xi_0}{T}\right)^2 
	+ 2h\frac{2\xi_0}{T}\frac{2\xi_1}{T} \right\} dt + O(h^2)
	= \frac{2\xi_0}{\sigma^2 T}(\xi_0 + 2h\xi_1) + O(h^2).
\end{multline*}
\section{Proofs in Section \ref{sec:diversity2}}

\subsection{Proof of Proposition \ref{prop:xbar_diversity}}
\label{pf:xbar_diversity}

The system of SDEs (\ref{eq:systembarxk}) for the vector $\bar{X}(t) =(\bar{x}_k(t))_{k=1,\ldots,K}$ has the 
form
\[
	d\bar{X} (t) =M\bar{X}(t) + \frac{\sigma}{\sqrt{N}} R^{-1/2} d\bar{W}(t)
\]
where $\bar{W}(t)= (\bar{w}_k(t))_{k=1,\ldots,K}$ is a column vector. This system can be solved:
\[
	\bar{X}(t) = e^{Mt} \bar{X}(0) +\frac{\sigma}{\sqrt{N}} \int_0^t e^{M(t-s)} R^{-1/2} d \bar{W}(s)
\]
If $\bar{x}_k(0)= -\xi_b$, then, using the fact that the uniform vector is in the null space of $M$, we 
have $e^{Mt}\bar{X}(0)=\bar{X}(0)$. As a corollary we get the explicit representation of the empirical mean:
\[
	\bar{x}(t) = -\xi_b+\frac{\sigma}{\sqrt{N}} \int_0^t \varrho^\mathbf{T} e^{M(t-s)} R^{-1/2} d\bar{W}(s)
\] 
This shows the desired result.

\subsection{Proof of Proposition \ref{prop:the diversity on the transition probability}}
\label{pf:the diversity on the transition probability}

The expansion of $\xi_b^2$ follows from the explicit expression 
(\ref{eq:explicit solution for xi equal to m(xi), diversity case}). The expansion of $\sigma_T^2$ follows 
from the expansion of (\ref{eq:sigmaT2}) and uses the properties of the matrix $M$. We have 
$M=-\bar{\theta}\bar{M}-\delta\bar{\theta}N$, with
\begin{align*}
	\bar{M} &= I-  u\varrho^\mathbf{T},\quad 
	\text{where $u=(1,\ldots,1)$ is the $K$-dimensional column vector,}\\
	N_{ij} &=\alpha_i (  \delta_{ij} - \rho_j ) ,\quad i,j=1,\ldots,K.
\end{align*}
The matrix $\bar{M}$ satisfies $\bar{M}^n =\bar{M}$ for all $n\geq 1$ and therefore 
\[
	e^{-\bar{\theta} \bar{M} t} 
	= \sum_{n=0}^\infty \frac{(-\bar{\theta} t)^n}{n!} \bar{M}^n 
	= I + \sum_{n=1}^\infty \frac{(-\bar{\theta} t)^n}{n!} \bar{M} 
	=  I + ( e^{-\bar{\theta} t }-1)\bar{M}.
\]
We have
\[
	e^{Mt} =\sum_{n=0}^\infty \frac{(-\bar{\theta} t)^n}{n!} ( \bar{M}+\delta N)^n
\]
Using the fact that $\bar{M}^\mathbf{T}\varrho=0$ (and again that $\bar{M}^n=\bar{M}$ for $n \geq 1$), we 
can expand
\begin{align*}
	\varrho^\mathbf{T}e^{M t} 
	&=\varrho^\mathbf{T} 
	+\delta\varrho^\mathbf{T}\big\{(-\bar{\theta}t)N+(e^{-\bar{\theta}t}-1+\bar{\theta}t)N\bar{M}\big\}\\
	&\quad + \delta^2 \varrho^\mathbf{T}  \Big\{ \frac{(\bar{\theta}t)^2}{2} N^2 
	+ \big[ e^{- \bar{\theta}t}-1+\bar{\theta}t -\frac{(\bar{\theta}t)^2}{2}  \big]
	\big[ N^2 \bar{M} -3 (N \bar{M} )^2 + N  \bar{M} N \big] \\
	&\quad -\bar{\theta} t \big[ e^{- \bar{\theta}t}-1+\bar{\theta}t \big] ( N \bar{M})^2 \varrho \Big\} 
	+ O(\delta^3).
\end{align*}
Using the fact that $\bar{M}^\mathbf{T} N^\mathbf{T} \varrho = N^\mathbf{T} \varrho$ and  
$\bar{M}^\mathbf{T} (N^\mathbf{T})^2 \varrho = (N^\mathbf{T})^2 \varrho$, this can be simplified into
\[
	\varrho^\mathbf{T} e^{M t} 
	=  \varrho^\mathbf{T} + \delta  \varrho^\mathbf{T}  (e^{-\bar{\theta} t} -1)  N    
	+ \delta^2 \varrho^\mathbf{T} \big[ (\bar{\theta} t)^2 
	- (1+\bar{\theta}t)(e^{-\bar{\theta} t} -1+\bar{\theta}t) \big]   N^2 +O(\delta^3).
\]
Consequently
\begin{align*}
	&\varrho^\mathbf{T} e^{Mt} R^{-1} (e^{Mt})^\mathbf{T} \varrho 
	= \varrho^\mathbf{T}  (I+(e^{-\bar{\theta}t}-1) \bar{M} )  R^{-1}  
	(I+(e^{-\bar{\theta}t}-1) \bar{M}^\mathbf{T} ) \varrho \\
	&\quad +2\delta\varrho^\mathbf{T}(e^{-\bar{\theta}t}-1)NR^{-1}
	(I+(e^{-\bar{\theta}t}-1) \bar{M}^\mathbf{T})\varrho\\
	&\quad  +2 \delta^2  \varrho^\mathbf{T} \big[ (\bar{\theta} t)^2 -(1+\bar{\theta}t)(e^{-\bar{\theta} t} 
	-1+\bar{\theta}t) \big]   N^2 R^{-1}  (I+(e^{-\bar{\theta}t}-1) \bar{M}^\mathbf{T} )  \varrho  \\
	&\quad 
	+\delta^2 \varrho^\mathbf{T}  ( e^{-\bar{\theta} t} -1 )   N R^{-1}  (e^{-\bar{\theta} t} -1 )   N^\mathbf{T} \varrho +O(\delta^3).
\end{align*}
Using the fact that $\bar{M}^\mathbf{T} \varrho=0$ and $NR^{-1} \varrho = Nu=0$, we obtain
\[
	\varrho^\mathbf{T} e^{Mt} R^{-1} (e^{Mt})^\mathbf{T} \varrho = 
	\varrho^\mathbf{T}   R^{-1}  \varrho +\delta^2 (1-e^{-\bar{\theta} t} )^2 \varrho^\mathbf{T}    
	N R^{-1} N^\mathbf{T} \varrho  +O(\delta^3)
\]
We have $\varrho^\mathbf{T}   R^{-1}  \varrho=1$ and 
$\varrho^\mathbf{T} N R^{-1}N^\mathbf{T} \varrho = \sum_k \rho_k \alpha_k^2 $ which gives the expansion of 
the variance $\sigma_T^2$.

Finally the expansion of the transition probability can be obtained by substituting the expansions of 
$\xi_b^2$ and $\sigma_T^2$ into (\ref{eq:pTdiverse}).

\bibliographystyle{siam}
\bibliography{reference}

\end{document}